\documentclass[11pt]{article} 
\usepackage[top=20mm,bottom=20mm, left=20mm, right=20mm]{geometry}
\usepackage{graphicx,amsmath,amssymb,url,enumerate,mathrsfs,subfig,epsfig,color,transparent,cite}
\usepackage{hyperref, makecell}
\usepackage{appendix, stmaryrd, theorem}
 \hypersetup{
     colorlinks=true,       
     linkcolor=red,          
     citecolor=blue,        
     filecolor=magenta,      
     urlcolor=cyan           
 }


\theorembodyfont{\rm}

\newtheorem{proposition}{Proposition}[section]
\newtheorem{theorem}{Theorem}[section]
\newtheorem{cor}{Corollary}[section]

\newtheorem{lemma}{Lemma}[section]
\newtheorem{rem}{\bf Remark}[section]

\newtheorem{ids}{\bf Identities}[section]

\def\div{\operatorname{div}}
\def\Div{\operatorname{Div}}

\def\curl{\operatorname{curl}}
\def\Curl{\operatorname{Curl}}
\def\sym{\operatorname{sym}}
\def\skw{\operatorname{skw}}
\def\tr{\operatorname{tr}}
\def\Lin{\operatorname{Lin}}
\def\Sym{\operatorname{Sym}}

\def\Skw{\operatorname{Skw}}\newenvironment{proof}[1][Proof]{\begin{trivlist}
\item[\hskip \labelsep {\bfseries #1}]}{\end{trivlist}}

\newcommand{\qed}{\nobreak \ifvmode \relax \else
      \ifdim\lastskip<1.5em \hskip-\lastskip
      \hskip1.5em plus0em minus0.5em \fi \nobreak
      \vrule height0.75em width0.5em depth0.25em\fi}

\title{Topological defects and metric anomalies as sources of incompatibility for piecewise smooth strain fields}
\author{Animesh Pandey and Anurag Gupta\thanks{ag@iitk.ac.in}
\\Department of Mechanical Engineering,\\ Indian Institute of Technology Kanpur, 208016, India.
}
\date{\today}
\begin{document}
\maketitle
\begin{abstract}
The incompatibility of linearized piecewise smooth strain field, arising out of volumetric and surface densities of topological defects and metric anomalies, is investigated. First, general forms of compatibility equations are derived for a piecewise smooth strain field, defined over a simply connected domain, with either a perfectly bonded or an imperfectly bonded interface. Several special cases are considered and discussed in the context of existing results in the literature. Next, defects, representing dislocations and disclinations, and metric anomalies, representing extra matter, interstitials, thermal, and growth strains, etc., are introduced in a unified framework which allows for incorporation of their bulk and surface densities, as well as for surface densities of defect dipoles.
Finally, strain incompatibility relations are derived both on the singular interface, and away from it, with sources in terms of defect and metric anomaly densities. With appropriate choice of constitutive equations, the incompatibility relations can be used to determine the state of internal stress within a body in response to the given prescription of defects and metric anomalies. 
\end{abstract}

{\small

\noindent {\bf Keywords}: piecewise smooth strain; strain concentration; strain compatibility; strain incompatibility; topological defects; metric anomalies.

\noindent {\bf Mathematics Subject Classification (2010)}: 74E05; 74K15; 74K20; 74K25; 53Z05.}

\section{Introduction}

A central problem of micromechanics of defects in solids, in the context of linear elasticity, is to determine the internal stress field  for a given inhomogeneity field \cite{kroner68, kroner81, de1981view, ayan2}. The latter can be considered in terms of a density of topological defects, such as dislocations and disclinations, or metric anomalies, such as those engendered in problems of thermoelasticity, biological growth, interstitials, extra matter, etc. \cite{kroner81, ayan1}. The inhomogeneity field appears as a source in strain incompatibility relations, which when written in terms of stress, and combined with equilibrium equations and boundary conditions, yields the complete boundary value problem for the determination of internal stress field \cite{kroner81}. This classical problem of linear elasticity has been formulated, and solved, in the literature assuming the strain (and therefore stress) to be a smooth tensor field over the body. The defects densities have been also assumed, in general, to be smooth fields. The concern of the present paper is to generalize the problems of both strain compatibility and incompatibility with the consideration of \textit{piecewise smooth} strain and inhomogeneity fields. The bulk fields are therefore allowed to be discontinuous across a surface within the body. The developed framework, in addition, allows us to consider surface concentration of strain and inhomogeneity fields; it is also amenable to situations when these fields are concentrated on a curve within the body. 

In the strain compatibility problem, we seek necessary and sufficient conditions on a piecewise smooth symmetric tensor field (strain), defined over a simply connected domain, for there to a exist a piecewise smooth, but continuous (perfectly bonded interface), vector field (displacement) whose symmetric gradient is equal to the tensor field. The conditions consist of the well known compatibility condition on the strain field, away from the singular interface, and the jump conditions on strain and its gradients across the interface. The conditions are also sought for the case when the displacement field is no longer required to be continuous (imperfectly bonded interface). This, however, necessarily requires us to consider a concentration of surface strain field on the interface. The general forms of compatibility conditions, obtained in both the cases, are novel to the best of our knowledge. They are reduced to several specific situations discussed previously in the literature. We recover the  interfacial jump conditions obtained by Markenscoff \cite{markenscoff1996note} and Wheeler and Luo \cite{wheeler1999conditions}. Whereas the former work was restricted to plane strain, the latter was concerned only with perfectly bonded interfaces and expressing the jump conditions in terms of strain components with respect to a specific curvilinear basis. We also use our framework to obtain the compatibility conditions on smooth strain fields over a domain, on a part of whose boundary displacements are specified, as discussed recently by Ciarlet and Mardare \cite{ciarlet2014intrinsic}.

A strain field is termed incompatible if it does not satisfy the compatibility conditions. There can then no longer exist a displacement field whose symmetric gradient will be equal to the strain field, and hence the strain can not correspond to a physical deformation. The loss of compatibility is attributed to inhomogeneity fields in terms of defects and metric anomalies \cite{kroner81, de1981view}. In our work we consider piecewise smooth bulk densities, and smooth surface densities (or surface concentrations), of dislocations and disclinations. We also allow for smooth surface densities of defect dipoles. In addition we consider piecewise smooth bulk density, and smooth surface density, of metric anomalies. Beginning with writing these densities in terms of kinematical quantities, such as strain and bend-twist field, we first obtain the conservations laws they should necessarily satisfy. Our formulation is then led towards relating incompatibility of the strain field with densities of defects and metric anomalies. The strain incompatibility relations thus derived, with weaker regularity in the strain and inhomogeneity fields, as compared to the existing literature, are the central results of this paper. The incompatibility itself is described in terms of a piecewise smooth bulk field and smooth surface concentrations.

A brief outline of the paper is as follows. In Section \ref{prem}, the required mathematical infrastructure is developed. Several elements of the theory of distributions, which forms the backbone of our work, are discussed. The results, already available in the literature, are given without proof but otherwise self-contained proofs are provided within the section and in the appendix.
The strain compatibility problem, first for a perfectly bonded and then for an imperfectly bonded interface, is addressed in Section \ref{cdsf}. Several remarks are provided in order to connect our results with the existing literature as well as to gain further insights. In Section \ref{inhomoincomp}, the central problem of strain incompatibility arising in response to the given inhomogeneity fields is formulated. Various aspects of the theory are simplified and discussed in the context of defect conservation laws, dislocation loops, plane strain simplification, and nilpotent defect densities. The paper concludes in Section \ref{conc}.

\section{Mathematical Preliminaries}
\label{prem}
\subsection{Notation}
Let $\Omega \subset \mathbb{R}^3$ be a bounded, connected, open set, with a smooth boundary $\partial \Omega$. For two sets $A$ and $B$, $A-B$ denotes the difference between the sets, whereas $\emptyset$ represents the empty set. The Greek indices range over $\{ 1,2\}$ and the Latin indices range over $\{ 1,2,3\}$. Let $\{ \boldsymbol{e}_1, \boldsymbol{e}_2, \boldsymbol{e}_3\}$ be a fixed orthonormal right-handed basis in $\mathbb{R}^3$. For $\boldsymbol{u},\boldsymbol{v} \in \mathbb{R}^3$, the inner product is given by $\langle \boldsymbol{u},\boldsymbol{v} \rangle = u_i v_i$, where $u_i=\langle \boldsymbol{u},\boldsymbol{e}_i\rangle$, etc; here, and elsewhere, summation is implied over repeated indices, unless stated otherwise. The cross product $\boldsymbol{u} \times \boldsymbol{v} \in \mathbb{R}^3$ is such that $\left( \boldsymbol{u} \times \boldsymbol{v} \right)_i=\epsilon_{ilk} u_l v_k$, where $\epsilon_{ilk}$ is the alternating symbol.  We use $\Lin$ to represent the space of  second order tensors (or, in other words, the linear transformations from  $\mathbb{R}^3$ to itself) and $\Sym$, $\Skw$ the space of symmetric and skew symmetric second order tensors, respectively. The identity tensor in $\Lin$ is denoted by $\boldsymbol{I}$. The dyadic product $\boldsymbol{u}\otimes \boldsymbol{v} \in \Lin$ is defined such that $(\boldsymbol{u}\otimes \boldsymbol{v})\boldsymbol{w}=\langle\boldsymbol{v},\boldsymbol{w} \rangle \boldsymbol{u}$, where $\boldsymbol{w}\in \mathbb{R}^3$. For $\boldsymbol{a} \in \Lin $, $\boldsymbol{a}^T$, $\sym(\boldsymbol{a})$, and $\skw(\boldsymbol{a})$ represent the transpose, the symmetric part, and the skew part of $\boldsymbol{a}$, respectively. The axial vector of $\boldsymbol{b} \in \Skw$ is $ax(\boldsymbol{b}) \in  \mathbb{R}^3$ such that, for any $\boldsymbol{v}\in \mathbb{R}^3$, $\boldsymbol{b}\boldsymbol{v}=ax(\boldsymbol{b})\times \boldsymbol{v}$. For $\boldsymbol{a},\boldsymbol{c} \in \Lin $, the inner product is given by $\langle \boldsymbol{a},\boldsymbol{c} \rangle = a_{ij} c_{ij}$ with $a_{ij}=\langle \boldsymbol{a},\boldsymbol{e}_i \otimes \boldsymbol{e}_j \rangle$, etc. The trace of $\boldsymbol{a} \in \Lin$ is defined as $\tr(\boldsymbol{a})=\langle\boldsymbol{a},\boldsymbol{I} \rangle$. For $\boldsymbol{a} \in \Lin$ and $\boldsymbol{v} \in \mathbb{R}^3$, we define $\boldsymbol{a} \times \boldsymbol{v} \in \Lin$ such that
 $(\boldsymbol{a} \times \boldsymbol{v})_{ji} = -\epsilon_{ilk} a _{jk} v_l$. For $\boldsymbol{a} \in \Lin$ and $\boldsymbol{b} \in \Lin$, we define $\boldsymbol{a}\times \boldsymbol{b}$, a linear map from $\mathbb{R}^{3}$ to $\Lin$,  such that, for any $\boldsymbol{v}\in \mathbb{R}^3$, $(\boldsymbol{a}\times \boldsymbol{b})\boldsymbol{v}=(\boldsymbol{a}\times (\boldsymbol{b}^T \boldsymbol{v}))^T$. 
 
Let $S \subset \Omega$ be a regular oriented surface with unit normal $\boldsymbol{n}$ and boundary $\partial S$. If $\partial S - \partial \Omega = \emptyset$, then $S$ is either a closed surface or its boundary is completely contained within the boundary of $\Omega$. In either case, $S$ will divide $\Omega$ into mutually exclusive open sets $\Omega^+$ and $\Omega^-$ such that $\partial \Omega^+ \cap \partial \Omega^- = S$ and $\Omega^+ \cup S \cup \Omega^-=\Omega$. The set $\Omega^-$ is the one into which $\boldsymbol{n}$ points.
 
We use $C^0 (\Omega)$, $C^{\infty}(\Omega)$ and $C^{r}(\Omega)$ ($r$ is a positive integer), to represent spaces of continuous, smooth, and $r$-times differentiable functions on $\Omega$, respectively. The spaces of vector valued and tensor valued smooth functions on $\Omega$ are represented by $C^{\infty}(\Omega, \mathbb{R}^3)$ and $C^{\infty}(\Omega, \Lin)$, respectively. Similar notations are used for functions defined over surface $S$. For a function $f$ on $\Omega$ and a subset $\omega \subset \Omega$, $f|_\omega$ is the restriction of $f$ to the subset $\omega$.

\subsection{Distributions}

Let $\mathcal{D}(\Omega)$ be the space of compactly supported smooth functions on $\Omega$. The dual space of $\mathcal{D}(\Omega)$ is the space of distributions, $\mathcal{D}'(\Omega)$. Any distribution $T\in \mathcal{D}'(\Omega)$ defines a linear functional  $T : \mathcal{D}(\Omega)\to \mathbb{R}$ which is continuous for an appropriately defined topology on  $\mathcal{D}(\Omega)$ \cite[Chapter~1]{kesavantopics}.\footnote{A sequence of smooth functions $\phi_m \in \mathcal{D}(\Omega)$ converges to 0 if $\phi_m$, for all $m$, are supported in a fixed compact support and $\phi_m$ and its derivatives to every order converge uniformly to 0.  A functional $T$ is continuous if, for any sequence of smooth functions $\phi_m \in \mathcal{D}(\Omega)$ converging to 0, $T(\phi_m)$ converges to 0.}
For the purpose of this article, we will be interested in certain types of distributions contained in $\mathcal{D}'(\Omega)$. 
For $\phi \in \mathcal{D}(\Omega)$, we say that a distribution $B \in \mathcal{B}(\Omega) \subset \mathcal{D}' (\Omega)$ if it is of the form
\begin{equation}
B(\phi)=\int_{\Omega} b \phi dv, \label{distB}
\end{equation}
where $b$ is a piecewise smooth function, possibly discontinuous across $S$ with $\partial S - \partial \Omega = \emptyset$, and $dv$ is the volume measure on $\Omega$. The discontinuity in $b$ is assumed to be a smooth function on $S$. For $x \in S$, $\llbracket b \rrbracket \left(x\right) = b^+ \left(x\right) - b^- \left(x\right)$, where $b^{\pm}\left(x\right)$ are limiting values of $b$ at $x$ on $S$ from $\Omega^{\pm}$, represents the discontinuity in $b$.
We say that a distribution $C \in \mathcal{C}\left(\Omega\right) \subset \mathcal{D}' (\Omega)$ if it is of the form,
\begin{equation}
C\left(\phi\right)=\int_{S} c \phi da, \label{distC}
\end{equation}
where $c$, the surface density of $C$, is assumed to be a smooth function on $S$ and $da$ is the area measure on the surface. 
We say that a distribution $F \in \mathcal{F}\left(\Omega\right) \subset \mathcal{D}' (\Omega)$ if it is of the form
\begin{equation}
F\left(\phi\right)=\int_{S} f \frac{\partial \phi}{\partial {n}} da, \label{distF}
\end{equation}
where $f$ is assumed to be a smooth function on $S$ and ${\partial}/{\partial {n}}$ represents the partial derivative along $\boldsymbol{n}$, i.e., ${\partial \phi}/{\partial n}= \langle \nabla \phi,\boldsymbol{n}\rangle$ (here $\nabla \phi$ denotes the gradient of $\phi$).
We say that a distribution $H \in \mathcal{H}\left(\Omega\right) \subset \mathcal{D}' (\Omega)$ if it is of the form
\begin{equation}
H\left(\phi\right)=\int_{L} h \phi dl, \label{distH}
\end{equation}
where $h$ is assumed to be a smooth function on a smooth oriented curve $L\subset \Omega$ and $dl$ is the length measure on $L$. 
That the above defined functionals are indeed distributions can be verified by first noting that all of them are linear functionals on $\mathcal{D}(\Omega)$. We now establish their continuity on $\mathcal{D}(\Omega)$. From $\phi_m \in \mathcal{D}(\Omega)$ converging to $0$ it is implied that for $\epsilon >0$ there exist positive integers $m_0$, $m_1$ such that $|\phi_m (x)| < \epsilon $ for $m>m_0$ and $|{\partial \phi_m (x)}/{\partial n}| < \epsilon $ for  $m>m_1$. For $B(\phi)=\int b\phi dv$, $|B(\phi_m)| \leq \sup(|b|) V \epsilon$, where $V$ is the volume of $\Omega$. Hence, $B(\phi_m)$ converges to 0. Similar arguments hold for $C\left(\phi\right)$, $F\left(\phi\right)$, and $H\left(\phi\right)$. 

We use $\mathcal{D}(\Omega,\mathbb{R}^3)$ to denote be the space of compactly supported vector valued smooth functions on $\Omega$. The corresponding dual space is the space of vector valued distributions, $\mathcal{D}'(\Omega,\mathbb{R}^3)$. For $\boldsymbol{T} \in \mathcal{D}'(\Omega,\mathbb{R}^3)$, with each component $T_i \in \mathcal{D}'(\Omega)$, and $\boldsymbol{\phi} \in \mathcal{D}(\Omega,\mathbb{R}^3)$, we define $\boldsymbol{T}(\boldsymbol{\phi})= T_{i} (\phi_{i})$ (summation is implied over repeated indices).  Analogously, the space of compactly supported tensor valued function on $\Omega$ and its dual are represented by $\mathcal{D}(\Omega,\Lin)$ and $\mathcal{D}'(\Omega,\Lin)$, respectively.  For $\boldsymbol{T} \in \mathcal{D}'(\Omega,\Lin)$, with each component $T_{ij} \in \mathcal{D}'(\Omega)$, and $\boldsymbol{\phi} \in \mathcal{D}(\Omega,\Lin)$, we define $\boldsymbol{T}(\boldsymbol{\phi})=T_{ij} (\phi_{ij})$.

\subsection{Derivatives of Distributions}

 The \textit{partial derivative} of a distribution $T\in \mathcal{D}'(\Omega)$ is a distribution ${\partial_i T} \in \mathcal{D}'(\Omega)$ defined as
 \begin{equation}
{\partial_i T} \left(\phi\right)= - T\left(\frac{\partial \phi}{\partial x_i}\right)
 \end{equation}
for all $\phi \in \mathcal{D}(\Omega)$ with $\boldsymbol{x} \in \Omega$.\footnote{Any locally integrable function $f$ can be associated with a distribution $T_f \in \mathcal{D}'(\Omega)$ such that, for all $\phi \in \mathcal{D}(\Omega)$, 
\begin{equation}
T_f (\phi)= \int_{\Omega} f \phi dv.
\end{equation}
 For a differentiable function $f \in C^1 (\Omega)$, 
\begin{equation}
{\partial_i T_f} \left(\phi\right)=-\int_{\Omega} f \frac{\partial \phi}{\partial x_i} dv = \int_{\Omega}  \frac{\partial f}{\partial x_i} \phi dv.
\end{equation} 
Hence, ${\partial_i T_f} = T_{\frac{\partial f}{\partial x_i}}$. The definition of partial derivative for distributions therefore generalises the notion of partial derivative for differentiable functions.}
The higher order derivatives can be consequently defined. For instance, the second order partial derivative of $T$ is a distribution ${\partial^2_{ij} T} \in \mathcal{D}'(\Omega)$ given by 
\begin{equation}
{\partial^2_{ij} T } \left(\phi\right)= T \left(\frac{\partial^2 \phi }{\partial x_i \partial x_j}\right), \label{diff_sym}
\end{equation}
which implies ${\partial^2_{ji} T}={\partial^2_{ij} T}$. The \textit{gradient} of a scalar distribution $T \in \mathcal{D}'(\Omega)$ is a vector valued distribution $\nabla T \in \mathcal{D}'(\Omega, \mathbb{R}^3)$ such that $(\nabla T)_i = {\partial_i T}$. The gradient of a vector valued distribution $\boldsymbol{T}\in \mathcal{D}'(\Omega,\mathbb{R}^3)$ is a tensor valued distribution $\nabla \boldsymbol{T} \in \mathcal{D}'(\Omega,\Lin)$ such that $(\nabla \boldsymbol{T})_{ij}={\partial_j T_i}$.  
The \textit{divergence} of a vector valued distribution $\boldsymbol{T}\in \mathcal{D}'(\Omega,\mathbb{R}^3)$ is a scalar valued distribution $\Div \boldsymbol{T} \in \mathcal{D}'(\Omega)$ such that $\Div \boldsymbol{T}={\partial_i T_i}$.  The divergence of a tensor valued distribution $\boldsymbol{T}\in \mathcal{D}'(\Omega,\Lin)$ is a vector valued distribution $\Div \boldsymbol{T} \in \mathcal{D}'(\Omega,\mathbb{R}^3)$ such that $(\Div \boldsymbol{T})_i={\partial_j T_{ij}}$. 
The \textit{curl} of a vector valued distribution $\boldsymbol{T} \in D' (\Omega,\mathbb{R}^3)$ is a vector valued distribution $\Curl \boldsymbol{T} \in \mathcal{D}'(\Omega,\mathbb{R}^3)$ such that $(\Curl \boldsymbol{T})_i = \epsilon_{ijk} {\partial_j T_k}$.
The \textit{curl} of a tensor valued distribution $\boldsymbol{T} \in D' (\Omega,\Lin)$ is a tensor valued distribution $\Curl \boldsymbol{T} \in \mathcal{D}'(\Omega,\Lin)$ such that $(\Curl \boldsymbol{T})_{ij} = \epsilon_{ilk} {\partial_l T_{jk}}$. In particular, for $\boldsymbol{T} \in D' (\Omega,\Lin)$, we have a tensor valued distribution $\Curl \Curl \boldsymbol{T} \in \mathcal{D}'(\Omega,\Lin)$ such that $(\Curl \Curl \boldsymbol{T})_{ij} = \epsilon_{ilk} \epsilon_{jmn} {\partial^2_{lm} T_{kn}}$.

\subsection{Derivatives of Smooth Fields} 

The gradients of a smooth scalar field ${v} \in C^{\infty}(\Omega)$ and a smooth vector field $\boldsymbol{v} \in C^{\infty}(\Omega,\mathbb{R}^3)$ are denoted by $\nabla {v} \in C^{\infty}(\Omega,\mathbb{R}^3)$ and  $\nabla \boldsymbol{v} \in C^{\infty}(\Omega,\Lin)$, respectively.
The divergence of $\boldsymbol{v}$ is a smooth scalar field defined as $\div \boldsymbol{v} = \tr(\nabla \boldsymbol{v})$. The divergence of a smooth tensor field $\boldsymbol{a} \in C^{\infty}(\Omega,\Lin)$ is a smooth vector field $\div \boldsymbol{a}$ defined by $\langle \div \boldsymbol{a},\boldsymbol{d} \rangle=\div(\boldsymbol{a}^T \boldsymbol{d})$, for any fixed $\boldsymbol{d}\in \mathbb{R}^3$.
The curl of $\boldsymbol{v}$ is a smooth vector field $\curl \boldsymbol{v}$ defined as $\langle \curl \boldsymbol{v},\boldsymbol{d} \rangle=\div(\boldsymbol{v}\times \boldsymbol{d})$, for any fixed $\boldsymbol{d}\in \mathbb{R}^3$. The curl of $\boldsymbol{a}$ is a smooth tensor field $\curl \boldsymbol{a}$ defined as $(\curl \boldsymbol{a})\boldsymbol{d}=\curl(\boldsymbol{a}^T \boldsymbol{d})$, for any fixed $\boldsymbol{d}\in \mathbb{R}^3$.
The gradient of a scalar distribution $T \in \mathcal{D}'(\Omega)$ can be therefore be equivalently defined as $\nabla T (\boldsymbol{\phi})=-T(\div \boldsymbol{\phi})$, for all $\boldsymbol{\phi} \in \mathcal{D}(\Omega,\mathbb{R}^3)$. Similarly, the divergence of a vector valued distribution $\boldsymbol{T} \in \mathcal{D}'(\Omega,\mathbb{R}^3)$ can be equivalently defined as $\Div \boldsymbol{T} ({\phi})=-\boldsymbol{T}(\nabla {\phi})$, for all ${\phi} \in \mathcal{D}(\Omega)$.
Furthermore, we can define the curl of a tensor valued distribution $\boldsymbol{T} \in \mathcal{D}'(\Omega,\Lin)$ as 
$(\Curl \boldsymbol{T})(\boldsymbol{\phi}^T) = \boldsymbol{T}\left((\curl\boldsymbol{\phi})^T\right)$, for all $\boldsymbol{\phi} \in \mathcal{D}(\Omega,\Lin)$.

The surface gradient of a smooth field $v \in  C^{\infty}(S)$, with a smooth extension $\overline{v} \in  C^{\infty}(\Omega)$, i.e., $\overline{v} = v$ on $S$, is a smooth vector field $\nabla_S {v} \in C^{\infty}(S,\mathbb{R}^3)$ obtained by projecting $\nabla \overline{v}$ onto the tangent plane of the surface. The surface gradient of a smooth vector field $\boldsymbol{v} \in C^{\infty}(S,\mathbb{R}^3)$ is a smooth tensor field $\nabla_S \boldsymbol{v}  \in C^{\infty}(S,\Lin)$ such that $\nabla_S \boldsymbol{v}=\nabla \overline{\boldsymbol{v}} (\boldsymbol{I}-\boldsymbol{n}\otimes \boldsymbol{n})$, where $\overline{\boldsymbol{v}} \in C^{\infty}(\Omega,\mathbb{R}^3)$ is a smooth extension of $\boldsymbol{v}$ (i.e., $\overline{\boldsymbol{v}} = {\boldsymbol{v}}$ on $S$). 
The surface divergence of $\boldsymbol{v} \in C^{\infty}(S,\mathbb{R}^3)$ is a smooth scalar field $\div_S \boldsymbol{v} \in  C^{\infty}(S)$ defined as $\div_S \boldsymbol{v}=\tr(\nabla_S \boldsymbol{v})$. In terms of the extension $\overline{\boldsymbol{v}}$, it is given by $\div_S \boldsymbol{v}=\div \overline{\boldsymbol{v}} - \left\langle\left(\nabla \overline{\boldsymbol{v}}\right) \boldsymbol{n},\boldsymbol{n}\right\rangle$. In particular, the scalar field  $\kappa = -\div_S \boldsymbol{n}$ is twice the mean curvature of surface $S$. 
The surface divergence of a tensor field $\boldsymbol{a} \in C^{\infty}(\Omega,\Lin) $ is a vector field $\div_S \boldsymbol{a} \in C^{\infty}(S,\mathbb{R}^3)$ defined by $\langle\div_S \boldsymbol{a},\boldsymbol{d}\rangle=\div_S (\boldsymbol{a}^T \boldsymbol{d})$. In terms of a smooth extension $\overline{\boldsymbol{a}}  \in C^{\infty}(\Omega,\Lin)$, it is given by
$\div_S \boldsymbol{a}=\div \overline{\boldsymbol{a}}-\left( \left( \nabla \overline{\boldsymbol{a}} \right) \boldsymbol{n}\right)\boldsymbol{n}$. Finally, if $\boldsymbol{a}$ is a linear map from $\mathbb{R}^3$ to $\Lin$ (third order tensor), the surface divergence $\div_S \boldsymbol{a} \in \Lin$ is given by 
$(\div_S \boldsymbol{a})\boldsymbol{d}=\div_S (\boldsymbol{a}\boldsymbol{d})$.

Motivated by the definition of curl of vector fields on $\Omega$, we introduce, for $\boldsymbol{v} \in C^\infty(S,\mathbb{R}^3)$,   a vector valued smooth field $\curl_S \boldsymbol{v} \in C^\infty(S,\mathbb{R}^3)$ such that, for any fixed $\boldsymbol{d} \in \mathbb{R}^3$, $
\langle \curl_S \boldsymbol{v}, \boldsymbol{d}\rangle = \div_S \left(\boldsymbol{v} \times \boldsymbol{d}\right)$.
Analogous to its bulk counterpart, $\curl_S \boldsymbol{v}$ gives the axial vector of $(\nabla_S \boldsymbol{v} - (\nabla_S \boldsymbol{v})^T)$. If $\boldsymbol{v}$ has no tangential component, i.e., $\boldsymbol{v}=v \boldsymbol{n}$ with $v \in  C^{\infty}(S)$, then we obtain $2 \skw(\nabla \boldsymbol{v})=\nabla_S v \otimes \boldsymbol{n} - \boldsymbol{n} \otimes \nabla_S v$.  On the other hand, if we consider $\boldsymbol{v}$ to be tangential and $S$ to be planar, i.e., $\langle\boldsymbol{v},\boldsymbol{n} \rangle =0$ and $\nabla_S \boldsymbol{n} = \boldsymbol{0}$, then we have $\curl_S \boldsymbol{v} =\langle\curl \overline{\boldsymbol{v}},\boldsymbol{n}\rangle\boldsymbol{n}$, where $\overline{\boldsymbol{v}}$ is a smooth extension of $\boldsymbol{v}$ over $\Omega$. More generally, the following relationship holds:
\begin{equation}
\curl_S \boldsymbol{v}= \left(\frac{\partial \overline{\boldsymbol{v}}}{\partial {n}} \times \boldsymbol{n}\right)  + \curl\left(\overline{\boldsymbol{v}}\right) ~\text{on}~S. \label{scurlv}
\end{equation}
For  $\boldsymbol{a} \in C^\infty(S,\Lin)$, we introduce a tensor valued smooth field $\curl_S \boldsymbol{a} \in C^\infty(S,\Lin)$ such that, for any fixed $\boldsymbol{d} \in \mathbb{R}^3$, $\left(\curl_S \boldsymbol{a}\right)^T \boldsymbol{d} = \div_S \left(\boldsymbol{a} \times \boldsymbol{d}\right)$. 
In terms of a smooth extension $\overline{\boldsymbol{a}} \in C^\infty\left(\Omega,\Lin\right)$ of $\boldsymbol{a}$,  such that $\overline{\boldsymbol{a}} = {\boldsymbol{a}}$ on $S$,  
\begin{equation}
\curl_S \boldsymbol{a}= \left(\frac{\partial \overline{\boldsymbol{a}}}{\partial {n}} \times \boldsymbol{n}\right)^T  + \curl\left(\overline{\boldsymbol{a}}\right) ~\text{on}~S.
\end{equation}
Indeed, for fixed vectors $\boldsymbol{d} \in \mathbb{R}^3$ and $\boldsymbol{f} \in \mathbb{R}^3$, we can use the identity $\left(\boldsymbol{a}\times \boldsymbol{d}\right)^T \boldsymbol{f} = \left(\boldsymbol{a}^T \boldsymbol{f} \times \boldsymbol{d} \right)$ to obtain
\begin{equation}
\left\langle \left(\left(\frac{\partial \overline{\boldsymbol{a}}}{\partial {n}} \times \boldsymbol{n}\right)^T  + \curl\left(\overline{\boldsymbol{a}}\right)\right)\boldsymbol{f} , \boldsymbol{d} \right\rangle = \left\langle \left(\frac{\partial \overline{\boldsymbol{a}}}{\partial {n}} \times \boldsymbol{n}\right)\boldsymbol{d},\boldsymbol{f} \right\rangle + \div \left(\left(\overline{\boldsymbol{a}}^T \boldsymbol{f}\right)\times \boldsymbol{d}\right).
\end{equation}
Consequent to writing the divergence term above in terms of a surface divergence, and proceeding with straightforward manipulations, we obtain the desired result.
Equation \eqref{scurlv} can be established along similar lines. It is clear that these relationships are independent of the choice of an extension.

Given a smooth oriented curve $L \subset \Omega$, with tangent $\boldsymbol{t} \in C^\infty (L,\mathbb{R}^3)$, consider a surface $S(x_0)$ passing through point $x_0 \in L$ such that $\boldsymbol{t}(x_0)$ is the normal to $S(x_0)$ at $x_0$. For a smooth bulk vector field $\boldsymbol{v} \in C^\infty (\Omega,\mathbb{R}^3)$, we define a vector valued smooth field $\curl_{t} \boldsymbol{v} \in C^\infty (L,\mathbb{R}^3)$ such that, at any $x_0 \in L$,
$\curl_{t} \boldsymbol{v}=\curl_{S(x_0)} (\boldsymbol{v}|_{S(x_0)})$, which is equal to $(({\partial \boldsymbol{v}}/{\partial {t}}) \times \boldsymbol{t} )  + \curl {\boldsymbol{v}}$ by Equation \eqref{scurlv}, where $\partial / \partial t$ is the derivative along $\boldsymbol{t}$.
It is immediate that this definition is independent of the choice of the surface $S(x_0)$ 
as long as the normal to $S(x_0)$ at $x_0$ is $\boldsymbol{t}$.

\subsection{Useful Identities}
\label{ui}

In this section we collect several identities which relate derivatives of distributions to derivatives of smooth functions. These identities will be central to the rest of our work. The proofs of these identities are collected in Appendix \ref{appid}.

\begin{ids}
\label{GradientLemma}
(Gradient of distributions) For $\boldsymbol{\psi} \in \mathcal{D}(\Omega,\mathbb{R}^3)$,

\noindent (a) If ${B} \in \mathcal{B}(\Omega)$, as defined in Equation \eqref{distB}, then
\begin{equation}
\nabla B (\boldsymbol{\psi})= \int_\Omega \langle \nabla b , \boldsymbol{\psi}\rangle dv - \int_S \left\langle \llbracket b \rrbracket \boldsymbol{n} , \boldsymbol{\psi} \right\rangle da. \label{gida}
\end{equation}

\noindent (b) If ${C} \in \mathcal{C}(\Omega)$, as defined in Equation \eqref{distC}, then
\begin{equation}
 \label{gidb}
\nabla C (\boldsymbol{\psi})= -\int_{\partial S - \partial \Omega} \langle c\boldsymbol{\nu},\boldsymbol{\psi} \rangle dl + \int_S \left\langle \left( \nabla_S c + \kappa c \boldsymbol{n} \right), \boldsymbol{\psi} \right\rangle da - \int_S  \left\langle c \boldsymbol{n} , \frac{\partial \boldsymbol{\psi}}{\partial n} \right\rangle da,
\end{equation}
where $\boldsymbol{\nu}$ is the in plane normal to $\partial S - \partial \Omega$.

\noindent (c) If ${F} \in \mathcal{F}(\Omega)$, as defined in Equation \eqref{distF}, then
\begin{equation}
 \label{gidc}
\begin{split}
\nabla F (\boldsymbol{\psi})= -\int_{\partial S-\partial \Omega} \left\langle f\boldsymbol{\nu},\frac{\partial \boldsymbol{\psi}}{\partial n} \right\rangle dl + \int_{\partial S-\partial \Omega} \left\langle f (\nabla_S \boldsymbol{n}) \boldsymbol{\nu},\boldsymbol{\psi}\right\rangle dl +\int_S \left\langle (\nabla_S f+ \kappa f \boldsymbol{n}), \frac{\partial \boldsymbol{\psi}}{\partial n} \right\rangle da \\- \int_S  \left\langle \div_S (f\nabla_S \boldsymbol{n}),  \boldsymbol{\psi} \right\rangle da  - \int_S  \left\langle f\boldsymbol{n}, \nabla(\nabla \boldsymbol{\psi}) \boldsymbol{n}\otimes \boldsymbol{n}\right\rangle) da.
\end{split}
\end{equation}

\noindent (d) If $H \in \mathcal{H}(\Omega)$, as defined in Equation \eqref{distH}, then
\begin{equation}
 \label{gidd}
\nabla H (\boldsymbol{\psi})= -\int_L \left( h \langle \nabla \boldsymbol{\psi},(\boldsymbol{I}-\boldsymbol{t}\otimes \boldsymbol{t})\rangle - \left\langle  \frac{\partial (h \boldsymbol{t})}{\partial t} , \boldsymbol{\psi}\right\rangle \right) dl - \langle h\boldsymbol{t},\boldsymbol{\psi} \rangle|_{\partial L - \partial \Omega},
\end{equation}
where $\boldsymbol{t}$ is the unit tangent along $L$. The last term above evaluates the function at the end points of $L$ (excluding those which lie on $\partial \Omega$) and should appropriately take into consideration the orientation of the curve at the evaluation point.
\end{ids}

The following two sets of identities are used to calculate divergence and curl of vector valued distributions $\boldsymbol{B} \in \mathcal{B}(\Omega,\mathbb{R}^3)$, $\boldsymbol{C} \in  \mathcal{C}(\Omega,\mathbb{R}^3)$, $\boldsymbol{F} \in \mathcal{F}(\Omega,\mathbb{R}^3)$, and $\boldsymbol{H} \in \mathcal{H}(\Omega,\mathbb{R}^3)$ such that, for $\boldsymbol{\phi} \in \mathcal{D}(\Omega,\mathbb{R}^3)$, 
\begin{equation}
\boldsymbol{B}(\boldsymbol{\phi})= \int_{\Omega} \langle \boldsymbol{b},\boldsymbol{\phi} \rangle dv,~\boldsymbol{C}(\boldsymbol{\phi})= \int_S \langle \boldsymbol{c},\boldsymbol{\phi} \rangle da,~\boldsymbol{F}(\boldsymbol{\phi})= \int_S \left\langle \boldsymbol{f},\frac{\partial \boldsymbol{\phi}}{\partial n}  \right\rangle da,~\text{and}~\boldsymbol{H}(\boldsymbol{\phi})=\int_L \langle \boldsymbol{h},\boldsymbol{\phi} \rangle dl, \label{vectdist}
\end{equation}
where $\boldsymbol{b}$ is a piecewise smooth vector valued function on $\Omega$, possibly discontinuous across $S$ with $\partial S - \partial \Omega = \emptyset$, $\boldsymbol{c}$ and $\boldsymbol{f}$ are smooth vector valued functions on $S$, and $\boldsymbol{h}$ is a smooth vector valued function on $L$.
The divergence and curl of a tensor valued distribution $\boldsymbol{A}\in \mathcal{D}'(\Omega,\Lin)$ can be obtained from the results for vector valued distributions using the identities $\langle\Div \boldsymbol{A}, \boldsymbol{d}\rangle=\Div(\boldsymbol{A}^T \boldsymbol{d})$ and $(\Curl \boldsymbol{A}) \boldsymbol{d}=\Curl(\boldsymbol{A}^T \boldsymbol{d})$ for any fixed vector $\boldsymbol{d} \in \mathbb{R}^3$.

\begin{ids}
\label{DivergenceLemma} (Divergence of distributions) 
For ${\psi} \in \mathcal{D}(\Omega)$,

\noindent (a) If $\boldsymbol{B} \in \mathcal{B}(\Omega,\mathbb{R}^3)$ then
\begin{equation}
\begin{split}
\Div \boldsymbol{B} \left({\psi}\right)= \int_{\Omega}  (\div \boldsymbol{b}) {\psi}  dv - \int_{S} \left\langle \llbracket \boldsymbol{b} \rrbracket , \boldsymbol{n} \right\rangle {\psi} da. 
\end{split} \label{DivB}
\end{equation}

\noindent (b) If $\boldsymbol{C} \in  \mathcal{C}(\Omega,\mathbb{R}^3)$ then
\begin{equation}
\label{DivC}
\Div \boldsymbol{C} \left({\psi}\right)=  \int_{S}   \left(\div_S \boldsymbol{c} + \kappa \langle\boldsymbol{c} ,\boldsymbol{n}\rangle \right){\psi} da -\int_S \left\langle \boldsymbol{c}, \boldsymbol{n}   \right\rangle \frac{\partial {\psi}}{\partial {n}}da - \int_{\partial S - \partial \Omega}  \left\langle \boldsymbol{c}, \boldsymbol{\nu}  \right\rangle{\psi} dl.
\end{equation}

\noindent (c) If  $\boldsymbol{F} \in \mathcal{F}(\Omega,\mathbb{R}^3)$ then
\begin{equation}
\label{DivF}
\begin{split}
\Div \boldsymbol{F}({\psi})=\int_{\partial S - \partial \Omega} \left\langle (\nabla_S \boldsymbol{n}) \boldsymbol{f},  \boldsymbol{\nu} \right\rangle {\psi}  dl-\int_{\partial S - \partial \Omega} \langle \boldsymbol{f},\boldsymbol{\nu} \rangle  \frac{\partial {\psi}}{\partial n} dl - \int_S \div_S \left( (\nabla_S \boldsymbol{n}) \boldsymbol{f} \right) {\psi} da \\+ \int_S \left(\div_S \boldsymbol{f} + \kappa \langle\boldsymbol{f} ,\boldsymbol{n}\rangle \right) \frac{\partial {\psi}}{\partial n}  da - \int_S \langle \boldsymbol{f},\boldsymbol{n} \rangle \langle \nabla(\nabla {\psi}), \boldsymbol{n} \otimes \boldsymbol{n} \rangle da.
\end{split}
\end{equation}

\noindent (d) If  $\boldsymbol{H} \in \mathcal{H}(\Omega,\mathbb{R}^3)$ then
\begin{equation}
\label{DivH}
\Div \boldsymbol{H} ({\psi})=-\int_L \left\langle (\boldsymbol{I}-\boldsymbol{t}\otimes \boldsymbol{t}) \boldsymbol{h}, (\boldsymbol{I}-\boldsymbol{t}\otimes \boldsymbol{t}) \nabla {\psi}   \right\rangle dl + \int_L \frac{\partial (\langle\boldsymbol{h},\boldsymbol{t}\rangle)}{\partial {t}} {\psi}  dl - (\langle \boldsymbol{h},\boldsymbol{t} \rangle{\psi}) |_{\partial L -\partial \Omega}.
\end{equation}
\end{ids}

\begin{ids}
\label{CurlLemma} (Curl of distributions) 
For $\boldsymbol{\phi} \in \mathcal{D}(\Omega,\mathbb{R}^3)$,

\noindent (a) If $\boldsymbol{B} \in  \mathcal{B}(\Omega,\mathbb{R}^3)$ then
\begin{equation}
\Curl \boldsymbol{B} \left(\boldsymbol{\phi}\right)=\int_{\Omega} \left\langle \curl \boldsymbol{b}, \boldsymbol{\phi} \right\rangle dv +\int_S \left\langle \left(\llbracket \boldsymbol{b} \rrbracket \times \boldsymbol{n}\right), \boldsymbol{\phi} \right\rangle da. \label{CurlB}
\end{equation}

\noindent (b) If  $\boldsymbol{C} \in  \mathcal{C}(\Omega,\mathbb{R}^3)$ then
\begin{equation}
\label{CurlS}
\Curl \boldsymbol{C} \left(\boldsymbol{\phi}\right)=\int_{\partial S-\partial \Omega} \langle \boldsymbol{c}\times \boldsymbol{\nu},\boldsymbol{\phi} \rangle dl +\int_S  \langle \left(-\kappa\boldsymbol{c}\times \boldsymbol{n}+ \curl_S \boldsymbol{c}\right),\boldsymbol{\phi} \rangle da + \int_S \left\langle \boldsymbol{c} \times \boldsymbol{n} ,\frac{\partial \boldsymbol{\phi}}{\partial n}   \right\rangle da.
\end{equation}

\noindent (c) If $\boldsymbol{F} \in \mathcal{F}(\Omega,\mathbb{R}^3)$ then
\begin{equation}
\label{CurlF}
\begin{split}
\Curl \boldsymbol{F} (\boldsymbol{\phi}) = \int_{\partial S - \partial \Omega} \left\langle\left( \boldsymbol{f}\times \boldsymbol{\nu}\right), \frac{\partial \boldsymbol{\phi}}{\partial n} \right\rangle dl  +\int_{\partial S - \partial \Omega} \left\langle \left((\nabla_S \boldsymbol{n})\times \boldsymbol{f}\right)^T\boldsymbol{\nu},\boldsymbol{\phi} \right\rangle dl  - \int_S \langle \div_S \left((\nabla_S \boldsymbol{n})\times \boldsymbol{f}\right)^T,\boldsymbol{\phi} \rangle da\\+ \int_S   \left\langle \left(-\kappa \left(\boldsymbol{f}\times \boldsymbol{n}\right) +\curl_S \boldsymbol{f}\right),\frac{\partial \boldsymbol{\phi}}{\partial n} \right\rangle  da  + \int_S \langle \boldsymbol{f}\times\boldsymbol{n}, \left(\nabla(\nabla \boldsymbol{\phi})\boldsymbol{n}\otimes\boldsymbol{n}\right) \rangle da.
\end{split}
\end{equation}

\noindent (d) If  $\boldsymbol{H} \in \mathcal{H}(\Omega,\mathbb{R}^3)$ then
\begin{equation}
\Curl \boldsymbol{H} (\boldsymbol{\phi})= \int_L \langle \boldsymbol{h}, \curl_{t} \boldsymbol{\phi}  \rangle dl  - \int_L \left\langle \frac{\partial }{\partial t}(\boldsymbol{h} \times \boldsymbol{t}) ,  \boldsymbol{\phi}  \right\rangle dl + \langle \boldsymbol{h} \times \boldsymbol{t} ,  \boldsymbol{\phi}  \rangle |_{\partial L-\partial \Omega}.
\end{equation}
\end{ids}

The above identities will be used, in particular, to deduce the consequences of vanishing of the left hand sides in terms of derivatives of smooth functions. For instance, arbitrariness of $\boldsymbol{\phi}$ can be exploited in Equation \eqref{gida} to show the equivalence of  $\nabla B = 0$ with $\nabla b = \boldsymbol{0}$ in $\Omega-S$ and $\llbracket b \rrbracket = 0$ on $S$. Similarly, Equation \eqref{DivB} implies the equivalence of $\Div \boldsymbol{B} = 0$ with $\div \boldsymbol{b} = {0}$ in $\Omega-S$ and $ \langle \llbracket \boldsymbol{b} \rrbracket , \boldsymbol{n} \rangle= 0$ on $S$, and \eqref{CurlB} implies the equivalence of $\Curl \boldsymbol{B} = \boldsymbol{0}$ with $\curl \boldsymbol{b} = \boldsymbol{0}$ in $\Omega-S$ and $\llbracket \boldsymbol{b} \rrbracket \times \boldsymbol{n}= \boldsymbol{0}$ on $S$.\footnote{Given a distribution $T(\phi)=\int_\Omega b\phi dv + \int_S c \phi da$ such that $b$ is piecewise smooth (smooth in $\Omega-S$) and $c$ is a smooth function on $S$. Also, $T(\phi)=0$ for any $\phi\in \mathcal{D}(\Omega)$. At $x_0 \in \Omega-S$, if $b(x_0)=b_0 > 0$, there exists a connected set $A\subset \Omega-S$ with non zero volume such that $b \neq 0$ in $A$. There also exists a connected set $A_1\subset A$ such that $A_1$ has a finite volume $V_1$ with $x_0\in A_1$ and $b(x) > {b_0}/{2}$ for all $x\in A_1$. We choose $\phi\in \mathcal{D}(\Omega)$ such that $\phi(x)=1$ for all $x\in A_1$, $\phi(x)\geq 0$ for all $x\in A$, and $\phi(x)=0$ for $x \notin A$. Then $T(\phi)\geq {b_0 V_1}/{2}$ ($b$ and $\phi$ do not change signs) which gives us a contradiction. So $b=0$ for all $x\in \Omega-S$. The assumed sign of $b_0$ is clearly of no consequence. A similar argument can be constructed to argue that $c=0$.}  To establish similar results from other identities we need the following two results. 
First, if $K\in\mathcal{D}'(\Omega)$ is such that, for any $\phi \in \mathcal{D}(\Omega)$, 
\begin{equation}
K(\phi)=\int_S a \phi da + \int_S b \frac{\partial \phi}{\partial n} da +\int_S c \langle \nabla(\nabla \phi), \boldsymbol{n}\otimes \boldsymbol{n} \rangle da,
\end{equation}
where $a$, $b$, $c$ are smooth functions on the oriented regular surface $S\subset \Omega$ with normal $\boldsymbol{n}$, then $K=0$ is equivalent to $a=0$, $b=0$, and $c=0$. 
Indeed, let $(x_1,x_2,x_3)$ be a local orthogonal coordinate system with $(\boldsymbol{e}_1,\boldsymbol{e}_2, \boldsymbol{e}_3)$ as basis vectors such that $x_3=0$ defines $S$ (locally) with $\boldsymbol{n} = \boldsymbol{e}_3$. Let $\overline{\boldsymbol{n}}$ be a smooth extension of $\boldsymbol{n}$ to $\Omega$ such that $\langle \overline{\boldsymbol{n}},\overline{\boldsymbol{n}}\rangle=1$. Then $\langle \nabla(\nabla \phi),\boldsymbol{n}\otimes \boldsymbol{n} \rangle= ({\partial^2 \phi}/{\partial x_3^2}) - \langle \nabla_S \phi, ({\partial \overline{\boldsymbol{n}}}/{\partial x_3}) \rangle$. Let $f$ be an arbitrary smooth function on $S$ with a compact support $A\subset S$. Let $l$ be the minimum distance of $A$ from $\partial \Omega$. Let $B \subset \Omega$ such that $x\in B$ if and only if $\text{dist}(x,S)< l_1$, where $l_1<l$. There always exist a $g \in \mathcal{D}(\Omega)$ such that $g(x)=1$ for $x\in B$. Then for $\phi={fgx_3^2}$, $\phi=0$ and $({\partial \phi}/{\partial x_3})=0$ on $S$, and hence $\int_S c f da=0$ for an arbitrary local smooth function $f$. This implies $c=0$. Similarly, use $\phi=fgx_3$ to conclude that $b=0$ and consequently $a=0$.  
Second, if $K \in \mathcal{D}'(\Omega)$ is such that, for any $\phi \in \mathcal{D}(\Omega)$,
\begin{equation}
 K(\phi)=\int_L a \phi dl + \int_L \langle \boldsymbol{b},(\boldsymbol{I}-\boldsymbol{t}\otimes \boldsymbol{t})\nabla \phi \rangle dl,
\end{equation}
where $a$ and $\boldsymbol{b}$ are smooth functions on a smooth oriented curve $L\subset \Omega$ with tangent $\boldsymbol{t}$. Then $K=0$ is equivalent to $a=0$ and $(\boldsymbol{I}-\boldsymbol{t}\otimes \boldsymbol{t})\boldsymbol{b}=\boldsymbol{0}$. Indeed,
let $(x_1,x_2,x_3)$ be a local orthogonal coordinate system with $(\boldsymbol{e}_1,\boldsymbol{e}_2,\boldsymbol{e}_3)$ as basis vectors such that $L$ is locally parameterized by $x_3$, i.e. $\boldsymbol{t} = \boldsymbol{e}_3$, $x_1=0$, and $x_2=0$ on $L$. By considering $\phi$ in terms of an arbitrary smooth function, with local compact support on $L$, in addition to being linear in $x_1$ and $x_2$, we can use arguments analogous to the previous paragraph to derive the required results. 

A direct application of the above results, in conjunction with Equation \eqref{DivC} is the equivalence of $\Div \boldsymbol{C} = 0$ with $\div_S \boldsymbol{c} = {0}$ and $\langle \boldsymbol{c},\boldsymbol{n}\rangle = {0}$ in $S$ and $\langle \boldsymbol{c},\boldsymbol{\nu}\rangle = {0}$ on $\partial S - \partial \Omega$. Similarly, Equation \eqref{CurlS} implies the equivalence of $\Curl \boldsymbol{C} = \boldsymbol{0}$ with $\curl_S \boldsymbol{c} = \boldsymbol{0}$ and $\boldsymbol{c} \times \boldsymbol{n} = \boldsymbol{0}$ in $S$ and $\boldsymbol{c} \times \boldsymbol{\nu} = \boldsymbol{0}$ on $\partial S - \partial \Omega$. Furthermore, Equation \eqref{DivF} would imply the equivalence of $\Div \boldsymbol{F} = 0$ with $\div_S \boldsymbol{f} = {0}$, $\div_S ((\nabla_S \boldsymbol{n})\boldsymbol{f}) = {0}$, and $\langle \boldsymbol{f},\boldsymbol{n}\rangle = {0}$ in $S$, and $\langle \boldsymbol{f},\boldsymbol{\nu}\rangle = {0}$, $\langle(\nabla_S \boldsymbol{n})\boldsymbol{f},\boldsymbol{\nu}\rangle = 0$ on $\partial S - \partial \Omega$. Analogous consequences can be deduced from other identities.

\subsection{Poincar\'e's lemma}

Given any $U \in \mathcal{D}'(\Omega)$ and $\boldsymbol{V} \in \mathcal{D}'(\Omega,\mathbb{R}^3)$, 
\begin{equation}
\Curl \left( \nabla U \right) = \boldsymbol{0}~\text{and}~
\Div \left( \Curl \boldsymbol{V}\right) = 0.
\label{poin1}
\end{equation}
These follow immediately by writing $\left(\Curl \left( \nabla U\right)\right) _i = \epsilon_{ijk} \partial^2_{jk} U$ and 
$\Div \left( \Curl \boldsymbol{V}\right) =\epsilon_{ijk}\partial^2_{ik} V_j$ and recalling 
Equation \eqref{diff_sym}. The converse of these results is less straightforward. 
The following theorem, stated by Mardare \cite{mardare2008poincare} in this form, establishes that the converse of \eqref{poin1}$_1$ holds true for a simply connected domain in the case of curl free vector valued distributions. For a proof, we refer the reader to the original paper. 
\begin{theorem} (Mardare, 2008 \cite{mardare2008poincare})
\label{PoincareOneForm}
If $\Omega$ is a simply connected open subset of $\mathbb{R}^3$ and $\boldsymbol{V} \in \mathcal{D}' (\Omega,\mathbb{R}^3)$, such that $\Curl \boldsymbol{V} =\boldsymbol{0}$, 
then there exist a $U \in \mathcal{D}'(\Omega)$ such that 
$\boldsymbol{V} = \nabla U$.
\end{theorem}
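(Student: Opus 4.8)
The statement is the distribution-valued version of the classical Poincar\'e lemma ``curl-free $\Rightarrow$ gradient,'' so the plan is to reduce it to the smooth case by regularization on small convex pieces and then to glue the resulting local potentials into a global one, using simple connectivity exactly where the smooth theory needs it. Since $\Div(\Curl\,\cdot)$ and $\Curl(\nabla\,\cdot)$ vanish identically (the forward direction already recorded in \eqref{poin1}), only the converse requires work. I would cover $\Omega$ by a family of open balls $\{U_i\}$ (a good cover: every ball and every finite intersection is convex, hence contractible and simply connected) and treat the problem first on a single $U_i$.

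For the local step on a convex set $U = U_i$, the plan is to mollify. With a standard mollifier $\rho_\epsilon$, the smooth fields $\boldsymbol{V}^\epsilon = \boldsymbol{V} * \rho_\epsilon$ are defined on $U_\epsilon = \{x \in U : \operatorname{dist}(x,\partial U) > \epsilon\}$ (still convex), and since convolution commutes with differentiation, $\curl \boldsymbol{V}^\epsilon = (\Curl \boldsymbol{V}) * \rho_\epsilon = \boldsymbol{0}$. The classical Poincar\'e lemma then furnishes a smooth potential $u^\epsilon$ with $\nabla u^\epsilon = \boldsymbol{V}^\epsilon$, unique up to an additive constant, which I would fix by normalizing $u^\epsilon$ to have zero mean on a fixed ball $B_0 \subset\subset U$. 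To pass to the limit I would use that $\boldsymbol{V}$ is of finite order on $\overline{U'}$ for $U' \subset\subset U$, so $\boldsymbol{V} \in H^{-s}(U',\mathbb{R}^3)$ for some $s$ and $\boldsymbol{V}^\epsilon \to \boldsymbol{V}$ in $H^{-s}$; the negative-norm Ne\v{c}as--Lions inequality (a distribution whose gradient lies in $H^{-s}$ lies, modulo constants, in $H^{-s+1}$, with the corresponding bound) then shows $\{u^\epsilon\}$ is Cauchy in $H^{-s+1}(U')$. Its limit $u \in \mathcal{D}'(U)$ satisfies $\nabla u = \boldsymbol{V}$ on $U$ by continuity of $\nabla$ on $\mathcal{D}'$. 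An equivalent route is an explicit homotopy (cone) operator on a star-shaped $U$, which produces $u = K\boldsymbol{V}$ directly without a limiting argument.

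With a local potential $u_i \in \mathcal{D}'(U_i)$ on each chart, I would glue. On an overlap $U_i \cap U_j$ one has $\nabla(u_i - u_j) = \boldsymbol{0}$, and a distribution with vanishing gradient on a connected open set is a constant (the standard distributional constancy lemma); hence $u_i - u_j = c_{ij} \in \mathbb{R}$. The $\{c_{ij}\}$ form a \v{C}ech $1$-cocycle whose class lives in $H^1(\Omega;\mathbb{R})$. Because $\Omega$ is simply connected, $H^1(\Omega;\mathbb{R}) = 0$, so the cocycle is a coboundary, $c_{ij} = c_i - c_j$ for constants $c_i$. Replacing each $u_i$ by $u_i - c_i$ makes the local potentials agree on overlaps, so they patch (via a partition of unity, or simply because they define a consistent global functional on $\mathcal{D}(\Omega)$) to a single $U \in \mathcal{D}'(\Omega)$ with $\nabla U = \boldsymbol{V}$.

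The hard part is twofold. First, the local limiting step: plain $\mathcal{D}'$ convergence of $\boldsymbol{V}^\epsilon$ is not by itself enough to force convergence of the normalized potentials, and the quantitative (negative Sobolev) Ne\v{c}as--Lions estimate, together with the normalization that pins the additive constant, is what makes the family Cauchy --- this is the analytic crux. Second, the globalization is where simple connectivity is indispensable: locally a potential always exists (every chart is contractible), and the sole obstruction to a global potential is the cohomology class of the constants $\{c_{ij}\}$, which vanishes precisely because $\pi_1(\Omega) = 0$. Granting the classical smooth Poincar\'e lemma, the constancy lemma, and the Ne\v{c}as--Lions inequality, the argument closes.
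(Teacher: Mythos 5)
The paper does not actually prove Theorem \ref{PoincareOneForm}: it is quoted from Mardare (2008) and the reader is explicitly referred to the original paper, so there is no internal argument to compare yours against line by line. Your proposal is nonetheless a correct, self-contained route, and it is worth setting it against the technique the paper \emph{does} use for the companion statement, Theorem \ref{Poincare} (the divergence case), which is proved via Demailly's global regularization operator in Lemma \ref{CohomologoustoSmoothForm}: there one constructs in one stroke, on all of $\Omega$, a smooth field cohomologous to the given distribution and then invokes the smooth Poincar\'e lemma globally --- a method that requires $\Omega$ to be contractible. Your local-to-global scheme (mollify on convex charts, pin the additive constant, pass to the limit with the negative-order Ne\v{c}as--Lions estimate, then kill the \v{C}ech $1$-cocycle of constants using $H^1(\Omega;\mathbb{R})=0$) is better adapted to the $1$-form case precisely because it needs only simple connectivity, which is the stated hypothesis, whereas a naive global homotopy-operator proof would demand contractibility; it also yields local regularity of the potential ($U\in H^{-s+1}_{\mathrm{loc}}$ when $\boldsymbol{V}\in H^{-s}_{\mathrm{loc}}$), anticipating the Amrouche--Girault results quoted in Section \ref{rr}. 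Two places should be written out in full if you develop this: (i) the Cauchy argument needs the \emph{quantitative} inequality modulo constants, $\inf_{c\in\mathbb{R}}\|v-c\|_{H^{-s+1}(U')}\le C\,\|\nabla v\|_{H^{-s}(U')}$, which follows from the qualitative Ne\v{c}as--Lions statement by a Peetre--Tartar argument using the compact embedding of $H^{-s+1}(U')$ into $H^{-s}(U')$; and (ii) the potentials obtained on an exhaustion $U'_k\uparrow U_i$ must be checked to agree so as to define a single $u_i\in\mathcal{D}'(U_i)$ before gluing (they do, since all are normalized to zero mean on the same fixed ball $B_0$). Neither is a gap, only a detail to be made explicit; also note that the connectedness of double and triple overlaps, needed for the constancy lemma and the cocycle identity, is guaranteed by the convexity of the charts.
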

An immediate corollary of Theorem \ref{PoincareOneForm} is to establish an analogous result for symmetric tensor valued distributions.
\begin{cor}
\label{StrainCompatibility}
If $\Omega$ is a simply connected open subset of $\mathbb{R}^3$ and $\boldsymbol{A} \in \mathcal{D}' (\Omega,\Sym)$, then
$\Curl \Curl \boldsymbol{A}=\boldsymbol{0}$
is equivalent to existence of a $\boldsymbol{U} \in \mathcal{D}' (\Omega, \mathbb{R}^3)$ such that 
$\boldsymbol{A}= (1/2)({\nabla \boldsymbol{U} +(\nabla \boldsymbol{U})^T})$.
\end{cor}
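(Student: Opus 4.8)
The plan is to prove both implications, the forward one being an immediate computation and the converse carrying the substance. Suppose first that some $\boldsymbol{U} \in \mathcal{D}'(\Omega,\mathbb{R}^3)$ exists with $\boldsymbol{A} = \frac12(\nabla\boldsymbol{U} + (\nabla\boldsymbol{U})^T)$. Writing $(\Curl\Curl\boldsymbol{A})_{ij} = \epsilon_{ilk}\epsilon_{jmn}\partial^2_{lm}A_{kn}$ and inserting $A_{kn} = \frac12(\partial_n U_k + \partial_k U_n)$, each resulting term carries a factor of the form $\epsilon_{jmn}\partial_m\partial_n(\cdot)$ or $\epsilon_{ilk}\partial_l\partial_k(\cdot)$, which vanishes by the symmetry of second distributional derivatives recorded in Equation \eqref{diff_sym}. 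Hence $\Curl\Curl\boldsymbol{A} = \boldsymbol{0}$.

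For the converse, assume $\Curl\Curl\boldsymbol{A} = \boldsymbol{0}$. First I would observe that $\Curl\Curl\boldsymbol{A} = \Curl(\Curl\boldsymbol{A})$ when the outer operator is the tensor curl, by a direct index comparison with $(\Curl\Curl\boldsymbol{A})_{ij}=\epsilon_{ilk}\epsilon_{jmn}\partial^2_{lm}A_{kn}$, so that $\boldsymbol{B} := \Curl\boldsymbol{A}$ satisfies $\Curl\boldsymbol{B}=\boldsymbol{0}$. The key device is to upgrade Theorem \ref{PoincareOneForm} from vectors to tensors: fixing the first index $j$ of $\boldsymbol{B}$ and regarding $(B_{j1},B_{j2},B_{j3})$ as a vector distribution, the condition $\epsilon_{ilk}\partial_l B_{jk}=0$ says exactly that this row is curl-free, so Theorem \ref{PoincareOneForm} supplies a scalar $w_j \in \mathcal{D}'(\Omega)$ with $B_{jk}=\partial_k w_j$. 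Assembling the three rows produces $\boldsymbol{w}\in\mathcal{D}'(\Omega,\mathbb{R}^3)$ with $\Curl\boldsymbol{A} = \nabla\boldsymbol{w}$. This row-wise application of the vector Poincaré lemma, which requires $\Omega$ to be simply connected, is the pivot of the argument.

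The symmetry of $\boldsymbol{A}$ now enters decisively through a trace identity: since $A_{ik}$ is symmetric and $\epsilon_{ilk}$ is antisymmetric in $i,k$, we have $\tr(\Curl\boldsymbol{A}) = \epsilon_{ilk}\partial_l A_{ik}=0$, and therefore $\Div\boldsymbol{w} = \tr(\nabla\boldsymbol{w}) = 0$. Next I would introduce the skew tensor distribution $\boldsymbol{\Omega}\in\mathcal{D}'(\Omega,\Skw)$ whose axial vector is $\boldsymbol{w}$, namely $\Omega_{jk} = -\epsilon_{jkp}w_p$, and compute its curl via the $\epsilon$-$\delta$ identity to get $\Curl\boldsymbol{\Omega} = (\Div\boldsymbol{w})\boldsymbol{I} - \nabla\boldsymbol{w}$. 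The trace-free condition $\Div\boldsymbol{w}=0$ collapses this to $\Curl\boldsymbol{\Omega} = -\nabla\boldsymbol{w} = -\Curl\boldsymbol{A}$, so that $\Curl(\boldsymbol{A}+\boldsymbol{\Omega})=\boldsymbol{0}$. Applying the tensor Poincaré lemma once more, again row-wise, produces $\boldsymbol{U}\in\mathcal{D}'(\Omega,\mathbb{R}^3)$ with $\boldsymbol{A}+\boldsymbol{\Omega} = \nabla\boldsymbol{U}$; taking symmetric parts and using $\boldsymbol{A}\in\Sym$ while $\boldsymbol{\Omega}\in\Skw$ gives $\boldsymbol{A}=\frac12(\nabla\boldsymbol{U}+(\nabla\boldsymbol{U})^T)$, as required.

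I expect the main obstacle to be the second integration rather than the first: having extracted $\Curl\boldsymbol{A}=\nabla\boldsymbol{w}$ one cannot conclude directly, and the construction of the compensating skew field $\boldsymbol{\Omega}$ only closes because $\Div\boldsymbol{w}=0$. Verifying that this divergence-free condition is genuinely forced by the symmetry of $\boldsymbol{A}$, equivalently that $\Curl\boldsymbol{A}$ is trace-free, is the crux of the proof; the remainder is the bookkeeping of invoking Theorem \ref{PoincareOneForm} componentwise and tracking the index and sign conventions for the tensor curl and for the axial-vector correspondence.
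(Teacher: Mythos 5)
Your proof is correct, and it takes a genuinely different route from the paper's. The paper follows the Saint--Venant-style construction: it forms $H_{ijk}=\partial_j A_{ik}-\partial_i A_{jk}$, checks that $\Curl\Curl\boldsymbol{A}=\boldsymbol{0}$ forces $\partial_l H_{ijk}-\partial_k H_{ijl}=0$, integrates once via Theorem \ref{PoincareOneForm} to get $H_{ijk}=\partial_k P_{ij}$, arranges $P_{ij}$ to be skew, and integrates $Q_{ij}=A_{ij}+P_{ij}$ a second time. You instead integrate $\Curl\boldsymbol{A}$ directly: since $\Curl\Curl\boldsymbol{A}=\Curl(\Curl\boldsymbol{A})$ under the paper's index conventions, a row-wise application of Theorem \ref{PoincareOneForm} yields $\Curl\boldsymbol{A}=\nabla\boldsymbol{w}$; the symmetry of $\boldsymbol{A}$ gives $\tr(\Curl\boldsymbol{A})=\epsilon_{ilk}\partial_l A_{ik}=0$, hence $\Div\boldsymbol{w}=0$; and the skew tensor $\Omega_{jk}=-\epsilon_{jkp}w_p$ then satisfies $\Curl\boldsymbol{\Omega}=(\Div\boldsymbol{w})\boldsymbol{I}-\nabla\boldsymbol{w}=-\Curl\boldsymbol{A}$, so a second application of the lemma closes the argument. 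All the index and sign checks go through with the paper's conventions ($(\Curl\boldsymbol{T})_{ij}=\epsilon_{ilk}\partial_l T_{jk}$, $(\nabla\boldsymbol{T})_{ij}=\partial_j T_i$, $b_{ik}=\epsilon_{ilk}(ax\,\boldsymbol{b})_l$). Both proofs invoke Theorem \ref{PoincareOneForm} exactly twice, so neither is more economical in that respect; what your version buys is that the skew compensator is produced explicitly as the tensor with axial vector $\boldsymbol{w}$, which sidesteps the paper's slightly delicate step of "constructing a $P_{ij}$ with $P_{ij}+P_{ji}=0$" from the antisymmetry of $H_{ijk}$, and it makes the physical role of $\boldsymbol{w}$ as an infinitesimal rotation field transparent. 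Indeed, your argument is essentially the one the paper itself deploys later, after Equations \eqref{distdefeq1}--\eqref{distdefeq2}, to show that $\boldsymbol{E}$ is a symmetrized gradient in the absence of defects. The paper's route, on the other hand, generalizes more mechanically to the regularity-refined versions in Corollaries \ref{StrainCompatibilityB} and \ref{StrainCompatibilityBC}, where one must track which distribution class each intermediate object lands in.
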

\begin{proof}
Let $H_{ijk}\in \mathcal{D}' (\Omega)$ be such that
$H_{ijk}={\partial_j A_{ik}} - {\partial_i A_{jk}}$.
Then,  
${\partial_l H_{ijk}} - {\partial_k H_{ijl}}=0$ which, according to
Theorem \ref{PoincareOneForm}, implies the existence of $P_{ij} \in \mathcal{D}'(\Omega)$ such that
$H_{ijk} = {\partial_k P_{ij}}$.
Since $H_{ijk}=-H_{jik}$, or equivalently $\partial_k (P_{ij} + P_{ji})=0$, we can always construct a $P_{ij}$ such that $P_{ij} + P_{ji}=0$ and ${\partial_k P_{ij}}=H_{ijk}$.
Let $Q_{ij}=A_{ij}+P_{ij}$. Then
${\partial_k Q_{ij}} - {\partial_j Q_{ik}}=0$ and, as a consequence of Theorem \ref{PoincareOneForm}, there exist a $\boldsymbol{U} \in \mathcal{D}'(\Omega,\mathbb{R}^3)$, such that $Q_{ij} = {\partial_j U_{i}}$. The converse can be established using Equation \eqref{diff_sym}. 
\end{proof}

It should be noted that both Theorem \ref{PoincareOneForm} and Corollary \ref{StrainCompatibility} do not establish any regularity on distributions $U$ and $\boldsymbol{U}$, respectively, if we were to start with assuming certain regularity on distributions $\boldsymbol{V}$ and $\boldsymbol{A}$. For instance, if we start with an $\boldsymbol{A}$ in $\mathcal{B} (\Omega,\Sym)$ then what distribution space should $\boldsymbol{U}$ belong to? We will answer several such questions in Section \ref{rr}.

The next theorem proves the converse of \eqref{poin1}$_2$ for divergence free vector valued distributions on a contractible domain. Our proof, whose major part appears in Appendix \ref{poinapp}, is adapted from a more general proof given by Demailly \cite[p. 20]{demailly1997complex} within the framework of currents. Currents on open sets in $\mathbb{R}^3$ correspond to vector valued distributions, in a manner similar to the correspondence of smooth forms to smooth vector fields \cite{de2012differentiable} .
\begin{theorem}
\label{Poincare}
If $\Omega$ be a contractible open set of $\mathbb{R}^3$ and $\boldsymbol{T}\in \mathcal{D}'(\Omega,\mathbb{R}^3)$, such that $\Div \boldsymbol{T} =0$, then there exist a $\boldsymbol{S}\in \mathcal{D}'(\Omega,\mathbb{R}^3)$ such that $\boldsymbol{T} = \Curl \boldsymbol{S}$.
\begin{proof}
According to Lemma (\ref{CohomologoustoSmoothForm}) we have $\boldsymbol{u} \in C^{\infty}(\Omega , \mathbb{R}^3)$ and $\boldsymbol{S_1}\in \mathcal{D}'(\Omega,\mathbb{R}^3)$ such that 
$\boldsymbol{T_u} - \boldsymbol{T} = \Curl \boldsymbol{S_1}$.
We use $\Div(\Curl \boldsymbol{S_1}) = 0$ and $\Div \boldsymbol{T} = 0$ to obtain $\Div \boldsymbol{T_u} =0$ which implies  $\div \boldsymbol{u} = 0$. According to Poincare's lemma for smooth vector fields \cite{do2012differential}, there then exists  $\boldsymbol{\omega} \in C^\infty (\Omega,\mathbb{R}^3)$ such that $\curl \boldsymbol{\omega} = \boldsymbol{u}$.
Consequently,
$\boldsymbol{T}= \boldsymbol{T}_{\curl \boldsymbol{\omega}} - \Curl \boldsymbol{S_1}= \Curl \boldsymbol{T}_{\boldsymbol{\omega}} - \Curl \boldsymbol{S_1}= \Curl (\boldsymbol{T}_{\boldsymbol{\omega}}-\boldsymbol{S_1})$,
thereby proving our assertion. 
\end{proof}
\end{theorem}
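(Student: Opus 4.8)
The plan is to reduce the distributional statement to the classical Poincar\'e lemma for smooth vector fields by first replacing $\boldsymbol{T}$ with a cohomologous smooth field. It is convenient to regard $\boldsymbol{T}$ as a $2$-current (a distributional $2$-form) on $\Omega$, under the dictionary in which $\grad$, $\curl$, and $\div$ are the three incarnations of the exterior derivative acting on $0$-, $1$-, and $2$-forms in $\mathbb{R}^3$. In this language the hypothesis $\Div \boldsymbol{T}=0$ says precisely that the associated current is closed, while the desired conclusion $\boldsymbol{T}=\Curl \boldsymbol{S}$ says it is exact. The target is therefore a Poincar\'e lemma at the level of currents, and the natural route, following Demailly \cite{demailly1997complex}, is to show that every closed current is cohomologous to a closed \emph{smooth} form and then invoke the smooth theory.

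Concretely, I would first establish the regularization lemma: there exist a smooth field $\boldsymbol{u}\in C^\infty(\Omega,\mathbb{R}^3)$ and a distribution $\boldsymbol{S_1}\in \mathcal{D}'(\Omega,\mathbb{R}^3)$ with $\boldsymbol{T}_{\boldsymbol{u}}-\boldsymbol{T}=\Curl \boldsymbol{S_1}$, where $\boldsymbol{T}_{\boldsymbol{u}}$ is the distribution carried by $\boldsymbol{u}$. The mechanism is a smoothing operator $R$ together with a homotopy operator $A$ satisfying an operator identity of the form $R-\mathrm{Id}=dA+Ad$ (read in the $\curl$/$\div$ guise appropriate to the degree), so that applying it to $\boldsymbol{T}$ and using $\Div \boldsymbol{T}=0$ produces exactly the stated cohomology between $\boldsymbol{T}$ and a smooth field. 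The delicate point is that ordinary mollification by convolution with a fixed bump does \emph{not} act on $\mathcal{D}'(\Omega)$ for a proper open subset $\Omega$, because the mollification of a compactly supported test form need not remain compactly supported inside $\Omega$; one must instead use a regularization adapted to an exhaustion of $\Omega$ (or to the contracting homotopy), which is the construction I would import from Demailly.

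With the regularization lemma in hand the remainder follows the smooth theory and is short. Since $\Div(\Curl \boldsymbol{S_1})=0$ by the identity \eqref{poin1}, and $\Div \boldsymbol{T}=0$ by hypothesis, taking the divergence of $\boldsymbol{T}_{\boldsymbol{u}}-\boldsymbol{T}=\Curl \boldsymbol{S_1}$ gives $\Div \boldsymbol{T}_{\boldsymbol{u}}=0$, hence $\div \boldsymbol{u}=0$ pointwise because $\boldsymbol{u}$ is smooth. The classical Poincar\'e lemma for smooth fields on the contractible set $\Omega$ then yields $\boldsymbol{\omega}\in C^\infty(\Omega,\mathbb{R}^3)$ with $\curl \boldsymbol{\omega}=\boldsymbol{u}$, so that $\boldsymbol{T}_{\boldsymbol{u}}=\Curl \boldsymbol{T}_{\boldsymbol{\omega}}$. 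Substituting, $\boldsymbol{T}=\boldsymbol{T}_{\boldsymbol{u}}-\Curl \boldsymbol{S_1}=\Curl(\boldsymbol{T}_{\boldsymbol{\omega}}-\boldsymbol{S_1})$, and $\boldsymbol{S}=\boldsymbol{T}_{\boldsymbol{\omega}}-\boldsymbol{S_1}$ is the sought potential.

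I expect the entire difficulty to lie in the regularization lemma, that is, in making rigorous sense of ``every closed current on a contractible set is cohomologous to a closed smooth form.'' Contractibility, rather than the mere simple connectivity that sufficed for the curl-free Theorem \ref{PoincareOneForm}, is used precisely here: it supplies both the smooth Poincar\'e lemma for $\boldsymbol{u}$ and the global homotopy operator. The genuinely analytic obstacle is the boundary behaviour of the smoothing operator on the proper open set $\Omega$, which I would handle through the current-theoretic machinery rather than by a naive convolution; everything after that is bookkeeping with \eqref{poin1} and the smooth vector-potential construction.
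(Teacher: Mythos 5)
Your proposal is correct and follows essentially the same route as the paper: both reduce to the smooth case via the regularization lemma (Lemma \ref{CohomologoustoSmoothForm}, adapted from Demailly's current-theoretic smoothing/homotopy construction), deduce $\div \boldsymbol{u}=0$, invoke the classical Poincar\'e lemma on the contractible $\Omega$ to obtain $\boldsymbol{\omega}$ with $\curl\boldsymbol{\omega}=\boldsymbol{u}$, and conclude $\boldsymbol{T}=\Curl(\boldsymbol{T}_{\boldsymbol{\omega}}-\boldsymbol{S_1})$. You also correctly locate where contractibility and the boundary-adapted regularization enter, matching the paper's treatment in Appendix \ref{poinapp}.
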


\begin{rem}
The above results are well known in the context of smooth fields. In particular, in the language of differential forms \cite{do2012differential}, for any smooth form $\omega$, $d(d \omega)=0$, where $d$ denotes the exterior derivative. For differential forms of degree 0, 1 and 2, the exterior derivative corresponds to gradient, curl, and divergence operator, respectively.
Moreover, for any smooth p-form $\omega$ on a contractible domain such that $d \omega =0$, there exist a (p-1)-form $\omega_1$ such that $\omega = d\omega_1$. For a 1-form, this result holds even for simply connected domains. Our assertions extend these results to a more general situation where the components of the vector fields are distributions instead of smooth functions.
\end{rem}

\subsection{Regularity Results}
\label{rr}
In this section, we collect several results of the kind mentioned in Theorem \ref{PoincareOneForm} and Corollary \ref{StrainCompatibility}, but restrict ourselves to specific subsets of distributions. In Lemma \ref{RegularityLemma} below, we start with curl free vector valued distributions, defined in terms of elements from $\mathcal{B}(\Omega,\mathbb{R}^3)$, $\mathcal{C}(\Omega,\mathbb{R}^3)$, and $\mathcal{F}(\Omega,\mathbb{R}^3)$, and determine the precise form of distributions whose gradients are equal to the vector valued distributions. 

The spaces $\mathcal{B}(\Omega)$, $\mathcal{C}(\Omega)$, $\mathcal{B}(\Omega,\mathbb{R}^3)$, $\mathcal{C}(\Omega,\mathbb{R}^3)$ and $\mathcal{F}(\Omega,\mathbb{R}^3)$, used in the following, are as defined in Equations \eqref{distB}, \eqref{distC}, and \eqref{vectdist}.

\begin{lemma}
\label{RegularityLemma}
Let $\Omega \subset \mathbb{R}^3$ be a simply connected region  and $S \subset \Omega$ be a regular oriented surface such that $\partial S-\partial \Omega=\emptyset$. Then, for $\psi \in \mathcal{D}(\Omega)$ and $\boldsymbol{\phi}\in \mathcal{D}(\Omega,\mathbb{R}^3)$,

\noindent (a)  The condition $\Curl \boldsymbol{C}=\boldsymbol{0}$, with $\boldsymbol{C} \in \mathcal{C}(\Omega,\mathbb{R}^3)$, is equivalent to existence of a $U \in \mathcal{B}(\Omega)$ such that $\boldsymbol{C}=\nabla U$.

\noindent (b) The condition $\Curl \boldsymbol{T}=\boldsymbol{0}$, with $\boldsymbol{T} \in \mathcal{D}'(\Omega, \mathbb{R}^3)$ and $\boldsymbol{T}(\boldsymbol{\phi})=\boldsymbol{B}(\boldsymbol{\phi})+\boldsymbol{C}(\boldsymbol{\phi})$, where $\boldsymbol{B}\in \mathcal{B}(\Omega,\mathbb{R}^3)$ and $\boldsymbol{C}\in \mathcal{C}(\Omega,\mathbb{R}^3)$, is equivalent to existence of a $U \in \mathcal{B}(\Omega)$ such that $\boldsymbol{T}=\nabla U$.

\noindent (c) The condition $\Curl \boldsymbol{T}=\boldsymbol{0}$, with $\boldsymbol{T} \in \mathcal{D}'(\Omega, \mathbb{R}^3)$ and $\boldsymbol{T}(\boldsymbol{\phi})=\boldsymbol{B}(\boldsymbol{\phi})+\boldsymbol{C}(\boldsymbol{\phi})+\boldsymbol{F}(\boldsymbol{\phi})$, where $\boldsymbol{B}\in \mathcal{B}(\Omega,\mathbb{R}^3)$, $\boldsymbol{C}\in \mathcal{C}(\Omega,\mathbb{R}^3)$, and $\boldsymbol{F}\in \mathcal{F}(\Omega,\mathbb{R}^3)$, is equivalent to existence of a $U \in \mathcal{D}'(\Omega)$ such that $U(\psi)=B(\psi)+C(\psi)$, where $B\in \mathcal{B}(\Omega)$ and $C\in \mathcal{C}(\Omega)$, with  $\boldsymbol{T}=\nabla U$.
\end{lemma}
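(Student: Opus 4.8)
The plan is to treat the backward implications as immediate and to concentrate the work on the forward ones, which carry all the content. In every case the backward direction follows at once from $\Curl(\nabla U)=\boldsymbol{0}$ (Equation \eqref{poin1}), so that any $U$ of the asserted form produces a curl-free $\boldsymbol{T}$. I would also record at the outset that part (a) is exactly the instance $\boldsymbol{B}=\boldsymbol{0}$ of part (b): there $\boldsymbol{T}=\boldsymbol{C}$ and the conclusion $U\in\mathcal{B}(\Omega)$ coincides, so it suffices to establish (b) and (c), with (a) read off as the simplest realization of the mechanism.

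For part (b) I would first expand $\Curl\boldsymbol{T}=\Curl\boldsymbol{B}+\Curl\boldsymbol{C}=\boldsymbol{0}$ by Identities \ref{CurlLemma}(a),(b), the line integrals in \eqref{CurlS} dropping out since $\partial S-\partial\Omega=\emptyset$. The integrand is tested against $\boldsymbol{\phi}$ in the bulk, against $\boldsymbol{\phi}$ on $S$, and against $\partial\boldsymbol{\phi}/\partial n$ on $S$; by the vanishing arguments collected after Identities \ref{CurlLemma} these three families are independent, yielding $\curl\boldsymbol{b}=\boldsymbol{0}$ in $\Omega-S$, $\boldsymbol{c}\times\boldsymbol{n}=\boldsymbol{0}$ on $S$, and $\llbracket\boldsymbol{b}\rrbracket\times\boldsymbol{n}-\kappa\,\boldsymbol{c}\times\boldsymbol{n}+\curl_S\boldsymbol{c}=\boldsymbol{0}$ on $S$. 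The second relation forces $\boldsymbol{c}=c\,\boldsymbol{n}$ for a scalar $c$; since $\nabla_S\boldsymbol{n}$ is symmetric one has $\curl_S(c\boldsymbol{n})=\nabla_S c\times\boldsymbol{n}$, so the third relation collapses to $(\llbracket\boldsymbol{b}\rrbracket+\nabla_S c)\times\boldsymbol{n}=\boldsymbol{0}$, i.e. the tangential part of $\llbracket\boldsymbol{b}\rrbracket$ equals $-\nabla_S c$. I would then build the potential by solving $\nabla u^{\pm}=\boldsymbol{b}$ on $\Omega^{\pm}$ (possible because $\curl\boldsymbol{b}=\boldsymbol{0}$) and observing that $\nabla_S(u^{+}-u^{-})|_S$ equals the tangential part of $\llbracket\boldsymbol{b}\rrbracket$, namely $-\nabla_S c$, so that $\llbracket u\rrbracket$ and $-c$ differ at most by additive constants that can be absorbed into $u^{\pm}$. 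Setting $U:=T_u$ and computing $\nabla U$ by Identity \ref{GradientLemma}(a), Equation \eqref{gida}, then reproduces $\boldsymbol{T}$, giving $U\in\mathcal{B}(\Omega)$.

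For part (c) the decisive observation is that among $\Curl\boldsymbol{B}$, $\Curl\boldsymbol{C}$, $\Curl\boldsymbol{F}$ only the last, through the term $\int_S\langle\boldsymbol{f}\times\boldsymbol{n},\nabla(\nabla\boldsymbol{\phi})\boldsymbol{n}\otimes\boldsymbol{n}\rangle\,da$ in Equation \eqref{CurlF}, tests the second normal derivative of $\boldsymbol{\phi}$. Hence $\Curl\boldsymbol{T}=\boldsymbol{0}$ forces $\boldsymbol{f}\times\boldsymbol{n}=\boldsymbol{0}$, i.e. $\boldsymbol{f}=f\,\boldsymbol{n}$. This is precisely the $\mathcal{F}$-type datum generated by the gradient of a surface distribution: by Identity \ref{GradientLemma}(b), Equation \eqref{gidb} with $\partial S-\partial\Omega=\emptyset$, the distribution $C\in\mathcal{C}(\Omega)$ with density $c:=-f$ satisfies $\nabla C(\boldsymbol{\psi})=\int_S\langle\nabla_S c+\kappa c\boldsymbol{n},\boldsymbol{\psi}\rangle\,da+\int_S\langle f\boldsymbol{n},\partial\boldsymbol{\psi}/\partial n\rangle\,da$, whose $\mathcal{F}$-part matches $\boldsymbol{F}$ exactly. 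I would therefore set $\boldsymbol{T}':=\boldsymbol{T}-\nabla C$, which by this choice has vanishing $\mathcal{F}$-part and so lies in $\mathcal{B}(\Omega,\mathbb{R}^3)+\mathcal{C}(\Omega,\mathbb{R}^3)$, while $\Curl\boldsymbol{T}'=\Curl\boldsymbol{T}-\Curl(\nabla C)=\boldsymbol{0}$. Part (b) then supplies $B\in\mathcal{B}(\Omega)$ with $\boldsymbol{T}'=\nabla B$, whence $\boldsymbol{T}=\nabla(B+C)$ with $U=B+C$ of the asserted form.

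The step I expect to be most delicate is the construction of the bulk potential in (b): solving $\nabla u^{\pm}=\boldsymbol{b}$ globally on $\Omega^{\pm}$ presupposes that these subdomains are simply connected, which need not follow from simple connectedness of $\Omega$. To avoid this I would instead invoke Theorem \ref{PoincareOneForm} on all of $\Omega$ to obtain some $U\in\mathcal{D}'(\Omega)$ with $\boldsymbol{T}=\nabla U$, use the fact that a distribution with smooth gradient on the open set $\Omega-S$ is itself smooth there to identify $U|_{\Omega-S}$ with a piecewise smooth $u$, and then show that the remainder $R:=U-T_u$, which is supported on $S$, must vanish. Indeed, were $R$ to contain any normal-derivative (that is, $\mathcal{F}$-type or higher) component, $\nabla R$ would test $\partial\boldsymbol{\psi}/\partial n$ or $\nabla(\nabla\boldsymbol{\psi})\boldsymbol{n}\otimes\boldsymbol{n}$, contradicting $\nabla R=\boldsymbol{T}-\nabla T_u\in\mathcal{C}(\Omega,\mathbb{R}^3)$; and its residual $\mathcal{C}$-density would then be killed by the $\mathcal{F}$-part it generates in \eqref{gidb}. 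This simultaneously yields $\llbracket u\rrbracket=-c$ and confirms $U\in\mathcal{B}(\Omega)$, and the same support-and-order bookkeeping is what secures the exact decomposition $U=B+C$, with no spurious higher-order surface terms, in part (c).
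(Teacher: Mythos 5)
Your proposal is correct and, in its main line, is essentially the paper's own proof: in (b) you extract $\curl \boldsymbol{b}=\boldsymbol{0}$, $\boldsymbol{c}=c\boldsymbol{n}$, and the tangential jump condition from Identities \ref{CurlLemma}(a),(b), build a piecewise smooth potential with $\nabla u=\boldsymbol{b}$ and match $\llbracket u\rrbracket$ against $-c$; in (c) you subtract the gradient of the surface distribution with density $-f$ (exactly the paper's $U_1$, via Identity \eqref{gidb}) to kill the $\mathcal{F}$-part and reduce to (b). Two organizational differences are worth recording. First, you obtain (a) as the case $\boldsymbol{B}=\boldsymbol{0}$ of (b), whereas the paper proves (a) first (deducing that $c$ is constant from $\curl_S(c\boldsymbol{n})=\boldsymbol{0}$ and exhibiting $U$ as a piecewise constant bulk density) and then uses (a) inside (b) precisely to absorb the residual surface density $(\llbracket u\rrbracket+c)\boldsymbol{n}$; since your (b) is self-contained the inversion is harmless, but the "absorb additive constants into $u^{\pm}$" step is then carrying exactly the content that the paper's part (a) supplies, and deserves the same care on components of $\Omega^{\pm}$. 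Second, for the existence of the piecewise smooth potential the paper simply cites a kinematics reference, while your fallback route --- Theorem \ref{PoincareOneForm} on all of $\Omega$, regularity of a distribution with smooth gradient on $\Omega-S$, and elimination of the remainder supported on $S$ --- is genuinely different and more self-contained, at the price of invoking the structure of distributions supported on a hypersurface (and the local integrability of $u$ up to $S$), neither of which the paper develops. Incidentally, your sign $(\llbracket\boldsymbol{b}\rrbracket+\nabla_S c)\times\boldsymbol{n}=\boldsymbol{0}$, hence $\llbracket u\rrbracket+c$ locally constant, is the one consistent with $\curl_S(c\boldsymbol{n})=\nabla_S c\times\boldsymbol{n}$ and with the remainder $(\llbracket u\rrbracket+c)\boldsymbol{n}$ being curl-free; the opposite sign displayed in the paper's proof of (b) appears to be a typographical slip, so do not "correct" yours to match it.
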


\begin{proof} The existence of a $U \in \mathcal{D}'(\Omega)$ is guaranteed in all the above cases by Theorem \ref{PoincareOneForm}. Our goal is to however establish a stricter regularity on $U$ for the given conditions. That $\partial S-\partial \Omega=\emptyset$ implies that $S$ divides $\Omega$ into mutually exclusive open sets $\Omega^+$ and $\Omega^-$ such that $\partial \Omega^+ \cap \partial \Omega^- = S$ and $\Omega^+ \cup S \cup \Omega^-=\Omega$. 

\noindent (a) According to Identity \eqref{CurlS}, $\Curl \boldsymbol{C}=\boldsymbol{0}$ is equivalent to $\boldsymbol{c}\times \boldsymbol{n}=\boldsymbol{0}$ and $\curl_S \boldsymbol{c}=\boldsymbol{0}$. Hence $\boldsymbol{c}=c_0\boldsymbol{n}$, for a fixed $c_0 \in \mathbb{R}$. Then $U \in \mathcal{B}(\Omega)$ such that $U(\psi)=\int_{\Omega} b_0 \psi dv$, where $b_0 = c_0$ in $\Omega^-$ and $0$ in $\Omega^+$, satisfies $\boldsymbol{C} = \nabla U$.

\noindent(b) According to Identities \eqref{CurlB} and \eqref{CurlS}, $\Curl \boldsymbol{T}=\boldsymbol{0}$ implies $\boldsymbol{c}\times \boldsymbol{n}=\boldsymbol{0}$, which is equivalent to $\boldsymbol{c}=c\boldsymbol{n}$, $\curl \boldsymbol{b}=\boldsymbol{0}$ in $\Omega-S$, and $(\llbracket \boldsymbol{b} \rrbracket-\nabla_S c)\times \boldsymbol{n}=\boldsymbol{0}$ on $S$. The second equation is equivalent to existence of a ${u}:\Omega \to \mathbb{R}$ such that ${u}|_{\Omega^+} \in C^{\infty}(\Omega^+)$, ${u}|_{\Omega^-} \in C^{\infty}(\Omega^-)$, and $\nabla u=\boldsymbol{b}$ in $\Omega-S$, cf. \cite{Kinematics}. We introduce $U_1 \in \mathcal{B}(\Omega)$ such that $U_1(\phi)=\int_\Omega u \phi dv$. Then, using Equation \eqref{gida}, we get $\nabla U_1(\boldsymbol{\phi}) =\int_\Omega \langle \boldsymbol{b}, \boldsymbol{\phi }\rangle dv- \int_S \langle \llbracket u\rrbracket \boldsymbol{n},\boldsymbol{\phi} \rangle da$. Consequently, $(\boldsymbol{T}-\nabla U_1)=\int_S \langle (\llbracket u\rrbracket \boldsymbol{n} + \boldsymbol{c}),\boldsymbol{\phi} \rangle da$. Noting that $\Curl(\boldsymbol{T}-\nabla U_1)=\boldsymbol{0}$, in conjunction with part (a) of the lemma, we have a $U_2 \in \mathcal{B}(\Omega)$ such that $ \boldsymbol{T}-\nabla U_1 = \nabla U_2$.  The required $U\in \mathcal{B}(\Omega)$ is given by $U=U_1+U_2$.

\noindent (c) According to Identity \eqref{CurlF}, $\Curl \boldsymbol{T}=\boldsymbol{0}$ implies $\boldsymbol{f}\times \boldsymbol{n}=\boldsymbol{0}$ or, equivalently,  that $\boldsymbol{f}=f\boldsymbol{n}$, where $f \in C^{\infty}(S)$. We introduce $U_1 \in \mathcal{C}(\Omega)$ such that $U_1(\psi)=-\int_{S} f \psi da$. Then, using Equation \eqref{gidb}, we get $\nabla U_1(\boldsymbol{\phi})=-\int_S \langle (\nabla_S f + \kappa f \boldsymbol{n}),\boldsymbol{\phi}\rangle da+\int_S \langle f \boldsymbol{n},(\partial\boldsymbol{\phi}/{\partial n})\rangle da$. Consequently, $(\boldsymbol{T}-\nabla U_1)(\boldsymbol{\phi})=\boldsymbol{B}(\boldsymbol{\phi})+\boldsymbol{C}(\boldsymbol{\phi})+\int_S \langle (\kappa f \boldsymbol{n}+\nabla_S f),\boldsymbol{\phi}\rangle da$. Noting that $\Curl (\boldsymbol{T}-\nabla U_1)=\boldsymbol{0}$, in conjunction with part (a) of the lemma, we have a $U_2 \in \mathcal{B}(\Omega)$ such that $\nabla U_2=\boldsymbol{T}-\nabla U_1$. The required distribution is given by $U=U_1+U_2$.

\noindent The converse in all the above results follows from Equation \eqref{diff_sym} in a straightforward manner.
\end{proof}

In Corollaries \ref{StrainCompatibilityB} and \ref{StrainCompatibilityBC}, we revisit Corollary \ref{StrainCompatibility} in the light of the above lemma but assume $\boldsymbol{A}$ to be in terms of elements from $\mathcal{B} (\Omega,\Sym)$ and $\mathcal{C} (\Omega,\Sym)$ and determine the precise form of $\boldsymbol{U}$. These regularity results are motivated from their applicability in deriving strain compatibility relations in Section \ref{cdsf}.

\begin{cor}
\label{StrainCompatibilityB}
If $\Omega$ is a simply connected open subset of $\mathbb{R}^3$ and $\boldsymbol{A} \in \mathcal{B} (\Omega,\Sym)$, then
$\Curl \Curl \boldsymbol{A}=\boldsymbol{0}$
is equivalent to existence of a $\boldsymbol{U} \in \mathcal{B} (\Omega, \mathbb{R}^3)$, with $\boldsymbol{U}(\boldsymbol{\phi})=\int_\Omega \langle \boldsymbol{u},\boldsymbol{\phi}\rangle dv$, where $\boldsymbol{u}$ is a piecewise smooth vector field continuous across $S$, such that 
$\boldsymbol{A}= (1/2)({\nabla \boldsymbol{U} +(\nabla \boldsymbol{U})^T})$.
\end{cor}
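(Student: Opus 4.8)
The plan is to follow the two-step construction used in the proof of Corollary \ref{StrainCompatibility}, but to propagate the $\mathcal{B}$-regularity through each step with the help of Lemma \ref{RegularityLemma}, and finally to verify that the resulting potential has a \emph{continuous} density across $S$. As there, set $H_{ijk} = \partial_j A_{ik} - \partial_i A_{jk}$. Since each $A_{ik} \in \mathcal{B}(\Omega)$, Identity \eqref{gida} shows that $\partial_j A_{ik}$, and hence $H_{ijk}$, decomposes into a bulk part and a surface concentration on $S$: for fixed $(i,j)$ the distribution with components $H_{ijk}$ belongs to $\mathcal{B}(\Omega,\mathbb{R}^3) + \mathcal{C}(\Omega,\mathbb{R}^3)$, with bulk density $\partial_j a_{ik} - \partial_i a_{jk}$ and surface density $\llbracket a_{jk}\rrbracket n_i - \llbracket a_{ik}\rrbracket n_j$.

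The hypothesis $\Curl \Curl \boldsymbol{A} = \boldsymbol{0}$ yields, as in the proof of Corollary \ref{StrainCompatibility}, the integrability condition $\partial_l H_{ijk} - \partial_k H_{ijl} = 0$, that is $\Curl H_{ij\bullet} = \boldsymbol{0}$ for each fixed $(i,j)$. I would then apply Lemma \ref{RegularityLemma}(b) to each such vector-valued distribution to obtain $P_{ij} \in \mathcal{B}(\Omega)$ with $\partial_k P_{ij} = H_{ijk}$. Because $H_{ijk} = -H_{jik}$, we have $\partial_k(P_{ij} + P_{ji}) = 0$, so $P_{ij} + P_{ji}$ is a constant distribution; subtracting half of it (which remains in $\mathcal{B}(\Omega)$) lets us take $P_{ij}$ skew-symmetric without disturbing $\partial_k P_{ij} = H_{ijk}$.

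Next set $Q_{ij} = A_{ij} + P_{ij}$, so that for each fixed $i$ the distribution with components $Q_{ij}$ lies in $\mathcal{B}(\Omega,\mathbb{R}^3)$ and is \emph{purely a bulk distribution} (its surface part has cancelled), with $\partial_k Q_{ij} - \partial_j Q_{ik} = 0$, i.e. $\Curl Q_{i\bullet} = \boldsymbol{0}$. The crux is to run Lemma \ref{RegularityLemma}(b) in the degenerate case of vanishing surface density and to extract continuity of the potential's density. Tracing that argument, $\Curl Q_{i\bullet} = \boldsymbol{0}$ reduces via Identity \eqref{CurlB} to $\curl q_{i\bullet} = \boldsymbol{0}$ in $\Omega - S$ together with $\llbracket q_{i\bullet}\rrbracket \times \boldsymbol{n} = \boldsymbol{0}$ on $S$; the former gives a piecewise smooth $u_i$ with $\nabla u_i = q_{i\bullet}$ on $\Omega - S$, while the latter forces $\nabla_S \llbracket u_i \rrbracket = \boldsymbol{0}$, so $\llbracket u_i \rrbracket$ is a constant $c_0$ on (each component of) $S$. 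The surface-concentration remainder $Q_{i\bullet} - \nabla U_i^{(1)}$ then has density $c_0 \boldsymbol{n}$, and the potential $U_i^{(2)}$ supplied by part (a) carries density $c_0$ in $\Omega^-$ and $0$ in $\Omega^+$, i.e. exactly the opposite constant jump. Hence $U_i = U_i^{(1)} + U_i^{(2)} \in \mathcal{B}(\Omega)$ has a density $u_i$ that is continuous across $S$, with $\partial_j U_i = Q_{ij}$. This cancellation of the constant jump is the main obstacle and the only place where the purely bulk structure of $Q_{i\bullet}$ (inherited from $\boldsymbol{A} \in \mathcal{B}(\Omega,\Sym)$) is essential.

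Assembling $\boldsymbol{U} \in \mathcal{B}(\Omega,\mathbb{R}^3)$ with continuous density $\boldsymbol{u}$, the symmetry of $\boldsymbol{A}$ and the skew-symmetry of $P$ give $\tfrac{1}{2}(\partial_j U_i + \partial_i U_j) = \tfrac{1}{2}(Q_{ij} + Q_{ji}) = A_{ij}$, that is $\boldsymbol{A} = \tfrac{1}{2}(\nabla \boldsymbol{U} + (\nabla \boldsymbol{U})^T)$. The converse is immediate: if $\boldsymbol{A}$ is the symmetrized gradient of such a $\boldsymbol{U}$, then $\Curl \Curl \boldsymbol{A} = \boldsymbol{0}$ follows from the symmetry of second distributional derivatives, Equation \eqref{diff_sym}, exactly as in Corollary \ref{StrainCompatibility}.
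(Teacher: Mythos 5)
Your proposal is correct and follows the paper's proof essentially step for step: the same tensor $H_{ijk}=\partial_j A_{ik}-\partial_i A_{jk}$, the same appeal to Lemma \ref{RegularityLemma}(b) to produce $P_{ij}\in\mathcal{B}(\Omega)$, the same skew-symmetrization, and the same potential for $Q_{ij}=A_{ij}+P_{ij}$. The only point of divergence is the final continuity step: where you unwind the construction inside Lemma \ref{RegularityLemma}(b) to show that the locally constant jump $\llbracket u_i\rrbracket$ produced in its first stage is exactly cancelled by the part-(a) potential, the paper instead argues a posteriori that $(1/2)(\nabla\boldsymbol{U}+(\nabla\boldsymbol{U})^T)$ carries the surface concentration $(1/2)(\llbracket\boldsymbol{u}\rrbracket\otimes\boldsymbol{n}+\boldsymbol{n}\otimes\llbracket\boldsymbol{u}\rrbracket)$, which must vanish since $\boldsymbol{A}$ has no surface part, forcing $\llbracket\boldsymbol{u}\rrbracket=\boldsymbol{0}$ --- both routes are valid.
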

\begin{proof}
Let $H_{ijk}\in \mathcal{D}' (\Omega)$ be given as
$H_{ijk}={\partial_j A_{ik}} - {\partial_i A_{jk}}$. Then, on one hand,  Identity \eqref{gida} implies $H_{ijk}(\psi)=B(\psi)+C(\psi)$, for $\psi \in \mathcal{D}(\Omega)$,  where $B\in \mathcal{B}(\Omega)$ and $C\in \mathcal{C}(\Omega)$. On the other hand, we have ${\partial_l H_{ijk}} - {\partial_k H_{ijl}}=0$ which, according to
Lemma \ref{RegularityLemma}(b), posits the existence of $P_{ij} \in \mathcal{B}(\Omega)$ such that
$H_{ijk} = {\partial_k P_{ij}}$.
Moreover, since $H_{ijk}=-H_{jik}$, or equivalently $\partial_k (P_{ij} + P_{ji})=0$, we can always construct a $P_{ij}$ such that $P_{ij} + P_{ji}=0$ and ${\partial_k P_{ij}}=H_{ijk}$.
Let $Q_{ij}=A_{ij}+P_{ij}$. Then
${\partial_k Q_{ij}} - {\partial_j Q_{ik}}=0$ and, as a consequence of Lemma \ref{RegularityLemma}(a), there exist a $\boldsymbol{U} \in \mathcal{B}(\Omega,\mathbb{R}^3)$, such that $Q_{ij} = {\partial_j U_{i}}$. We can write $\boldsymbol{U}(\boldsymbol{\phi})=\int_\Omega \langle \boldsymbol{u},\boldsymbol{\phi}\rangle dv$, where $\boldsymbol{u}$ is a piecewise smooth vector field on $\Omega$. Using identity \eqref{gida} we have $((1/2)({\nabla \boldsymbol{U} +(\nabla \boldsymbol{U})^T})) (\boldsymbol{\psi})=\boldsymbol{B}_1 (\boldsymbol{\psi})+\int_S \langle ((1/2)(\llbracket\boldsymbol{u}\rrbracket\otimes \boldsymbol{n}+\boldsymbol{n}\otimes \llbracket\boldsymbol{u}\rrbracket)), \boldsymbol{\psi} \rangle da$, for all $\boldsymbol{\psi}\in \mathcal{D}(\Omega,\Lin)$, where $\boldsymbol{B}_1\in \mathcal{B}(\Omega,\Sym)$. Since $\boldsymbol{A}$ has no surface concentration, we require $\llbracket\boldsymbol{u}\rrbracket=\boldsymbol{0}$. The converse follows from Equation \eqref{diff_sym}.
\end{proof}

\begin{cor}
\label{StrainCompatibilityBC}
If $\Omega$ is a simply connected open subset of $\mathbb{R}^3$ and $\boldsymbol{A} \in \mathcal{D}' (\Omega,\Sym)$, which, for $\boldsymbol{\phi} \in \mathcal{D} (\Omega,\Lin)$, is given as $\boldsymbol{A}(\boldsymbol{\phi})=\boldsymbol{B}(\boldsymbol{\phi})+\boldsymbol{C}(\boldsymbol{\phi})$, where $\boldsymbol{B} \in \mathcal{B} (\Omega,\Sym)$ and $\boldsymbol{C} \in \mathcal{C} (\Omega,\Sym)$, then
$\Curl \Curl \boldsymbol{A}=\boldsymbol{0}$ 
is equivalent to existence of a $\boldsymbol{U} \in \mathcal{B} (\Omega, \mathbb{R}^3)$ such that 
$\boldsymbol{A}= (1/2)({\nabla \boldsymbol{U} +(\nabla \boldsymbol{U})^T})$.
\end{cor}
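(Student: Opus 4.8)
The plan is to follow the proof of Corollary~\ref{StrainCompatibilityB} almost verbatim, introducing the same auxiliary object $H_{ijk}={\partial_j A_{ik}}-{\partial_i A_{jk}}$, while carefully tracking the extra distributional contributions generated by the surface concentration $\boldsymbol{C}$. The decisive structural difference is that differentiating a $\mathcal{C}$-type distribution produces both a $\mathcal{C}$- and an $\mathcal{F}$-type term (Identity~\eqref{gidb}), so that the intermediate quantities are richer than in the purely bulk setting and the regularity upgrade must be routed through Lemma~\ref{RegularityLemma}(c) and then Lemma~\ref{RegularityLemma}(b), rather than through parts (b) and (a) as before.

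First I would determine the distributional class of $H_{ijk}$. Writing $A_{ik}=B_{ik}+C_{ik}$ and applying Identity~\eqref{gida} to $\partial_j B_{ik}$ (a $\mathcal{B}+\mathcal{C}$ contribution) and Identity~\eqref{gidb} to $\partial_j C_{ik}$ (a $\mathcal{C}+\mathcal{F}$ contribution, whose boundary line integral over $\partial S-\partial\Omega$ is absent since $\partial S-\partial\Omega=\emptyset$), I would conclude that for each fixed pair $(i,j)$ the vector $k\mapsto H_{ijk}$ belongs to $\mathcal{B}(\Omega,\mathbb{R}^3)+\mathcal{C}(\Omega,\mathbb{R}^3)+\mathcal{F}(\Omega,\mathbb{R}^3)$. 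Exactly as in Corollary~\ref{StrainCompatibility}, the hypothesis $\Curl\Curl\boldsymbol{A}=\boldsymbol{0}$, combined with the symmetry of $\boldsymbol{A}$ and of mixed distributional derivatives (Equation~\eqref{diff_sym}), yields ${\partial_l H_{ijk}}-{\partial_k H_{ijl}}=0$; that is, the vector $H_{ij\cdot}$ is curl free. Lemma~\ref{RegularityLemma}(c) then provides, for each $(i,j)$, a distribution $P_{ij}$ of the form $\mathcal{B}+\mathcal{C}$ with $H_{ijk}={\partial_k P_{ij}}$. Since $H_{ijk}=-H_{jik}$ forces ${\partial_k}(P_{ij}+P_{ji})=0$, the tensor $P_{ij}+P_{ji}$ is constant, and subtracting half of it (a smooth, hence $\mathcal{B}$, term) allows me to take $P_{ij}$ skew while preserving both its $\mathcal{B}+\mathcal{C}$ membership and the relation ${\partial_k P_{ij}}=H_{ijk}$.

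Next I set $Q_{ij}=A_{ij}+P_{ij}$, again a distribution of class $\mathcal{B}+\mathcal{C}$. A computation identical to the smooth case --- using ${\partial_k P_{ij}}=H_{ijk}$, ${\partial_j P_{ik}}=H_{ikj}$, and the symmetry $A_{jk}=A_{kj}$ --- gives ${\partial_k Q_{ij}}-{\partial_j Q_{ik}}=0$, so the vector $Q_{i\cdot}$ is curl free and of class $\mathcal{B}+\mathcal{C}$. Lemma~\ref{RegularityLemma}(b) then supplies a $\boldsymbol{U}\in\mathcal{B}(\Omega,\mathbb{R}^3)$ with $Q_{ij}={\partial_j U_i}$, and taking symmetric parts (recall $P$ is skew and $\boldsymbol{A}$ symmetric) gives $\boldsymbol{A}=(1/2)(\nabla\boldsymbol{U}+(\nabla\boldsymbol{U})^T)$. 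The essential contrast with Corollary~\ref{StrainCompatibilityB} is that here I do \emph{not} set $\llbracket\boldsymbol{u}\rrbracket=\boldsymbol{0}$: writing $\boldsymbol{U}(\boldsymbol{\phi})=\int_\Omega\langle\boldsymbol{u},\boldsymbol{\phi}\rangle dv$ and using Identity~\eqref{gida}, the surface part of $\sym\nabla\boldsymbol{U}$ is $-(1/2)(\llbracket\boldsymbol{u}\rrbracket\otimes\boldsymbol{n}+\boldsymbol{n}\otimes\llbracket\boldsymbol{u}\rrbracket)$, and it is precisely this term that must reproduce the prescribed $\boldsymbol{C}$ --- so the construction forces the surface density $\boldsymbol{c}$ into the special form $-(1/2)(\llbracket\boldsymbol{u}\rrbracket\otimes\boldsymbol{n}+\boldsymbol{n}\otimes\llbracket\boldsymbol{u}\rrbracket)$. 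The converse is immediate from Equation~\eqref{diff_sym}, since $(\Curl\Curl\boldsymbol{A})_{ij}=\epsilon_{ilk}\epsilon_{jmn}{\partial^2_{lm}A_{kn}}$ contracts the symmetric third derivatives of $\boldsymbol{U}$ against the alternating symbol and hence vanishes.

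I expect the principal difficulty to be organizational rather than conceptual: correctly accounting for the $\mathcal{B}$, $\mathcal{C}$, and $\mathcal{F}$ pieces spawned by differentiating $\boldsymbol{C}$, and confirming that no curve-supported ($\mathcal{H}$) terms intrude. This last point is exactly what the standing assumption $\partial S-\partial\Omega=\emptyset$ guarantees, as it annihilates the boundary line integral in Identity~\eqref{gidb} and so keeps $H_{ij\cdot}$ within the hypotheses of Lemma~\ref{RegularityLemma}(c). The only other step demanding care is checking that the skew-symmetrization of $P_{ij}$ leaves its $\mathcal{B}+\mathcal{C}$ class intact, which holds because the correction is a constant tensor.
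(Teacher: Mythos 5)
Your proposal is correct and follows essentially the same route as the paper's proof: the same auxiliary tensor $H_{ijk}=\partial_j A_{ik}-\partial_i A_{jk}$, the same identification of its class as $\mathcal{B}+\mathcal{C}+\mathcal{F}$ via Identities \eqref{gida} and \eqref{gidb}, and the same regularity upgrades through Lemma \ref{RegularityLemma}(c) and then (b). The extra observations you make (the constancy of $P_{ij}+P_{ji}$ justifying the skew-symmetrization, and the surface density being forced into the form $-(1/2)(\llbracket\boldsymbol{u}\rrbracket\otimes\boldsymbol{n}+\boldsymbol{n}\otimes\llbracket\boldsymbol{u}\rrbracket)$) are consistent with, and slightly more explicit than, the paper's argument.
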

\begin{proof}
Let $H_{ijk}\in \mathcal{D}' (\Omega)$ be given as
$H_{ijk}={\partial_j A_{ik}} - {\partial_i A_{jk}}$. Then, on one hand,  Identities \eqref{gida} and \eqref{gidb} imply that $H_{ijk}(\psi)=B(\psi)+C(\psi) + F(\psi)$, for $\psi \in \mathcal{D}(\Omega)$,  where $B\in \mathcal{B}(\Omega)$, $C\in \mathcal{C}(\Omega)$, and $F\in \mathcal{F}(\Omega)$. On the other hand, we have ${\partial_l H_{ijk}} - {\partial_k H_{ijl}}=0$ which, according to
Lemma \ref{RegularityLemma}(c), posits the existence of $P_{ij} \in \mathcal{D}'(\Omega)$ with $P_{ij}(\psi)=B(\psi)+C(\psi)$, for $\psi \in \mathcal{D}(\Omega)$, such that
$H_{ijk} = {\partial_k P_{ij}}$.
Moreover, since $H_{ijk}=-H_{jik}$, or equivalently $\partial_k (P_{ij} + P_{ji})=0$, we can always construct a $P_{ij}$ such that $P_{ij} + P_{ji}=0$ and ${\partial_k P_{ij}}=H_{ijk}$.
Let $Q_{ij}=A_{ij}+P_{ij}$. Then
${\partial_k Q_{ij}} - {\partial_j Q_{ik}}=0$ and, as a consequence of Lemma \ref{RegularityLemma}(b), there exist a $\boldsymbol{U} \in \mathcal{B}(\Omega,\mathbb{R}^3)$, such that $Q_{ij} = {\partial_j U_{i}}$. The converse follows from Equation \eqref{diff_sym}.
\end{proof}

\begin{rem}
It is pertinent here to note some existing literature on such regularity results. Amrouche and Girault \cite{amrouche1994decomposition} have shown that, given a distribution $U\in \mathcal{D}'(\Omega)$, $\nabla U \in H^{-m}(\Omega,\mathbb{R}^3)$ implies that $U\in H^{-m+1}(\Omega)$,  where $H^{-m}(\Omega)$, for non-negative integer $m$, is the dual of $H^{m}_0(\Omega)$, the latter being the usual Sobolev space. Amrouche et. al. \cite{amrouche2006characterizations} have generalised this result to show that, for a vector valued distribution $\boldsymbol{U}\in \mathcal{D}'(\Omega,\mathbb{R}^3)$, $(1/2(\nabla\boldsymbol{U}+(\nabla\boldsymbol{U})^T))\in H^{-m}(\Omega,\Sym)$ implies that $\boldsymbol{U}\in H^{-m+1}(\Omega,\mathbb{R}^3)$.
\end{rem}


\section{Compatibility of discontinuous strain fields} 
\label{cdsf}

This section is divided into two parts. In the first, we consider a piecewise smooth symmetric tensor field over a simply connected $\Omega$ and obtain the necessary and sufficient conditions for there to exist a piecewise smooth, but continuous, vector field over $\Omega$, the symmetric part of whose gradient is equal to the tensor field away from the surface of discontinuity. This is tantamount to seeking conditions on the  piecewise smooth strain tensor field, possibly discontinuous over a surface $S\subset\Omega$, such that it is obtainable from a piecewise smooth, but continuous, displacement vector field as the symmetric part of its gradient (away from $S$). This is the well known problem of strain compatibility. Whereas the conditions on a smooth strain field are routinely derived in books on elasticity, the jump conditions, necessary to enforce compatibility of strain across the surface of discontinuity, have been discussed rarely and only in specific forms \cite{markenscoff1996note, wheeler1999conditions}. These conditions, in their most general form, are obtained in Section \ref{pcs} below using the preceding mathematical infrastructure. We also reduce our general conditions to those available in literature. In the second part, in Section \ref{cs}, we revisit the problem of strain compatibility after relaxing the requirement for continuity of displacement field across $S$, thereby allowing the interface to be imperfectly bonded. As we shall see below, such a framework necessarily requires us to consider a strain field, concentrated over $S$, in addition to a piecewise smooth strain field in the bulk. 

\subsection{Perfectly Bonded Surface of Discontinuity}
\label{pcs}

Let $\boldsymbol{e}$ be a piecewise smooth symmetric tensor field on a simply connected domain $\Omega$, possibly discontinuous across a regular oriented surface $S \in \Omega$ with $\partial S - \partial \Omega = \emptyset$. Then, for a compactly supported smooth tensor valued field $\boldsymbol{\phi} \in \mathcal{D}({\Omega, \Lin})$, we can define a distribution $\boldsymbol{E} \in \mathcal{B}(\Omega,\Sym)$ such that
\begin{equation}
\boldsymbol{E}(\boldsymbol{\phi})= \int_{\Omega} \langle \boldsymbol{e},\boldsymbol{\phi} \rangle dv. \label{diststrain}
\end{equation}
Using Identity \eqref{CurlB}, we can write
\begin{equation}
\Curl \boldsymbol{E} (\boldsymbol{\phi})=\int_{\Omega} \langle \curl \boldsymbol{e}, \boldsymbol{\phi} \rangle dv +\int_S \langle (\llbracket \boldsymbol{e} \rrbracket \times \boldsymbol{n})^T, \boldsymbol{\phi}\rangle da.
\end{equation} 
Clearly, $\Curl \boldsymbol{E}$ is composed of distributions $\boldsymbol{B} \in \mathcal{B}(\Omega,\Lin)$ and $\boldsymbol{C}  \in \mathcal{C}(\Omega,\Lin)$ such that $\boldsymbol{B}(\boldsymbol{\phi})=\int_{\Omega} \langle \curl \boldsymbol{e},\boldsymbol{\phi} \rangle dv$ and $\boldsymbol{C}(\boldsymbol{\phi})=\int_S \langle (\llbracket \boldsymbol{e} \rrbracket\times \boldsymbol{n})^T , \boldsymbol{\phi} \rangle  da$.
According to Identities \eqref{CurlB} and \eqref{CurlS}, we have
\begin{equation}
\label{A}
\Curl \boldsymbol{B} (\boldsymbol{\phi}) = \int_{\Omega} \langle\curl \curl \boldsymbol{e}, \boldsymbol{\phi} \rangle dv + \int_S \langle (\llbracket \curl \boldsymbol{e}\rrbracket \times \boldsymbol{n} )^T, \boldsymbol{\phi} \rangle da ~\text{and} \nonumber
\end{equation}
\begin{equation}
\Curl \boldsymbol{C}(\boldsymbol{\phi})=\int_S \left( \left\langle -\kappa  \left((\llbracket \boldsymbol{e} \rrbracket\times \boldsymbol{n})^T\times \boldsymbol{n} \right)^T +\curl_S(\llbracket \boldsymbol{e} \rrbracket\times \boldsymbol{n})^T,\boldsymbol{\phi} \right\rangle + \left\langle \left( (\llbracket \boldsymbol{e} \rrbracket\times \boldsymbol{n})^T\times \boldsymbol{n}\right)^T,\frac{\partial \boldsymbol{\phi}}{\partial n} \right\rangle \right) da, \nonumber
\end{equation}
respectively, allowing us to obtain $\Curl \Curl \boldsymbol{E}= \Curl \boldsymbol{B}  + \Curl \boldsymbol{C}$.
The condition $\Curl \Curl \boldsymbol{E} (\boldsymbol{\phi})=\boldsymbol{0}$, for arbitrary $\boldsymbol{\phi}$, is therefore equivalent to requiring
 \begin{eqnarray}
 \label{BulkCompatibility}
 \curl \curl \boldsymbol{e}=\boldsymbol{0}~\text{in}~\Omega - S, \\
\label{I1}
\left( (\llbracket \boldsymbol{e} \rrbracket\times \boldsymbol{n})^T\times \boldsymbol{n}\right)^T=\boldsymbol{0} ~\text{on}~S,~ \text{and}
\\
\label{I2}
(\llbracket \curl \boldsymbol{e} \rrbracket \times \boldsymbol{n} )^T+\curl_S (\llbracket \boldsymbol{e} \rrbracket\times \boldsymbol{n})^T=\boldsymbol{0}~\text{on}~S.
\end{eqnarray}
On the other hand, according to Corollary \ref{StrainCompatibilityB}, $\Curl \Curl \boldsymbol{E}=  \boldsymbol{0}$, with $\boldsymbol{E}$ given by \eqref{diststrain}, is equivalent to existence of a $\boldsymbol{U} \in \mathcal{B}(\Omega, \mathbb{R}^3)$ such that $\boldsymbol{E}= (1/2)({\nabla \boldsymbol{U} +(\nabla \boldsymbol{U})^T})$, with $\boldsymbol{U}(\boldsymbol{\psi})=\int_\Omega \langle \boldsymbol{u},\boldsymbol{\psi}\rangle dv$, for $\boldsymbol{\psi}\in \mathcal{D}(\Omega,\mathbb{R}^3)$, where $\boldsymbol{u}$ is a piecewise smooth vector field continuous across $S$. 
Summarizing the above, we have
\begin{proposition}
For a piecewise smooth tensor valued field $\boldsymbol{e}$, on a simply connected domain $\Omega \subset \mathbb{R}^3$, allowed to be discontinuous across an oriented regular surface $S \subset \Omega$ with unit normal $\boldsymbol{n}$ and  $\partial S - \partial \Omega = \emptyset$, there exists a piecewise smooth vector valued field $\boldsymbol{u}$ on $\Omega$, continuous across $S$,  such that $\boldsymbol{e}=(1/2)(\nabla \boldsymbol{u}+(\nabla \boldsymbol{u})^T)$ on $\Omega - S$ if and only if $\boldsymbol{e}$ satisfies Equations \eqref{BulkCompatibility}, \eqref{I1}, and \eqref{I2}.
\end{proposition}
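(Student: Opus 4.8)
The plan is to route the whole equivalence through the single distributional condition $\Curl\Curl\boldsymbol{E}=\boldsymbol{0}$, where $\boldsymbol{E}\in\mathcal{B}(\Omega,\Sym)$ is the distribution attached to $\boldsymbol{e}$ by \eqref{diststrain}. This condition is a bridge: on one side it unpacks into the three pointwise/interfacial equations \eqref{BulkCompatibility}, \eqref{I1}, \eqref{I2}, and on the other side it is equivalent, via Corollary \ref{StrainCompatibilityB}, to the existence of the displacement $\boldsymbol{u}$. Establishing these two equivalences separately and concatenating them proves the proposition, so no machinery beyond the identities already assembled is needed.

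First I would carry out the computation of $\Curl\Curl\boldsymbol{E}$ set up immediately above the statement: writing $\Curl\boldsymbol{E}=\boldsymbol{B}+\boldsymbol{C}$ with bulk density $\curl\boldsymbol{e}$ and surface density $(\llbracket\boldsymbol{e}\rrbracket\times\boldsymbol{n})^T$, and applying Identities \eqref{CurlB} and \eqref{CurlS} once more, one obtains $\Curl\Curl\boldsymbol{E}(\boldsymbol{\phi})$ as a bulk integral against $\boldsymbol{\phi}$, a surface integral against $\boldsymbol{\phi}$, and a surface integral against $\partial\boldsymbol{\phi}/\partial n$. Exploiting the arbitrariness of $\boldsymbol{\phi}$ together with the separation principle of Section \ref{ui} — which isolates the coefficients of $\boldsymbol{\phi}$ and of $\partial\boldsymbol{\phi}/\partial n$ on $S$ independently — I would read off that $\Curl\Curl\boldsymbol{E}=\boldsymbol{0}$ forces the bulk coefficient to vanish (this is \eqref{BulkCompatibility}), the $\partial\boldsymbol{\phi}/\partial n$ coefficient to vanish (this is \eqref{I1}), and the $\boldsymbol{\phi}$ coefficient to vanish. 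That last coefficient is $(\llbracket\curl\boldsymbol{e}\rrbracket\times\boldsymbol{n})^T-\kappa((\llbracket\boldsymbol{e}\rrbracket\times\boldsymbol{n})^T\times\boldsymbol{n})^T+\curl_S(\llbracket\boldsymbol{e}\rrbracket\times\boldsymbol{n})^T$; once \eqref{I1} is in force the curvature term drops out and it reduces exactly to \eqref{I2}. The converse direction is immediate, as each of the three equations is precisely the vanishing of one of these independent coefficients.

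Second, I would invoke Corollary \ref{StrainCompatibilityB} directly: since $\boldsymbol{E}\in\mathcal{B}(\Omega,\Sym)$, the condition $\Curl\Curl\boldsymbol{E}=\boldsymbol{0}$ is equivalent to the existence of $\boldsymbol{U}\in\mathcal{B}(\Omega,\mathbb{R}^3)$, with piecewise smooth density $\boldsymbol{u}$ continuous across $S$, such that $\boldsymbol{E}=(1/2)(\nabla\boldsymbol{U}+(\nabla\boldsymbol{U})^T)$. It then remains to transfer this distributional identity to the pointwise statement $\boldsymbol{e}=(1/2)(\nabla\boldsymbol{u}+(\nabla\boldsymbol{u})^T)$ on $\Omega-S$. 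Applying Identity \eqref{gida} componentwise to $\sym\nabla\boldsymbol{U}$ produces a bulk term with density $\sym\nabla\boldsymbol{u}$ plus a surface term proportional to $\llbracket\boldsymbol{u}\rrbracket$; continuity of $\boldsymbol{u}$ across $S$ annihilates the surface term, and equating the purely bulk result to $\boldsymbol{E}$ and applying the separation principle once more yields $\boldsymbol{e}=\sym\nabla\boldsymbol{u}$ in $\Omega-S$.

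The main obstacle is the bookkeeping in the first step: the surface part of $\Curl\Curl\boldsymbol{E}$ pairs against both $\boldsymbol{\phi}$ and $\partial\boldsymbol{\phi}/\partial n$, and one must argue that these two coefficients vanish independently rather than only in some combination. This is exactly what the test-function constructions of Section \ref{ui} (choosing $\phi$ to vanish to a prescribed order on $S$) deliver, and it is what allows the raw $\boldsymbol{\phi}$-coefficient — with its curvature contribution — to be cleanly reorganized, after using \eqref{I1}, into the stated form \eqref{I2}. The transposition and iterated cross-product manipulations appearing in \eqref{I1} and \eqref{I2} are then routine once the tensorial $\times$ conventions fixed in the notation are applied consistently.
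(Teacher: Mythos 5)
Your proposal is correct and follows essentially the same route as the paper: computing $\Curl\Curl\boldsymbol{E}$ via the decomposition $\Curl\boldsymbol{E}=\boldsymbol{B}+\boldsymbol{C}$ and Identities \eqref{CurlB}, \eqref{CurlS}, separating the coefficients of $\boldsymbol{\phi}$ and $\partial\boldsymbol{\phi}/\partial n$ on $S$ to extract \eqref{BulkCompatibility}, \eqref{I1}, \eqref{I2}, and invoking Corollary \ref{StrainCompatibilityB} for the equivalence with the existence of a continuous piecewise smooth $\boldsymbol{u}$. Your explicit remark that the curvature term in the $\boldsymbol{\phi}$-coefficient is eliminated by \eqref{I1} before it reduces to \eqref{I2} is a detail the paper leaves implicit, and your final transfer from the distributional to the pointwise identity via \eqref{gida} is exactly the argument already embedded in the proof of Corollary \ref{StrainCompatibilityB}.
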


In the rest of this subsection, we will use a series of remarks to discuss compatibility equations \eqref{BulkCompatibility}-\eqref{I2}. In particular, we will reduce them to forms previously derived in literature \cite{markenscoff1996note, wheeler1999conditions}. as well as connect them to certain related results by Ciarlet and Mardare \cite{ciarlet2014intrinsic} on obtaining strain compatibility relations which are equivalent to prescribing displacement boundary conditions.

\begin{rem} (Planar strain field) Let $P \in \mathbb{R}^3$ be a plane spanned by $\boldsymbol{e}_1$ and $\boldsymbol{e}_2$, with $\boldsymbol{e}_3$ as the normal to the plane, where $(\boldsymbol{e}_1, \boldsymbol{e}_2, \boldsymbol{e}_3)$ form a fixed orthonormal basis for $\mathbb{R}^3$.  The intersection of surface $S$ with plane $P$ is a planar curve $C$ with unit tangent $\boldsymbol{t}$, in plane normal $\boldsymbol{n}$, and curvature $k$. We call a distribution $\boldsymbol{E} \in \mathcal{B}(\Omega,\Sym)$ planar if $E_{ij}=0$, for $i=3$ or $j=3$, and $\partial_{3} \boldsymbol{E}=\boldsymbol{0}$. For planar $\boldsymbol{E}$, $\Curl \Curl \boldsymbol{E}$ has only one non-zero component, $\langle \Curl \Curl \boldsymbol{E}, \boldsymbol{e}_3 \otimes \boldsymbol{e}_3 \rangle$. The condition $\Curl \Curl \boldsymbol{E}=\boldsymbol{0}$ therefore reduces to one scalar equation, 
$\partial^2_{11} E_{22} +\partial^2_{22} E_{11}  - 2 \partial^2_{12} E_{12}=0$. On the other hand, the three compatibility equations \eqref{BulkCompatibility}-\eqref{I2} are reduced to
\begin{eqnarray}
\label{planarcomp1}
\frac{\partial ^2 e_{22}}{\partial x_1 ^2}+\frac{\partial ^2 e_{11}}{\partial x_2 ^2}-2\frac{\partial ^2 e_{12}}{\partial x_1 \partial x_2}=0~\text{in}~P-C, \\
\llbracket e_{ij} \rrbracket t_i t_j=0~\text{on}~C,~\text{and} \label{planarcomp2}
\\
\left\llbracket \frac{\partial e_{ij}}{\partial x_p} \right\rrbracket n_i t_j t_p + \left\llbracket \left(\frac{\partial e_{ij}}{\partial x_p} - \frac{\partial e_{pj}}{\partial x_i}\right) \right\rrbracket n_i t_j t_p + k\left\llbracket e_{ij} \right\rrbracket n_i n_j =0~\text{on}~C, \label{planarcomp3}
\end{eqnarray}
respectively.
The interfacial compatibility conditions in this form  for planar strain fields have been obtained by Markenscoff \cite{markenscoff1996note} using the continuity of displacement and its tangential derivative along the interface curve.
\label{psf}
\end{rem}

\begin{rem} (Jump conditions in an orthogonal coordinate system)
We consider an orthogonal coordinate system $(\theta_1,\theta_2,\theta_3) \in \mathbb{R}^3$, in neighborhood of $S$, and define $\boldsymbol{f}_i={\partial \boldsymbol{x}}/{\partial \theta_i}$, $f_{ii}=\langle \boldsymbol{f}_i,\boldsymbol{f}_i \rangle$ (no summation), and $\boldsymbol{\varepsilon}_i={\boldsymbol{f}_i}/{\sqrt{f_{ii}}}$ (no summation) such that $\boldsymbol{\varepsilon}_3=\boldsymbol{n}$, $\boldsymbol{\varepsilon}_1\times \boldsymbol{\varepsilon}_2=\boldsymbol{\varepsilon}_3$, and $\langle \boldsymbol{\varepsilon}_1, \boldsymbol{\varepsilon}_2\rangle=0$. We introduce $k_{\alpha}={\langle {\partial \boldsymbol{\varepsilon}_3}/{\partial \theta_\alpha}, \boldsymbol{\varepsilon}_\alpha \rangle}/{\sqrt{f_{\alpha \alpha}}}$ (no summation). The components of strain tensor $\boldsymbol{e}$ with respect to $\boldsymbol{\varepsilon}_i$-basis are $\epsilon_{ii}=\langle \boldsymbol{e}, \boldsymbol{\varepsilon}_i\otimes \boldsymbol{\varepsilon}_i \rangle$ (no summation) and $\epsilon_{ij}=2\langle \boldsymbol{e}, \boldsymbol{\varepsilon}_i\otimes \boldsymbol{\varepsilon}_j \rangle$ for $i\neq j$ (no summation). 
The jump condition \eqref{I1} is then equivalent to $\llbracket \epsilon_{\alpha \beta}\rrbracket=0$ on $S$.
On the other hand, the jump condition \eqref{I2} is equivalent to $\langle \llbracket( \curl \boldsymbol{e}  \times \boldsymbol{n} )^T+\curl_S ( \boldsymbol{e} \times \boldsymbol{n})^T\rrbracket, \boldsymbol{\varepsilon}_\beta \otimes \boldsymbol{\varepsilon}_\alpha \rangle=0$ which, using the identity
\begin{equation}
\label{curl_curvilinear}
\langle \curl \boldsymbol{e}, (\boldsymbol{w}\times \boldsymbol{v})\otimes \boldsymbol{u} \rangle= \langle \nabla \boldsymbol{e}, \boldsymbol{u}\otimes \boldsymbol{v} \otimes \boldsymbol{w} \rangle - \langle \nabla \boldsymbol{e}, \boldsymbol{u}\otimes \boldsymbol{w} \otimes \boldsymbol{v} \rangle,
\end{equation}
where $\boldsymbol{u} \in \mathbb{R}^3$, $\boldsymbol{v} \in \mathbb{R}^3$, and $\boldsymbol{w} \in \mathbb{R}^3$ are fixed, can be rewritten as
\begin{equation}
\llbracket \langle \nabla \boldsymbol{e}, \boldsymbol{f}_\alpha\otimes \boldsymbol{n} \otimes \boldsymbol{f}_\beta \rangle+\langle \nabla \boldsymbol{e}, \boldsymbol{f}_\beta\otimes \boldsymbol{n} \otimes \boldsymbol{f}_\alpha \rangle - \langle \nabla \boldsymbol{e}, \boldsymbol{f}_\alpha\otimes  \boldsymbol{f}_\beta \otimes \boldsymbol{n} \rangle - \langle \nabla_S \boldsymbol{n}, \boldsymbol{f}_\alpha\otimes  \boldsymbol{f}_\beta\rangle \langle \boldsymbol{e}, \boldsymbol{n}\otimes \boldsymbol{n} \rangle\rrbracket =0.
\end{equation}
The above equation, for different values of $\alpha$ and$\beta$, yields
\begin{eqnarray}
 \frac{1}{\sqrt{f_{11}}} \left\llbracket\frac{\partial \epsilon_{13}}{\partial \theta_1}\right\rrbracket-\frac{1}{\sqrt{f_{33}}} \left\llbracket\frac{\partial \epsilon_{11}}{\partial \theta_3}\right\rrbracket+\frac{1}{\sqrt{f_{11}f_{22}}} \frac{\partial \sqrt{f_{11}}}{\partial \theta_2} \llbracket\epsilon_{23}\rrbracket+ \frac{1}{2{f_{33}} \sqrt{f_{11}}} \frac{\partial {f_{33}}}{\partial \theta_1} \llbracket\epsilon_{13}\rrbracket -k_1 \llbracket\epsilon_{33}\rrbracket =0,\\
\frac{1}{\sqrt{f_{22}}} \left\llbracket\frac{\partial \epsilon_{23}}{\partial \theta_2}\right\rrbracket-\frac{1}{\sqrt{f_{33}}} \left\llbracket\frac{\partial \epsilon_{22}}{\partial \theta_3}\right\rrbracket+\frac{1}{\sqrt{f_{11}f_{22}}} \frac{\partial \sqrt{f_{22}}}{\partial \theta_1} \llbracket\epsilon_{13}\rrbracket+ \frac{1}{2{f_{33}} \sqrt{f_{22}}} \frac{\partial \sqrt{f_{33}}}{\partial \theta_2} \llbracket\epsilon_{23}\rrbracket -k_2 \llbracket\epsilon_{33}\rrbracket =0,
\\
 \frac{1}{\sqrt{f_{33}}} \left \llbracket\frac{\partial \epsilon_{12}}{\partial \theta_3}\right \rrbracket-\frac{1}{\sqrt{f_{22}}} \left \llbracket\frac{\partial \epsilon_{13}}{\partial \theta_2}\right \rrbracket-\frac{1}{\sqrt{f_{11}}} \left \llbracket\frac{\partial \epsilon_{23}}{\partial \theta_1}\right \rrbracket +\left( \frac{1}{\sqrt{f_{11}f_{22}}} \frac{\partial \sqrt{f_{11}}}{\partial \theta_2}  - \frac{1}{\sqrt{f_{22}f_{33}}} \frac{\partial \sqrt{f_{33}}}{\partial \theta_2} \right)  \llbracket\epsilon_{13}  \rrbracket \nonumber\\ 
  + \left( \frac{1}{\sqrt{f_{11} f_{22}}} \frac{\partial \sqrt{f_{22}}}{\partial \theta_1}- \frac{1}{\sqrt{f_{11}f_{33}}} \frac{\partial \sqrt{f_{33}}}{\partial \theta_1} \right) \llbracket\epsilon_{23}\rrbracket=0.
\end{eqnarray}
The interfacial compatibility conditions for a piecewise continuous strain field have been obtained in this form by Wheeler and Luo \cite{wheeler1999conditions} by considering the continuity of tangential strain and curvature across the interface. We note that the discontinuity in surface derivative of a field is same as the surface derivative of the discontinuity in the field, for instance $\llbracket {\partial \epsilon_{13}}/{\partial \theta_2} \rrbracket= {\partial \llbracket\epsilon_{13}\rrbracket}/{\partial \theta_2} $. This is however not the case with the discontinuity in normal derivative of a field. 
\end{rem}

\begin{rem} (Jump conditions in a curvilinear coordinate system) Let $(y_1, y_2, y_3) \in \mathbb{R}^3$ be a local parametrization of neighborhood of $S$ such that $S$ is given by $y_3 = 0$. The position vector in such neighborhoods can be written as $\boldsymbol{x}(y_1, y_2, y_3)= \boldsymbol{x}(y_1, y_2, 0) + y_3 \boldsymbol{n}$. The curvilinear covariant basis is defined by $\boldsymbol{g}_i={\partial \boldsymbol{x}}/{\partial y_i}$. The contravariant basis, $\boldsymbol{g}^i$, is defined by $\langle \boldsymbol{g}^i,\boldsymbol{g}_j \rangle=\delta^i_j$. Clearly, both $(\boldsymbol{g}_1,\boldsymbol{g}_2)$ and $(\boldsymbol{g}^1,\boldsymbol{g}^2)$, evaluated at $y_3=0$, can form a basis of the tangent plane on $S$. Also, $\boldsymbol{g}_3=\boldsymbol{g}^3=\boldsymbol{n}$ for $y_3=0$. The Christoffel symbols induced henceforth are given by $\Gamma^k_{ij}=\langle {\partial \boldsymbol{g}_i}/{\partial y_j},\boldsymbol{g}^k\rangle$. Moreover, we choose the parametrization such that $\boldsymbol{g}_1\times \boldsymbol{g}_2=|\boldsymbol{g}_1\times \boldsymbol{g}_2|\boldsymbol{n}$, $\boldsymbol{n}\times \boldsymbol{g}^1=({|\boldsymbol{g}^1|}/{|\boldsymbol{g}_2|}) \boldsymbol{g}_2$, and $\boldsymbol{n}\times \boldsymbol{g}^2=-({|\boldsymbol{g}^2|}/{|\boldsymbol{g}_1|}) \boldsymbol{g}_1$.
Let $h_{ij}$ be the covariant components of the strain field $\boldsymbol{e}$ with respect to the defined covariant basis, i.e., we can write  $\boldsymbol{e}= h_{ij}(\boldsymbol{g}^i\otimes \boldsymbol{g}^j)$ in the vicinity of $S$. We have ${\partial \boldsymbol{e}}/{\partial y_k}=h_{ij || k}(\boldsymbol{g}^i\otimes \boldsymbol{g}^j)$, where 
$h_{ij || k} = {\partial h_{ij}}/{\partial y_k} - \Gamma^l_{ki}h_{lj} - \Gamma^l_{kj}h_{il}$ is the covariant derivative.
The jump condition \eqref{I2} is equivalent to 
$( \llbracket \curl \boldsymbol{e} \rrbracket  \times \boldsymbol{n} )^T+\llbracket \curl_S \left(( \boldsymbol{e} \times \boldsymbol{n})^T\right)\rrbracket, \boldsymbol{g}^\beta\otimes \boldsymbol{g}^\alpha \rangle=0$ for all $\alpha$, $\beta$, which on using Equation \eqref{curl_curvilinear} takes the form
\begin{equation}
 \langle \llbracket \nabla \boldsymbol{e}\rrbracket, \boldsymbol{g}_\alpha\otimes \boldsymbol{n} \otimes \boldsymbol{g}_\beta \rangle+\langle \llbracket \nabla \boldsymbol{e}\rrbracket, \boldsymbol{g}_\beta\otimes \boldsymbol{n} \otimes \boldsymbol{g}_\alpha \rangle - \langle \llbracket \nabla \boldsymbol{e}\rrbracket, \boldsymbol{g}_\alpha\otimes  \boldsymbol{g}_\beta \otimes \boldsymbol{n} \rangle - \langle \nabla_S \boldsymbol{n}, \boldsymbol{g}_\alpha\otimes  \boldsymbol{g}_\beta\rangle  \langle \llbracket \boldsymbol{e}\rrbracket, \boldsymbol{n}\otimes \boldsymbol{n} \rangle  =0.
\end{equation}
The interfacial compatibility conditions \eqref{I1} and \eqref{I2}, consequently, can be written as
\begin{equation}
\label{I2Ciarlet}
 \llbracket h_{\alpha \beta}\rrbracket =0 ~\text{and}~ \llbracket h_{\alpha 3 || \beta}\rrbracket + \llbracket h_{\beta 3 || \alpha}\rrbracket -  \llbracket h_{\alpha \beta || 3}\rrbracket + \Gamma^3_{\alpha\beta}  \llbracket h_{33}\rrbracket  =0,
\end{equation}
respectively. 
\label{curvijump}
\end{rem}

\begin{rem} (Compatibility conditions for displacement boundary conditions)
 We call a smooth strain field $\boldsymbol{e}$ in $\Omega$ to be compatible with the displacement boundary condition if and only if there exists a smooth vector valued field $\boldsymbol{u}$ in $\Omega$ such that $\boldsymbol{u}|_{\partial \Omega_1}=\boldsymbol{0}$ and $\boldsymbol{e}=(1/2)({\nabla \boldsymbol{u} +\boldsymbol{u}^T})$, where $\partial \Omega_1$ is the part of the boundary $\partial \Omega$ where displacement field is specified. Towards this end, we consider domain $\Omega$ to be contained within a larger domain ${\Omega}_l \subset \mathbb{R}^3$ such that   $\partial \Omega_1= \partial \Omega \cap \partial (\Omega_l - \Omega)$. Clearly, the trivial strain field $\boldsymbol{e} =\boldsymbol{0}$ in $\Omega_l - \Omega$ is compatible with the boundary condition $\boldsymbol{u}=\boldsymbol{0}$ on $\partial \Omega_1$. We consider a symmetric tensor valued distribution $\boldsymbol{E}\in \mathcal{B}(\Omega_l ,\Sym)$ with bulk density $\boldsymbol{e}$ in $\Omega$ and $\boldsymbol{0}$ in $\Omega_l - \Omega$. The compatibility of $\boldsymbol{e}$ with $\boldsymbol{u}|_{\partial \Omega_1}=\boldsymbol{0}$ is then ensured by relation \eqref{BulkCompatibility} in $\Omega$ and the following boundary conditions, as deduced from Equations \eqref{I1} and \eqref{I2},
 \begin{equation}
\left(( \boldsymbol{e} \times \boldsymbol{n})^T\times \boldsymbol{n}\right)^T=\boldsymbol{0} ~\text{on}~\partial \Omega_1 ~\text{and}
\end{equation}
\begin{equation}
( \curl \boldsymbol{e}  \times \boldsymbol{n} )^T+\curl_S ( \boldsymbol{e} \times \boldsymbol{n})^T=\boldsymbol{0}~\text{on}~\partial \Omega_1.
\end{equation}
The above represent conditions on strain which are equivalent to imposing homogeneous displacement boundary condition on some part of the boundary. We will consider the conditions for heterogeneous displacement boundary condition in Remark \ref{inhomobc}.
In terms of the curvilinear coordinate system, as introduced in Remark \ref{curvijump}, the interfacial conditions become
\begin{equation}
h_{\alpha \beta} =0~\text{and}~h_{\alpha 3 || \beta} +  h_{\beta 3 || \alpha} -   h_{\alpha \beta || 3} + \Gamma^3_{\alpha\beta}   h_{33}  =0.
\end{equation}
These relations have been previously obtained by Ciarlet and Mardare \cite{ciarlet2014intrinsic} by considering the linearized form of the first and second fundamental forms induced by the strain on the boundary. That these boundary conditions can be obtained for strain tensor belonging to weaker functional spaces has also been established in the same paper. 
\label{ccdbc}
\end{rem}

\subsection{Imperfectly Bonded Surface of Discontinuity}
\label{cs}

Let $\boldsymbol{e}_B$ be a piecewise smooth symmetric tensor field on a simply connected domain $\Omega$, possibly discontinuous across a regular oriented surface $S \in \Omega$ with $\partial S - \partial \Omega = \emptyset$, and let $\boldsymbol{e}_S$ be a smooth symmetric tensor field on $S$. Then, for a compactly supported smooth tensor valued field $\boldsymbol{\phi} \in \mathcal{D}({\Omega, \Lin})$, we can define a distribution $\boldsymbol{E} \in \mathcal{B}(\Omega,\Sym)$ such that
\begin{equation}
\boldsymbol{E}(\boldsymbol{\phi})= \int_{\Omega} \langle \boldsymbol{e}_B,\boldsymbol{\phi} \rangle dv + \int_{S} \langle \boldsymbol{e}_S,\boldsymbol{\phi} \rangle da. \label{diststraincs}
\end{equation}
Clearly, $\boldsymbol{E}$ is composed of distributions $\boldsymbol{E}_B \in \mathcal{B}(\Omega,\Sym)$ and $\boldsymbol{E}_S  \in \mathcal{C}(\Omega,\Sym)$ such that $\boldsymbol{E}_B(\boldsymbol{\phi})=\int_{\Omega} \langle\boldsymbol{e}_B,\boldsymbol{\phi} \rangle dv$ and $\boldsymbol{E}_S(\boldsymbol{\phi})=\int_S \langle \boldsymbol{e}_S, \boldsymbol{\phi} \rangle  da$. Using the results from the beginning of Section \ref{pcs}, we can write
\begin{eqnarray}
\Curl \Curl \boldsymbol{E}_B (\boldsymbol{\phi})= \int_{\Omega} \langle \curl \curl \boldsymbol{e}_B, \boldsymbol{\phi} \rangle dv + \int_S \Big( \Big\langle \big( (\llbracket \curl \boldsymbol{e}_B \rrbracket \times \boldsymbol{n} )^T -\kappa \left((\llbracket \boldsymbol{e}_B \rrbracket\times \boldsymbol{n})^T\times \boldsymbol{n}\right)^T \nonumber \\ +\curl_S (\llbracket \boldsymbol{e}_B \rrbracket\times \boldsymbol{n})^T \big),\boldsymbol{\phi} \Big\rangle+\Big\langle \left((\llbracket \boldsymbol{e}_B \rrbracket\times \boldsymbol{n})^T\times \boldsymbol{n}\right)^T,\frac{\partial \boldsymbol{\phi}}{\partial n} \Big\rangle \Big)da. 
\end{eqnarray}
On other other hand, Identity \eqref{CurlS} implies
\begin{equation}
\Curl \boldsymbol{E}_S (\boldsymbol{\phi})=\int_S  \left\langle -\kappa (\boldsymbol{e}_S\times \boldsymbol{n})^T+ \curl_S \boldsymbol{e}_S,\boldsymbol{\phi} \right\rangle da + \int_S \left\langle (\boldsymbol{e}_S \times \boldsymbol{n})^T ,\frac{\partial \boldsymbol{\phi}}{\partial n}   \right\rangle da,
\end{equation}
which, on using Identities \eqref{CurlS} and \eqref{CurlF}, yields $\Curl \Curl \boldsymbol{E}_S (\boldsymbol{\phi})=$
\begin{eqnarray}
\int_S  \Big\langle \Big( \kappa^2 \left((\boldsymbol{e}_S\times \boldsymbol{n})^T \times \boldsymbol{n} \right)^T- \kappa (\curl_S \boldsymbol{e}_S\times \boldsymbol{n})^T  - \curl_S (\kappa (\boldsymbol{e}_S\times \boldsymbol{n})^T)+ \curl_S \curl_S \boldsymbol{e}_S\nonumber \\
 - \div_S \left(\nabla_S \boldsymbol{n} \times (\boldsymbol{e}_S\times \boldsymbol{n})^T \right) \Big),\boldsymbol{\phi} \Big\rangle da  + \int_S \Big\langle \Big( -2\kappa \left((\boldsymbol{e}_S \times \boldsymbol{n})^T \times \boldsymbol{n}\right)^T  +(\curl_S \boldsymbol{e}_S\times \boldsymbol{n})^T  \nonumber \\
  + \curl_S (\boldsymbol{e}_S\times \boldsymbol{n})^T \Big),\frac{\partial \boldsymbol{\phi}}{\partial n} \Big\rangle da   +\int_S \left\langle \left((\boldsymbol{e}_S\times \boldsymbol{n})^T \times \boldsymbol{n} \right)^T, \left(\nabla(\nabla \boldsymbol{\phi})\boldsymbol{n}\otimes\boldsymbol{n}\right)    \right\rangle da.
\end{eqnarray}
The condition $\Curl \Curl \boldsymbol{E} (\boldsymbol{\phi})=\boldsymbol{0}$, for arbitrary $\boldsymbol{\phi}$, is therefore equivalent to requiring
 \begin{eqnarray}
 \curl \curl \boldsymbol{e}_B=\boldsymbol{0}~\text{in}~\Omega - S,  \label{BulkCompatibilityC} \\
\left(( \boldsymbol{e}_S \times \boldsymbol{n})^T\times \boldsymbol{n}\right)^T=\boldsymbol{0}~\text{on}~S, \label{I1C} \\
(\curl_S \boldsymbol{e}_S \times \boldsymbol{n})^T  + \curl_S (\boldsymbol{e}_S\times \boldsymbol{n})^T + \left((\llbracket \boldsymbol{e}_B \rrbracket\times \boldsymbol{n})^T\times \boldsymbol{n}\right)^T=\boldsymbol{0}~\text{on}~S,~\text{and} \label{I2C}
\\
(\llbracket \curl \boldsymbol{e}_B \rrbracket \times \boldsymbol{n} )^T+\curl_S (\llbracket \boldsymbol{e}_B \rrbracket\times \boldsymbol{n})^T + ((\boldsymbol{e}_S\times \boldsymbol{n})^T\times \nabla_S \kappa)^T \nonumber \\ + \curl_S \curl_S \boldsymbol{e}_S - \div_S \left(\nabla_S \boldsymbol{n} \times (\boldsymbol{e}_S\times \boldsymbol{n})^T\right)=\boldsymbol{0}~\text{on}~S, \label{I3C}
\end{eqnarray}
where the identity $\curl_S (\kappa \boldsymbol{e})=\kappa \curl_S\boldsymbol{e}- (\boldsymbol{e}\times \nabla_S \kappa)^T$ has been used to obtain Equation \eqref{I3C}.
On the other hand, according to Corollary \ref{StrainCompatibilityBC}, $\Curl \Curl \boldsymbol{E}=  \boldsymbol{0}$, with $\boldsymbol{E}$ given by \eqref{diststraincs}, is equivalent to existence of a $\boldsymbol{U} \in \mathcal{B}(\Omega, \mathbb{R}^3)$ such that $\boldsymbol{E}= (1/2)({\nabla \boldsymbol{U} +(\nabla \boldsymbol{U})^T})$, with $\boldsymbol{U}(\boldsymbol{\psi})=\int_\Omega \langle \boldsymbol{u},\boldsymbol{\psi}\rangle dv$, for $\boldsymbol{\psi}\in \mathcal{D}(\Omega,\mathbb{R}^3)$, where $\boldsymbol{u}$ is a piecewise smooth vector field on $\Omega$, possibly discontinuous across $S$. 
Summarizing the above, we have
\begin{proposition}
For a piecewise smooth tensor valued field $\boldsymbol{e}_B$ on a simply connected domain $\Omega \subset \mathbb{R}^3$, allowed to be discontinuous across an oriented regular surface $S \subset \Omega$ with unit normal $\boldsymbol{n}$ and  $\partial S - \partial \Omega = \emptyset$, and a smooth tensor valued field $\boldsymbol{e}_S$ on $S$, there exists a piecewise smooth vector valued field $\boldsymbol{u}$ on $\Omega$  such that $\boldsymbol{e}_B=(1/2)(\nabla \boldsymbol{u}+(\nabla \boldsymbol{u})^T)$ in $\Omega - S$ and $\boldsymbol{e}_S=-(1/2) (\llbracket \boldsymbol{u} \rrbracket\otimes \boldsymbol{n} + \boldsymbol{n} \otimes \llbracket \boldsymbol{u} \rrbracket)$ on $S$ if and only if $\boldsymbol{e}_B$ and $\boldsymbol{e}_S$ satisfy Equations \eqref{BulkCompatibilityC}, \eqref{I1C}, \eqref{I2C}, and \eqref{I3C}.
\end{proposition}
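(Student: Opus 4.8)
The plan is to package the bulk field $\boldsymbol{e}_B$ and the interfacial field $\boldsymbol{e}_S$ into the single symmetric tensor distribution $\boldsymbol{E}=\boldsymbol{E}_B+\boldsymbol{E}_S$ of Equation \eqref{diststraincs}, with $\boldsymbol{E}_B\in\mathcal{B}(\Omega,\Sym)$ and $\boldsymbol{E}_S\in\mathcal{C}(\Omega,\Sym)$, and then to read the proposition off the expression for $\Curl\Curl\boldsymbol{E}$ computed just above the statement together with Corollary \ref{StrainCompatibilityBC}. First I would invoke that computation: applying Identities \eqref{CurlB}, \eqref{CurlS}, and \eqref{CurlF} to $\boldsymbol{E}_B$ and $\boldsymbol{E}_S$ expresses $\Curl\Curl\boldsymbol{E}(\boldsymbol{\phi})$ as a bulk integral against $\boldsymbol{\phi}$ plus three surface integrals tested against $\boldsymbol{\phi}$, $\partial\boldsymbol{\phi}/\partial n$, and $\langle\nabla(\nabla\boldsymbol{\phi}),\boldsymbol{n}\otimes\boldsymbol{n}\rangle$. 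By the separation-of-coefficients results collected in Section \ref{ui} (which guarantee that a distribution supported on $S$ of exactly this form vanishes if and only if each of its densities vanishes), the condition $\Curl\Curl\boldsymbol{E}=\boldsymbol{0}$ is equivalent to the four Equations \eqref{BulkCompatibilityC}, \eqref{I1C}, \eqref{I2C}, and \eqref{I3C}.

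Next I would bring in Corollary \ref{StrainCompatibilityBC}, which applies verbatim since $\boldsymbol{E}$ is precisely of the form $\boldsymbol{B}+\boldsymbol{C}$ with $\boldsymbol{B}\in\mathcal{B}(\Omega,\Sym)$ and $\boldsymbol{C}\in\mathcal{C}(\Omega,\Sym)$: it yields that $\Curl\Curl\boldsymbol{E}=\boldsymbol{0}$ is equivalent to the existence of a $\boldsymbol{U}\in\mathcal{B}(\Omega,\mathbb{R}^3)$, written $\boldsymbol{U}(\boldsymbol{\psi})=\int_\Omega\langle\boldsymbol{u},\boldsymbol{\psi}\rangle dv$ with $\boldsymbol{u}$ piecewise smooth and possibly discontinuous across $S$, such that $\boldsymbol{E}=(1/2)(\nabla\boldsymbol{U}+(\nabla\boldsymbol{U})^T)$. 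To finish the forward direction I would evaluate this right-hand side using Identity \eqref{gida} applied componentwise to $\boldsymbol{U}$: the minus sign on its surface term decomposes $(1/2)(\nabla\boldsymbol{U}+(\nabla\boldsymbol{U})^T)$ into a bulk density $(1/2)(\nabla\boldsymbol{u}+(\nabla\boldsymbol{u})^T)$ on $\Omega-S$ and a surface concentration $-(1/2)(\llbracket\boldsymbol{u}\rrbracket\otimes\boldsymbol{n}+\boldsymbol{n}\otimes\llbracket\boldsymbol{u}\rrbracket)$ on $S$. Matching these two densities against the bulk density $\boldsymbol{e}_B$ of $\boldsymbol{E}_B$ and the surface density $\boldsymbol{e}_S$ of $\boldsymbol{E}_S$ produces exactly $\boldsymbol{e}_B=(1/2)(\nabla\boldsymbol{u}+(\nabla\boldsymbol{u})^T)$ in $\Omega-S$ and $\boldsymbol{e}_S=-(1/2)(\llbracket\boldsymbol{u}\rrbracket\otimes\boldsymbol{n}+\boldsymbol{n}\otimes\llbracket\boldsymbol{u}\rrbracket)$ on $S$, which is the claimed representation.

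For the converse I would run the argument in reverse: given such a $\boldsymbol{u}$, reassemble $\boldsymbol{e}_B$ and $\boldsymbol{e}_S$ into $\boldsymbol{E}=(1/2)(\nabla\boldsymbol{U}+(\nabla\boldsymbol{U})^T)$ via the same decomposition, and then use the symmetry $\partial^2_{ij}=\partial^2_{ji}$ of Equation \eqref{diff_sym} to conclude $\Curl\Curl\boldsymbol{E}=\boldsymbol{0}$, whence Equations \eqref{BulkCompatibilityC}--\eqref{I3C} hold.

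The genuine work, and the main obstacle, lies entirely in the computation already carried out immediately before the statement, namely obtaining $\Curl\Curl\boldsymbol{E}_S$: this requires applying the surface-curl identities \eqref{CurlS} and \eqref{CurlF} in succession while carefully tracking the terms proportional to $\partial\boldsymbol{\phi}/\partial n$ and to the second normal derivative that are generated at each stage, and simplifying with the identity $\curl_S(\kappa\boldsymbol{e})=\kappa\curl_S\boldsymbol{e}-(\boldsymbol{e}\times\nabla_S\kappa)^T$. The only remaining delicate point is justifying that the three surface densities separate into independent conditions, which is exactly what the uniqueness lemmas of Section \ref{ui} supply; the matching of bulk and surface densities against a symmetric-gradient distribution is then routine.
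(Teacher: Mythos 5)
Your proposal is correct and follows essentially the same route as the paper: assemble $\boldsymbol{E}=\boldsymbol{E}_B+\boldsymbol{E}_S$ as in Equation \eqref{diststraincs}, compute $\Curl\Curl\boldsymbol{E}$ via Identities \eqref{CurlB}, \eqref{CurlS}, and \eqref{CurlF}, separate the coefficients of $\boldsymbol{\phi}$, $\partial\boldsymbol{\phi}/\partial n$, and $\langle\nabla(\nabla\boldsymbol{\phi}),\boldsymbol{n}\otimes\boldsymbol{n}\rangle$ using the lemmas of Section \ref{ui} to get Equations \eqref{BulkCompatibilityC}--\eqref{I3C}, and invoke Corollary \ref{StrainCompatibilityBC} together with Identity \eqref{gida} to identify the bulk and surface densities of $(1/2)(\nabla\boldsymbol{U}+(\nabla\boldsymbol{U})^T)$ with $\boldsymbol{e}_B$ and $-(1/2)(\llbracket\boldsymbol{u}\rrbracket\otimes\boldsymbol{n}+\boldsymbol{n}\otimes\llbracket\boldsymbol{u}\rrbracket)$. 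You also correctly locate the real work in the $\Curl\Curl\boldsymbol{E}_S$ computation, which the paper carries out explicitly just before the statement.
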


\begin{rem} (Planar strain field) As an immediate application of the preceding compatibility equations, we recall the planar strain field case, as discussed in Remark \ref{psf}, and seek the conditions on bulk strain such that there exist a displacement field $\boldsymbol{u}$ which satisfies $\boldsymbol{e}_B=(1/2)(\nabla \boldsymbol{u}+(\nabla \boldsymbol{u})^T)$ in $\Omega - S$ and $\langle \llbracket \boldsymbol{u} \rrbracket, \boldsymbol{n} \rangle = 0$ on $S$. We use the same notation as in Remark \ref{psf}. Consider $\boldsymbol{e}_S$ such that $\langle \boldsymbol{e}_S,\boldsymbol{n}\otimes \boldsymbol{n} \rangle=0$. This, along with Equation \eqref{I1C}, implies that $\boldsymbol{e}_S$ is of the form $\boldsymbol{e}_S=a (\boldsymbol{t}\otimes  \boldsymbol{n} +\boldsymbol{n}\otimes  \boldsymbol{t})$, where $a$ is a smooth scalar field on $S$. Consequently, Equation \eqref{I2C}, on recalling the plane strain assumption, reduces to $2a' +  \left\llbracket e_{ij} \right\rrbracket t_i t_j = 0$, where the superscript prime denotes a derivative along the curve $C$. Moreover, the three terms in Equation \eqref{I3C} involving $\boldsymbol{e}_S$ can be simplified to $2k'a + 4 k a'$.  We can then eliminate $a$ between Equations \eqref{I2C} and \eqref{I3C} to obtain the following condition on $\boldsymbol{e}_B$ across $C$:
\begin{equation}
\left\llbracket e_{ij} \right\rrbracket t_i t_j=\left(\frac{1}{k'}\left( \left\llbracket \frac{\partial e_{ij}}{\partial x_p} \right\rrbracket n_i t_j t_p + \left\llbracket \left(\frac{\partial e_{ij}}{\partial x_p} - \frac{\partial e_{pj}}{\partial x_i}\right) \right\rrbracket n_i t_j t_p + k\left\llbracket e_{ij} \right\rrbracket n_i n_j -2k \left\llbracket e_{ij} \right\rrbracket t_i t_j \right)\right)' \label{psfimcc}
\end{equation}
whenever $k' \neq 0$ and
\begin{equation}
\left( \left\llbracket \frac{\partial e_{ij}}{\partial x_p} \right\rrbracket n_i t_j t_p + \left\llbracket \left(\frac{\partial e_{ij}}{\partial x_p} - \frac{\partial e_{pj}}{\partial x_i}\right) \right\rrbracket n_i t_j t_p + k\left\llbracket e_{ij} \right\rrbracket n_i n_j -2k \left\llbracket e_{ij} \right\rrbracket t_i t_j \right)=0
\end{equation}
when $k' = 0$. These are the required conditions on the bulk strain field. The condition \eqref{psfimcc} has been previously obtained by Markenscoff \cite{markenscoff1996note}.
 We can also view these interfacial conditions as those required on $\boldsymbol{e}_B$ such that there exists a concentrated slip strain $\boldsymbol{e}_S$ on $S$, with $\langle \boldsymbol{e}_S,\boldsymbol{n}\otimes \boldsymbol{n} \rangle=0$, for which  $\Curl \Curl \boldsymbol{E} = \boldsymbol{0}$.
\end{rem}

\begin{rem}
\label{inhomobc}
(Heterogeneous boundary conditions for displacement)  In Remark \ref{ccdbc}, we discussed the compatibility of a bulk strain field $\boldsymbol{e}$ with homogeneous displacement boundary conditions. We will now extend that result to include heterogeneous boundary conditions $\boldsymbol{u}|_{\partial \Omega_1}= \hat{\boldsymbol{u}}$, where $\hat{\boldsymbol{u}} \in C^{\infty} (\partial \Omega_1, \mathbb{R}^3)$. For the domain $\Omega_l$, as introduced in Remark \ref{ccdbc}, we consider $\boldsymbol{E}\in \mathcal{D}'(\Omega_l,\Sym)$ such that $\boldsymbol{E} =\boldsymbol{E}_1+\boldsymbol{E}_2$, where $\boldsymbol{E}_1 \in \mathcal{B}(\Omega_l,\Sym)$ and $\boldsymbol{E}_2 \in \mathcal{C}(\Omega_l,\Sym)$. The bulk density field, used to construct $\boldsymbol{E}_1$, is taken as $\boldsymbol{e}_B = \boldsymbol{e}$ in $\Omega$ and $\boldsymbol{0}$ otherwise. The surface density field for constructing $\boldsymbol{E}_2$ is taken as $\boldsymbol{e}_S= -({1}/{2})(\hat{\boldsymbol{u}}\otimes \boldsymbol{n}+\boldsymbol{n}\otimes \hat{\boldsymbol{u}})$ on $\partial \Omega_1$. The compatibility of $\boldsymbol{e}$ with $\boldsymbol{u}|_{\partial \Omega_1}=\hat{\boldsymbol{u}}$ is then ensured by relation \eqref{BulkCompatibility} in $\Omega$ and the following boundary conditions, as deduced from Equations \eqref{I2C} and \eqref{I3C},
\begin{eqnarray}
(\curl_S \boldsymbol{e}_S\times \boldsymbol{n})^T  + \curl_S (\boldsymbol{e}_S\times \boldsymbol{n})^T + (( \boldsymbol{e} \times \boldsymbol{n})^T\times \boldsymbol{n})^T=\boldsymbol{0}~\text{on}~\partial \Omega_1~\text{and}
\\
( \curl \boldsymbol{e}  \times \boldsymbol{n} )^T+\curl_S ( \boldsymbol{e} \times \boldsymbol{n})^T + ((\boldsymbol{e}_S\times \boldsymbol{n})^T\times \nabla_S \kappa)^T \nonumber
\\  + \curl_S \curl_S \boldsymbol{e}_S - \div_S (\nabla_S \boldsymbol{n} \times (\boldsymbol{e}_S\times \boldsymbol{n})^T)=\boldsymbol{0}~\text{on}~\partial \Omega_1,
\end{eqnarray}  
where $\boldsymbol{e}_S= -({1}/{2})(\hat{\boldsymbol{u}}\otimes \boldsymbol{n}+\boldsymbol{n}\otimes \hat{\boldsymbol{u}})$ is known. The compatibility condition \eqref{I1C} is trivially satisfied for the form of $\boldsymbol{e}_S$ considered here.
In terms of the curvilinear coordinate system, as introduced in Remark \ref{curvijump},  the above interfacial conditions reduce to
\begin{eqnarray}
e_{\alpha\beta}=(1/2)(\langle\partial_\alpha \hat{\boldsymbol{u}},\boldsymbol{g}_\beta\rangle+\langle\partial_\beta \hat{\boldsymbol{u}},\boldsymbol{g}_\alpha\rangle)~\text{on}~\partial \Omega_1~\text{and}
\\
e_{\alpha 3 || \beta} +  e_{\beta 3 || \alpha} -   e_{\alpha \beta || 3} + \Gamma^3_{\alpha\beta}   e_{33}  =\langle (\partial_{\alpha \beta}\hat{\boldsymbol{u}} -\Gamma^{\sigma}_{\alpha \beta}\partial_{\sigma}\hat{\boldsymbol{u}}),\boldsymbol{n}\rangle~\text{on}~\partial \Omega_1.
\end{eqnarray}
These relations in the above form have been obtained by Ciarlet and Mardare \cite{ciarlet2014intrinsic}.
\end{rem}

\section{Topological Defects and Metric Anomalies as Sources of Incompatibility}
\label{inhomoincomp}

It is well known that the presence of defects and metric anomalies is related to incompatibility of strain field \cite{kroner81, de1981view} and consequently to being sources of internal stress field. In the following we consider dislocations, disclinations, and metric anomalies in the form of piecewise smooth bulk densities, smooth surface densities, and smooth surface densities of defect dipoles. Using the theory of distributions, we relate these defect densities to kinematical quantities given by strain and bend-twist fields thereby generalizing the expressions derived earlier by de Wit \cite{de1981view}, where the formulation was restricted to smooth bulk fields. This leads us to the main result of the paper, that is to express strain incompatibility in terms of the introduced defect densities, both on the interface and away from it.  We provide several remarks including those related to defect conservation laws, dislocation loops, plane strain simplification, and nilpotent defect densities.

\subsection{Defects as Distributions and their Relationship with Strains}

Given a piecewise smooth dislocation density tensor field $\boldsymbol{\alpha}_B$ over $\Omega-S$, possibly discontinuous across $S$ with $S$ such that $\partial S - \partial \Omega =\emptyset$, and smooth dislocation density tensor fields $\boldsymbol{\alpha}_{S_1}$ and $\boldsymbol{\alpha}_{S_2}$ on $S$, we can introduce distributions $\boldsymbol{A}_B\in \mathcal{B}(\Omega,\Lin)$, $\boldsymbol{A}_1\in \mathcal{C}(\Omega,\Lin)$, and $\boldsymbol{A}_2\in \mathcal{F}(\Omega,\Lin)$ such that, for $\boldsymbol{\phi}\in \mathcal{D}(\Omega,\Lin)$,
\begin{equation}
\boldsymbol{A}_B (\boldsymbol{\phi})= \int_\Omega \langle \boldsymbol{\alpha}_B,\boldsymbol{\phi} \rangle dv,~\boldsymbol{A}_1 (\boldsymbol{\phi})= \int_S \langle \boldsymbol{\alpha}_{S_1},\boldsymbol{\phi} \rangle da,~\text{and}~ \boldsymbol{A}_2 (\boldsymbol{\phi})=\int_S \left\langle \boldsymbol{\alpha}_{S_2},\frac{\partial \boldsymbol{\phi}}{\partial n} \right\rangle da. \label{dddef}
\end{equation}
Whereas the notions of $\boldsymbol{\alpha}_B$, as a bulk dislocation density, and $\boldsymbol{\alpha}_{S_1}$, as a surface dislocation density, are well established in the literature \cite{kroner81, bilby}, the latter being used, e.g., to represent dislocation walls, the meaning of surface density $\boldsymbol{\alpha}_{S_2}$ requires some further discussion. As we shall argue,  it represents a surface density of dislocation couples. Using the definitions \eqref{dddef} we can introduce a distribution $\boldsymbol{A} \in \mathcal{D}'(\Omega,\Lin)$ such that $\boldsymbol{A} = \boldsymbol{A}_B + \boldsymbol{A}_1 + \boldsymbol{A}_2$, i.e.,
\begin{equation}
\label{distA}
\boldsymbol{A}(\boldsymbol{\phi})=\int_\Omega \langle \boldsymbol{\alpha}_B,\boldsymbol{\phi} \rangle dv +\int_S \langle \boldsymbol{\alpha}_{S_1},\boldsymbol{\phi} \rangle da+\int_S \left\langle \boldsymbol{\alpha}_{S_2},\frac{\partial \boldsymbol{\phi}}{\partial n} \right\rangle da.
 \end{equation}
In terms of the above dislocation density fields, we can define the corresponding contortion tensors as $\boldsymbol{\gamma}_B=\boldsymbol{\alpha}_B-(1/2)(\tr \boldsymbol{\alpha}_B) \boldsymbol{I}$, $\boldsymbol{\gamma}_{S_1}=\boldsymbol{\alpha}_{S_1}-({1}/{2})(\tr \boldsymbol{\alpha}_{S_1})\boldsymbol{I}$, and $\boldsymbol{\gamma}_{S_2}=\boldsymbol{\alpha}_{S_2}-({1}/{2}) (\tr \boldsymbol{\alpha}_{S_2})\boldsymbol{I}$, so as to subsequently introduce a distribution $\boldsymbol{\Gamma} \in \mathcal{D}'(\Omega,\Lin)$ such that 
\begin{equation}
\boldsymbol{\Gamma}(\boldsymbol{\phi})=\int_\Omega \langle \boldsymbol{\gamma}_B,\boldsymbol{\phi} \rangle dv +\int_S \langle \boldsymbol{\gamma}_{S_1},\boldsymbol{\phi} \rangle da+\int_S \left\langle \boldsymbol{\gamma}_{S_2},\frac{\partial \boldsymbol{\phi}}{\partial n} \right\rangle da. \label{distgamma}
 \end{equation}

\begin{figure}[t!]
 \centering
 \centering{\includegraphics[scale=0.6, angle=0]{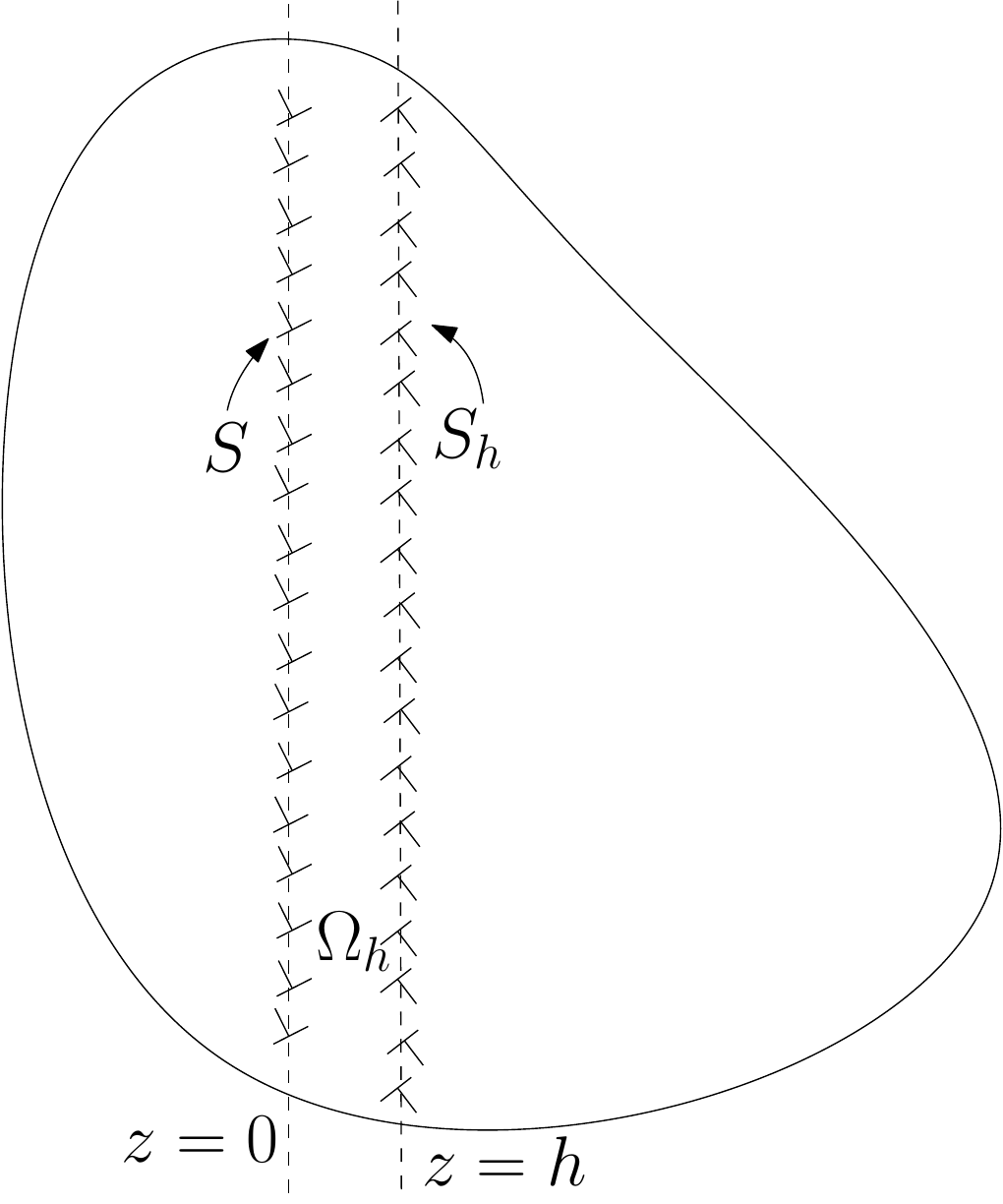}}
\caption{A pair of dislocation walls with equal and opposite charge.}
\label{dislwall}
 \end{figure}
 
To understand the significance of $\boldsymbol{A}_2$, and the associated density $\boldsymbol{\alpha}_{S_2}$, we consider two mutually parallel plane surfaces $S$, with normal $\boldsymbol{e}_3$ given by $z=0$, and $S_h$, given by $z=h$. The bulk region enclosed by the two surfaces ($0<z<h$) is denoted by $\Omega_h$. Let $\boldsymbol{A}_h \in \mathcal{D}'(\Omega,\Lin)$ be such that, for any $\boldsymbol{\phi}\in \mathcal{D}(\Omega,\Lin)$, 
\begin{equation}
\boldsymbol{A}_h (\boldsymbol{\phi}) = -\int_S \left\langle \frac{\boldsymbol{\alpha}_0}{h} , \boldsymbol{\phi} \right\rangle da + \int_{S_h} \left\langle \frac{\boldsymbol{\alpha}_0}{h} , \boldsymbol{\phi} \right\rangle da, \label{ddipoles}
\end{equation} 
where $\boldsymbol{\alpha}_0 \in \Lin$ is a constant. The two integrands represent dislocation walls, separated by a distance $h$, with uniform density of dislocations but with opposite sign. The surface densities are uniform and scale as the inverse of the distance between walls. For infinitesimal distance between the dislocation walls ($h \to 0$), $\boldsymbol{\alpha}_h (\boldsymbol{\phi}) \to \boldsymbol{A}_0 (\boldsymbol{\phi})$, with $\boldsymbol{A}_0 \in \mathcal{F}(\Omega,\Lin)$, where $\boldsymbol{A}_0 (\boldsymbol{\phi}) = \int_S \langle \boldsymbol{\alpha}_0,{\partial \boldsymbol{\phi}}/{\partial n} \rangle da$.
Therefore $\boldsymbol{A}_0 \in \mathcal{F}(\Omega,\Lin)$, with planar surface and uniform surface density, can be interpreted in terms of  two dislocation walls, infinitesimally close to each other, and with surface densities of opposite sign scaling as the inverse of the distance between the walls. A pair of dislocation walls, as discussed here, is illustrated in Figure \ref{dislwall}.

 In an analogous manner, given a piecewise smooth disclination density tensor field $\boldsymbol{\theta}_B$ over $\Omega-S$, possibly discontinuous across $S$ with $S$ such that $\partial S - \partial \Omega =\emptyset$, and smooth disclination density tensor fields $\boldsymbol{\theta}_{S_1}$ and $\boldsymbol{\theta}_{S_2}$ on $S$, we can introduce distributions $\boldsymbol{\Theta}_B\in \mathcal{B}(\Omega,\Lin)$, $\boldsymbol{\Theta}_1\in \mathcal{C}(\Omega,\Lin)$, and $\boldsymbol{\Theta}_2\in \mathcal{F}(\Omega,\Lin)$ such that, for $\boldsymbol{\phi}\in \mathcal{D}(\Omega,\Lin)$,
\begin{equation}
\boldsymbol{\Theta}_B (\boldsymbol{\phi})= \int_\Omega \langle \boldsymbol{\theta}_B,\boldsymbol{\phi} \rangle dv,~\boldsymbol{\Theta}_1 (\boldsymbol{\phi})= \int_S \langle \boldsymbol{\theta}_{S_1},\boldsymbol{\phi} \rangle da,~\text{and}~ \boldsymbol{\Theta}_2 (\boldsymbol{\phi})=\int_S \left\langle \boldsymbol{\theta}_{S_2},\frac{\partial \boldsymbol{\phi}}{\partial n} \right\rangle da. \label{diddef}
\end{equation}
Clearly, $\boldsymbol{\theta}_B$ represents a bulk disclination density field and $\boldsymbol{\theta}_{S_1}$ a density of disclinations spread over the surface $S$. Moreover, following an argument, similar to that mentioned in the preceding paragraph, we can interpret $\boldsymbol{\theta}_{S_2}$ as a surface distribution of disclination dipoles. Using the definitions \eqref{diddef} we can introduce a distribution $\boldsymbol{\Theta} \in \mathcal{D}'(\Omega,\Lin)$ 
such that $\boldsymbol{\Theta}=\boldsymbol{\Theta}_B+\boldsymbol{\Theta}_1+\boldsymbol{\Theta}_2$, i.e.,
\begin{equation}
\boldsymbol{\Theta}(\boldsymbol{\phi})=\int_\Omega \langle \boldsymbol{\theta}_B,\boldsymbol{\phi} \rangle dv +\int_S \langle \boldsymbol{\theta}_{S_1},\boldsymbol{\phi} \rangle da+\int_S \left\langle \boldsymbol{\theta}_{S_2},\frac{\partial \boldsymbol{\phi}}{\partial n} \right\rangle da.
 \end{equation} 

Besides dislocations and dislocations, we also include metric anomalies as possible sources of strain incompatibility. The metric anomalies, which can appear due to thermal strains, growth strains, extra-matter, interstitials, etc., are given by a piecewise smooth density symmetric tensor field $\boldsymbol{e}^Q_B$ over $\Omega-S$, possible discontinuous across $S$ with $S$ such that $\partial S - \partial \Omega =\emptyset$, and a smooth surface density symmetric tensor field  $\boldsymbol{e}^Q_S$ over $S$. We can introduce distributions $\boldsymbol{E}^Q_B \in \mathcal{B}(\Omega,\Sym)$ and $\boldsymbol{E}^Q_S \in \mathcal{C}(\Omega,\Sym)$ such that, for $\boldsymbol{\phi}\in \mathcal{D}(\Omega,\Lin)$,
\begin{equation}
\boldsymbol{E}^Q_B (\boldsymbol{\phi})=\int_{\Omega}\langle \boldsymbol{e}^Q_B,\boldsymbol{\phi}\rangle dv ~\text{and}~ \boldsymbol{E}^Q_S (\boldsymbol{\phi}) = \int_S\langle \boldsymbol{e}^Q_S,\boldsymbol{\phi}\rangle da.
\end{equation}
We can also introduce a distribution $\boldsymbol{E}^Q \in \mathcal{D}'(\Omega,\Sym)$ such that $\boldsymbol{E}^Q=\boldsymbol{E}^Q_B+\boldsymbol{E}^Q_S$, i.e.,  
\begin{equation}
\boldsymbol{E}^Q(\boldsymbol{\phi})=\int_{\Omega}\langle \boldsymbol{e}^Q_B,\boldsymbol{\phi}\rangle dv +\int_S\langle \boldsymbol{e}^Q_S,\boldsymbol{\phi}\rangle da. \label{distmetano}
\end{equation} 

The distributions $\boldsymbol{A}$, $\boldsymbol{\Theta}$, and $\boldsymbol{E}^Q$ contain all the prescribed information regarding various defect densities and metric anomalies over the body $\Omega$ and the surface $S$. We would, next, like to relate defect densities to kinematical fields. Towards this end, we introduce two distributions $\boldsymbol{E}\in \mathcal{B}(\Omega,\Sym)$ and $\boldsymbol{K}=\boldsymbol{K}_1+\boldsymbol{K}_2$, where $\boldsymbol{K}_1\in \mathcal{B}(\Omega,\Lin)$ and $\boldsymbol{K}_2\in \mathcal{C}(\Omega,\Lin)$, such that, for $\boldsymbol{\phi}\in \mathcal{D}(\Omega,\Lin)$,
\begin{equation}
\boldsymbol{E}(\boldsymbol{\phi})=\int_\Omega \langle \boldsymbol{e},\boldsymbol{\phi} \rangle dv ~\text{and}~\boldsymbol{K}(\boldsymbol{\phi})=\int_\Omega \langle \boldsymbol{\kappa}_B,\boldsymbol{\phi} \rangle dv +\int_S \langle \boldsymbol{\kappa}_S,\boldsymbol{\phi} \rangle da, \label{strbend}
 \end{equation}
with $S$ such that $\partial S - \partial \Omega =\emptyset$, where $\boldsymbol{e}$ is the piecewise smooth strain field over $\Omega-S$, possibly discontinuous across $S$, $\boldsymbol{\kappa}_B$ is the piecewise smooth bend-twist field over $\Omega-S$ \cite{kroner81, de1981view}, possibly discontinuous across $S$, and $\boldsymbol{\kappa}_S$ is the smooth surface bend-twist field over $S$.

Drawing an analogy from the classical framework of de Wit \cite{de1981view}, where only smooth defect densities and kinematic fields were considered, we postulate the following relationships between the above defined distributions:
 \begin{eqnarray}
 \boldsymbol{\Theta}=\Curl \boldsymbol{K}^T~\text{and} \label{distdefeq1}
\\
 \boldsymbol{A} = \Curl (\boldsymbol{E}-\boldsymbol{E}^Q) + \tr(\boldsymbol{K})\boldsymbol{I} - \boldsymbol{K}^T. \label{distdefeq2}
\end{eqnarray}   
In the absence of defects, the above equations imply (for a simply connected $\Omega$) the existence of a $\boldsymbol{U} \in \mathcal{B}(\Omega, \mathbb{R}^3)$ such that $\boldsymbol{E}= (1/2)({\nabla \boldsymbol{U} +(\nabla \boldsymbol{U})^T})$, with $\boldsymbol{U}(\boldsymbol{\psi})=\int_\Omega \langle \boldsymbol{u},\boldsymbol{\psi}\rangle dv$, for $\boldsymbol{\psi}\in \mathcal{D}(\Omega,\mathbb{R}^3)$, where $\boldsymbol{u}$ is a piecewise smooth vector field continuous across $S$. Indeed, by Equation \eqref{distdefeq1} in the absence of disclinations, $\Curl \boldsymbol{K}^T = \boldsymbol{0}$ which, by Lemma \ref{RegularityLemma}(ii), is equivalent to the existence of a $\boldsymbol{\Omega} \in \mathcal{B}(\Omega,\mathbb{R}^3)$ such that $\boldsymbol{K} = (\nabla \boldsymbol{\Omega})^T$. Consider $\boldsymbol{W} \in \mathcal{B}(\Omega,\Skw)$ such that $\boldsymbol{\Omega}$ is the axial vector of  $\boldsymbol{W}$. Subsequently, using Equation \eqref{distdefeq2} with $\boldsymbol{A} = \boldsymbol{0}$ and $\boldsymbol{E}^Q = \boldsymbol{0}$, we obtain $\Curl (\boldsymbol{E}+\boldsymbol{W}) = \boldsymbol{0}$ which, after an application of Lemma \ref{RegularityLemma}(ii), yields the desired result. This inference can be used as a motivation for introducing the relationships between defects and kinematical quantities in the form given in Equations \eqref{distdefeq1} and \eqref{distdefeq2}.
 
 The relations  \eqref{distdefeq1} and \eqref{distdefeq2} immediately lead to their local counterpart on the interface $S$ and away from it. Using Equations \eqref{distdefeq1} and \eqref{strbend}$_2$, and Identities \ref{CurlLemma}, we obtain the local relations between the disclination densities and the bend-twist fields as
 \begin{eqnarray}
 \boldsymbol{\theta}_B=\curl \boldsymbol{\kappa}_B^T~\text{in}~\Omega-S, \label{locdisc1}
\\
\boldsymbol{\theta}_{S_1}= \left(\llbracket \boldsymbol{\kappa}_B^T \rrbracket \times \boldsymbol{n}\right)^T - \kappa \left(\boldsymbol{\kappa}_S^T \times \boldsymbol{n}\right)^T +\curl_S \boldsymbol{\kappa}_S^T~\text{on}~S,~\text{and} \label{locdisc2}
\\
\boldsymbol{\theta}_{S_2} = \left(\boldsymbol{\kappa}_S^T \times \boldsymbol{n}\right)^T~\text{on}~S. \label{locdisc3}
\end{eqnarray}
Also, using Equations \eqref{distdefeq2} and \eqref{strbend}$_1$, and Identities \ref{CurlLemma}, the dislocation densities in terms of the strain, the metric anomalies, and the bend-twist fields can be obtained as
 \begin{eqnarray}
 \boldsymbol{\alpha}_B=\curl (\boldsymbol{e}- \boldsymbol{e}^Q_B) +  \tr(\boldsymbol{\kappa}_B)\boldsymbol{I} - \boldsymbol{\kappa}_B^T~\text{in}~\Omega-S, \label{locdisl1}
\\
\boldsymbol{\alpha}_{S_1}= \left(\llbracket \boldsymbol{e}- \boldsymbol{e}^Q_B \rrbracket \times \boldsymbol{n}\right)^T + \kappa \left(\boldsymbol{e}^Q_S \times \boldsymbol{n}\right)^T -\curl_S \boldsymbol{e}^Q_S + \tr(\boldsymbol{\kappa}_S)\boldsymbol{I} - \boldsymbol{\kappa}_S^T~\text{on}~S,~\text{and}  \label{locdisl2}
\\
\boldsymbol{\alpha}_{S_2} = -\left(\boldsymbol{e}^Q_S \times \boldsymbol{n}\right)^T~\text{on}~S.  \label{locdisl3}
\end{eqnarray}
Out of the above, only Equations \eqref{locdisc1} and \eqref{locdisl1} have been previously obtained by de Wit \cite{de1981view}. The rest of the relations appear to be new. It is interesting to note that, in particular, in order to support a density of surface dislocation dipoles, it is necessary to have a non-trivial density of surface metric anomalies. These relationships provide important connections between defect densities and metric anomalies within the assumed kinematical framework given in terms of strain and bend-twist fields.

\begin{rem}
 In the absence of disclinations and metric anomalies, following the arguments given after Equation \eqref{distdefeq2}, we can infer the existence of a distribution $\boldsymbol{B}\in \mathcal{B}(\Omega,\Lin)$ such that $\boldsymbol{A}=\Curl \boldsymbol{B}$. We can write $\boldsymbol{B}(\boldsymbol{\phi})=\int_\Omega \langle \boldsymbol{\beta},\boldsymbol{\phi} \rangle dv$, for $\boldsymbol{\phi}\in \mathcal{D}(\Omega,\Lin)$, where $\boldsymbol{\beta}$ is the piecewise smooth distortion field over $\Omega-S$, possible discontinuous across $S$. Consequently, we obtain
\begin{equation}
\boldsymbol{\alpha}_B= \curl \boldsymbol{\beta}~\text{and}~
\boldsymbol{\alpha}_{S_1}=(\llbracket \boldsymbol{\beta} \rrbracket \times \boldsymbol{n})^T,
\end{equation}
in addition to $\boldsymbol{\alpha}_{S_2}=\boldsymbol{0}$. The surface dislocations $\boldsymbol{\alpha}_{S_1}$ in this form was first introduced by Bilby \cite{bilby}.
\end{rem}

\begin{rem}(Conservation laws) It follows from relations \eqref{distdefeq1} and \eqref{distdefeq2} that the distributions $\boldsymbol{A}$ and $\boldsymbol{\Theta}$ satisfy
\begin{eqnarray}
\label{ConservationLawDisclination}
\Div \boldsymbol{\Theta}^T=\boldsymbol{0}~\text{and}
\\
\label{ConservationLawDislocation}
\Div \boldsymbol{A}^T + ax( \boldsymbol{\Theta}^T -\boldsymbol{\Theta})=\boldsymbol{0}.
\end{eqnarray}
According to Theorem \ref{Poincare}, for a contractible domain $\Omega$, the above conditions are necessary and sufficient conditions for the existence of distributions $\boldsymbol{K}$ and $\boldsymbol{E}$. These conservations laws can be used to derive the local conservations laws for defect densities. We use Identities \ref{DivergenceLemma} and Equation \eqref{ConservationLawDisclination} to obtain
\begin{eqnarray}
\div \boldsymbol{\theta}_B^T = \boldsymbol{0}~\text{in} ~\Omega- S, 
\\
-\llbracket \boldsymbol{\theta}_B^T \rrbracket \boldsymbol{n} + \div_S (\boldsymbol{\theta}_{S_1}^T) + \kappa\boldsymbol{\theta}_{S_1}^T \boldsymbol{n} - \div_S (\boldsymbol{\theta}_{S_2}^T \nabla_S \boldsymbol{n}) = \boldsymbol{0}~ \text{on}~S,
\\
-\boldsymbol{\theta}_{S_1}^T \boldsymbol{n} + \div_S \boldsymbol{\theta}_{S_2}^T + \kappa\boldsymbol{\theta}_{S_2}^T \boldsymbol{n}= \boldsymbol{0} ~\text{on}~S, ~\text{and}
\\
\boldsymbol{\theta}_{S_2}^T \boldsymbol{n}= \boldsymbol{0}~\text{on}~S.
\end{eqnarray}
 Similarly, we use Identities \ref{DivergenceLemma} and Equation \eqref{ConservationLawDislocation} to obtain
\begin{eqnarray}
\div \boldsymbol{\alpha}_B^T + ax( \boldsymbol{\theta}_B^T - \boldsymbol{\theta}_B) = \boldsymbol{0}~\text{in} ~\Omega- S,
\\
-\llbracket \boldsymbol{\alpha}_B^T \rrbracket \boldsymbol{n} + \div_S (\boldsymbol{\alpha}_{S_1}^T) + \kappa\boldsymbol{\alpha}_{S_1}^T \boldsymbol{n}- \div_S ( \boldsymbol{\alpha}_{S_2}^T \nabla_S \boldsymbol{n})+ ax( \boldsymbol{\theta}_{S_1}^T-\boldsymbol{\theta}_{S_1})  =  \boldsymbol{0} ~\text{on}~S,
\\
-\boldsymbol{\alpha}_{S_1}^T \boldsymbol{n}+ \div_S (\boldsymbol{\alpha}_{S_2}^T) + \kappa\boldsymbol{\alpha}_{S_2}^T \boldsymbol{n} + ax( \boldsymbol{\theta}_{S_2}^T-\boldsymbol{\theta}_{S_2})= \boldsymbol{0} ~\text{on}~S,~\text{and}
\\
\boldsymbol{\alpha}_{S_2}^T \boldsymbol{n}= \boldsymbol{0}~ \text{on}~S.
\end{eqnarray}
\end{rem}

\begin{rem}
(Dislocation loop) We consider a form of dislocation density which is concentrated on an oriented smooth curve $L \subset \Omega$.  Assume $\boldsymbol{A} \in \mathcal{H}(\Omega,\Lin)$ such that, for $\boldsymbol{\phi}\in \mathcal{D}(\Omega,\Lin)$, we can write $\boldsymbol{A}(\boldsymbol{\phi})=\int_L \langle \boldsymbol{\alpha}_L,\boldsymbol{\phi} \rangle da$, where $\boldsymbol{\alpha}_L$ is a smooth field on $L$. Using Identity \ref{DivergenceLemma}(d), the local form of Equation \eqref{ConservationLawDislocation}, in the absence of disclinations, yields
\begin{eqnarray}
\label{dislocationloopC1}
\boldsymbol{\alpha}_L^T (\boldsymbol{I}-\boldsymbol{t}\otimes \boldsymbol{t})= \boldsymbol{0}~ \text{on}~L,
\\
\label{dislocationloopC2}
\frac{\partial}{\partial t} (\boldsymbol{\alpha}_L^T \boldsymbol{t}) = \boldsymbol{0}~\text{on}~L,
\\
\label{dislocationloopC3}
\boldsymbol{\alpha}_L^T \boldsymbol{t} =\boldsymbol{0}~\text{on}~\partial L - \partial \Omega.
\end{eqnarray} 
According to Equation \eqref{dislocationloopC1}, $\boldsymbol{\alpha}_L$ has to necessarily satisfy $\boldsymbol{\alpha}_L=\boldsymbol{t} \otimes (\boldsymbol{\alpha}_L^T \boldsymbol{t}) $, while Equation \eqref{dislocationloopC2} implies that $\boldsymbol{\alpha}_L^T \boldsymbol{t}$ is uniform along $L$. As a result, for a non-trivial dislocation density, we can infer from Equation \eqref{dislocationloopC3} that $\partial L - \partial \Omega=\emptyset$, i.e., the curve $L$ has to be either a loop or its end points should lie on the boundary of the domain. The constant vector $\boldsymbol{\alpha}_L^T \boldsymbol{t}$ should be identified with the Burgers vector associated with the dislocation loop. 
In a related work, Van Goethem \cite{van2017incompatibility}  has considered dislocation loops as tensor valued Radon measures concentrated on a closed loop and established that there exists a non square integrable strain field, absolutely continuous with respect to the volume measure, which satisfies the incompatibility condition induced by the dislocation loop. 
\end{rem}

\begin{rem} (Wall of dislocation dipoles) 
We consider a distribution $\boldsymbol{A}_h$ as introduced in Equation \eqref{ddipoles} but with $\boldsymbol{\alpha}_0$ not necessarily uniform, i.e., $\div_S (\boldsymbol{\alpha}_0^T)\neq \boldsymbol{0}$. We assume the domain to be free of disclinations and metric anomalies, as well as of dislocations in the bulk outside of the two surfaces in $\Omega -\Omega_h$. In order for the local conservation laws to be satisfied we require $\boldsymbol{\alpha}_0^T \boldsymbol{n}=\boldsymbol{0}$ in addition to a non-trivial bulk dislocation density $\hat{\boldsymbol{\alpha}}_0/h$ supported in $\Omega_h$ with the associated distribution  $\hat{\boldsymbol{A}}_h (\boldsymbol{\phi}) = \int_{\Omega_h} \langle {\hat{\boldsymbol{\alpha}}_0}/{h}, \boldsymbol{\phi} \rangle da$,
for $\boldsymbol{\phi}\in \mathcal{D}(\Omega,\Lin)$, such that the conservation law yields $- \hat{\boldsymbol{\alpha}}_0^T  \boldsymbol{n} + \div_S \boldsymbol{\alpha}_0^T =\boldsymbol{0}$. The enclosed bulk $\Omega_h$ can therefore be thought of having dislocation curves with tangents along the normal of $S$. We note that these dislocation lines remain contained inside the band and do not pierce out of either $S$ or $S_h$. For infinitesimal distance between the walls ($h\to 0$), $\boldsymbol{A}_h$ converges to a distribution corresponding to a dislocation dipole wall, as remarked earlier, and $\hat{\boldsymbol{A}}_h$ to a distribution $\hat{\boldsymbol{A}} \in \mathcal{C}(\Omega,\Lin)$ corresponding to a dislocation wall, i.e., 
$\hat{\boldsymbol{A}} (\boldsymbol{\phi}) = \int_S \langle \hat{\boldsymbol{\alpha}}_0, \boldsymbol{\phi} \rangle da$.
The derived dislocation wall has a surface density $\hat{\boldsymbol{\alpha}}_0$ such that $\hat{\boldsymbol{\alpha}}_0^T \boldsymbol{n} \neq \boldsymbol{0}$. This is in contrast with a dislocation wall which does not coincide with a dislocation dipole wall. In the latter case, considering a dislocation wall with surface density $\boldsymbol{\alpha}_S$, we necessarily require $\boldsymbol{\alpha}_S^T\boldsymbol{n}=\boldsymbol{0}$. 
\end{rem}

\subsection{Strain Incompatibility}
The bulk strain field $\boldsymbol{e}$ is compatible if and only if $\Curl \Curl \boldsymbol{E} = \boldsymbol{0}$, where 
$\boldsymbol{E}\in \mathcal{B}(\Omega,\Sym)$ is as given in Equation \eqref{strbend}$_1$. In the presence of defects and metric anomalies, the strain field is no longer compatible. We define a distribution $\boldsymbol{N} \in \mathcal{D}'(\Omega,\Sym)$ by $\boldsymbol{N}=\Curl \Curl \boldsymbol{E}$. Therefore, for $\boldsymbol{\phi}\in \mathcal{D}(\Omega,\Lin)$, 
\begin{equation}
\boldsymbol{N}(\boldsymbol{\phi})=\int_\Omega \langle \boldsymbol{\eta}_B,\boldsymbol{\phi} \rangle dv +\int_S \langle \boldsymbol{\eta}_{S_1},\boldsymbol{\phi} \rangle da+\int_S \langle \boldsymbol{\eta}_{S_2},\frac{\partial \boldsymbol{\phi}}{\partial n} \rangle da,~\text{where}
 \end{equation} 
\begin{eqnarray}
\boldsymbol{\eta}_B=\curl \curl \boldsymbol{e},
\\
\boldsymbol{\eta}_{S_1}=-\kappa \left((\llbracket \boldsymbol{e} \rrbracket\times \boldsymbol{n}\right)^T\times \boldsymbol{n})^T +\curl_S (\llbracket \boldsymbol{e} \rrbracket\times \boldsymbol{n})^T+ (\llbracket \curl \boldsymbol{e} \rrbracket \times \boldsymbol{n} )^T,~\text{and}
\\
\boldsymbol{\eta}_{S_2}=\left((\llbracket \boldsymbol{e} \rrbracket\times \boldsymbol{n})^T\times \boldsymbol{n}\right)^T
\end{eqnarray}
are incompatibility fields in the bulk, away from the interface, and on the interface. The bulk field can be identified as Kr\"oner's incompatibility tensor. We now relate these incompatibility fields to various defect and metric anomaly fields. Taking a trace of Equation \eqref{distdefeq2} and noting that $\tr(\Curl (\boldsymbol{E}-\boldsymbol{E}^Q))=0$, we obtain $\tr(\boldsymbol{A})=2\tr(\boldsymbol{K})$. Substituting this result back into Equation \eqref{distdefeq2}, and rearranging it, yields
\begin{equation}
\Curl \boldsymbol{E}= \Curl \boldsymbol{E}^Q + \boldsymbol{A} - \frac{1}{2} \tr(\boldsymbol{A}) \boldsymbol{I} +\boldsymbol{K}^T.
\end{equation}
Take another $\Curl$, and subsequently use $\boldsymbol{N} = \Curl \Curl \boldsymbol{E}$,  $\boldsymbol{\Gamma} = \boldsymbol{A} - ({1}/{2}) \tr \boldsymbol{A}$ (recall Equation \eqref{distgamma}), and Equation \eqref{distdefeq1} to obtain
\begin{equation}
\label{DefectIncompatibility}
\boldsymbol{N}=\Curl \boldsymbol{\Gamma} + \boldsymbol{\Theta} + \Curl \Curl \boldsymbol{E}^Q,
\end{equation}
The Identities  \ref{CurlLemma} can now be used to obtain the required relationships between strain incompatibilities $ \boldsymbol{\eta}_B$, $\boldsymbol{\eta}_{S_1}$, and $\boldsymbol{\eta}_{S_2}$, which are expressed in terms of strain, its derivatives, and jumps, and densities of defects and metric anomalies. We derive
 \begin{eqnarray}
 \boldsymbol{\eta}_B=\curl \boldsymbol{\gamma}_B + \boldsymbol{\theta}_B + \boldsymbol{\eta}^Q_B~\text{in}~\Omega-S, \label{etab}
\\
\label{eta1}
\boldsymbol{\eta}_{S_1}= \left(\llbracket \boldsymbol{\gamma}_B \rrbracket \times \boldsymbol{n}\right)^T - \kappa (\boldsymbol{\gamma}_{S_1} \times \boldsymbol{n})^T +\curl_S \boldsymbol{\gamma}_{S_1}  -\div_S(\nabla_S \boldsymbol{n}\times \boldsymbol{\gamma}_{S_2})+ \boldsymbol{\theta}_{S_1} + \boldsymbol{\eta}^Q_{S_1}~\text{on}~S,
\\
\label{eta2}
\boldsymbol{\eta}_{S_2} = (\boldsymbol{\gamma}_{S_1} \times \boldsymbol{n})^T - \kappa (\boldsymbol{\gamma}_{S_2} \times \boldsymbol{n})^T +\curl_S \boldsymbol{\gamma}_{S_2}+ \boldsymbol{\theta}_{S_2}+\boldsymbol{\eta}^Q_{S_2}~\text{on}~S,~\text{and}
\\
\label{eta3}
\boldsymbol{0} = (\boldsymbol{\gamma}_{S_2} \times \boldsymbol{n})^T + \boldsymbol{\eta}^Q_{S_3},
\end{eqnarray}
where $\boldsymbol{\eta}^Q_B=\curl \curl \boldsymbol{e}^Q_B$,
\begin{equation}
\begin{split}
\boldsymbol{\eta}^Q_{S_1}=  \left(\llbracket \curl \boldsymbol{e}^Q_B \rrbracket \times \boldsymbol{n} \right)^T - \kappa \left((\llbracket \boldsymbol{e}^Q_B \rrbracket\times \boldsymbol{n})^T\times \boldsymbol{n}\right)^T +\curl_S \left(\llbracket \boldsymbol{e}^Q_B \rrbracket\times \boldsymbol{n}\right)^T+ \kappa^2 \left((\boldsymbol{e}^Q_S \times \boldsymbol{n})^T \times \boldsymbol{n} \right)^T \\-\kappa \left(\curl_S \boldsymbol{e}^Q_S \times \boldsymbol{n}\right)^T  - \curl_S \left(\kappa (\boldsymbol{e}^Q_S \times \boldsymbol{n})^T\right)+ \curl_S \curl_S \boldsymbol{e}^Q_S - \div_S \left(\nabla_S \boldsymbol{n} \times (\boldsymbol{e}^Q_S \times \boldsymbol{n})^T\right),
\end{split}
\end{equation} 
\begin{equation}
\boldsymbol{\eta}^Q_{S_2}= -2\kappa \left((\boldsymbol{e}^Q_S \times \boldsymbol{n})^T \times \boldsymbol{n}\right)^T  +\left(\curl_S \boldsymbol{e}^Q_S \times \boldsymbol{n}\right)^T  + \curl_S \left(\boldsymbol{e}^Q_S \times \boldsymbol{n}\right)^T + \left((\llbracket \boldsymbol{e}^Q_B \rrbracket\times \boldsymbol{n})^T\times \boldsymbol{n}\right)^T,
\end{equation} 
and $\boldsymbol{\eta}^Q_{S_3}= \left( (\boldsymbol{e}^Q_S \times \boldsymbol{n})^T \times \boldsymbol{n}\right)^T$.
The Equations \eqref{etab}-\eqref{eta2} are the strain incompatibility equations where the left hand sides are given in terms of the strain field and the right hand sides are given in terms of the defect and the metric anomaly fields. Equation \eqref{eta3}, on the other hand, should be seen as a restriction on the nature of surface densities of dislocation dipole and metric anomaly.

\begin{rem} (Surface $S$ such that $\partial S - \partial \Omega \neq \emptyset$)
We consider a dislocation density which is concentrated on surface $S$ which has a non-trivial boundary in the interior of the body, i.e.,  $\partial S - \partial \Omega \neq \emptyset$. Accordingly, we consider a distribution $\boldsymbol{A} \in \mathcal{C}(\Omega,\Lin)$ such that, for $\boldsymbol{\phi}\in \mathcal{D}(\Omega,\Lin)$, $\boldsymbol{A}(\boldsymbol{\phi})=\int_S \langle \boldsymbol{\alpha}_S,\boldsymbol{\phi} \rangle da$. The related contortion tensor is $\boldsymbol{\gamma}_S=\boldsymbol{\alpha}_S-({1}/{2})\tr(\boldsymbol{\alpha}_S)\boldsymbol{I}$. In the absence of other defect densities and metric anomalies, the strain incompatibility relations yield $\boldsymbol{\eta}_B = \boldsymbol{0}$ in $\Omega - S$, 
 \begin{eqnarray}
\boldsymbol{\eta}_{S_1}=  \kappa (\boldsymbol{\gamma}_S \times \boldsymbol{n})^T +\curl_S \boldsymbol{\gamma}_S ~\text{on}~  S, \text{and}
\\
\boldsymbol{\eta}_{S_2} = (\boldsymbol{\gamma}_S \times \boldsymbol{n})^T ~\text{on}~  S.
\end{eqnarray}
In addition, the dislocation density must satisfy $(\boldsymbol{\gamma}_S \times \boldsymbol{\nu})^T = \boldsymbol{0}$ on $\partial S- \partial \Omega$,
where $\boldsymbol{\nu}$ is the in plane normal to $\partial S - \partial \Omega$. On the other hand, the conservation laws for dislocation density can be derived using Identity \ref{DivergenceLemma}(b) and Equation \eqref{ConservationLawDislocation} to get $\div_S \boldsymbol{\alpha}_S^T  = \boldsymbol{0}$ and $\boldsymbol{\alpha}_S^T \boldsymbol{n} =\boldsymbol{0}$ on $S$, and $\boldsymbol{\alpha}_S^T \boldsymbol{\nu}= \boldsymbol{0}$ on $\partial S- \partial \Omega$.
\end{rem}

\begin{rem} (Plane strain incompatibility conditions without metric anomalies) Assume that distributions $\boldsymbol{E}$ and $\boldsymbol{K}$ satisfy $\boldsymbol{E}\boldsymbol{e}_3 =\boldsymbol{0}$, ${\partial \boldsymbol{E}}/{\partial x_3}=\boldsymbol{0}$, and $\boldsymbol{K}=\boldsymbol{K}^P \otimes \boldsymbol{e}_3$, where $\boldsymbol{K}^P \in \mathcal{D}'(\Omega,\mathbb{R}^3)$, $\langle \boldsymbol{K}^P,\boldsymbol{e}_3 \rangle=0$, and ${\partial \boldsymbol{K}^P}/{\partial x_3}=\boldsymbol{0}$. The plane section orthogonal to $\boldsymbol{e_3}$ is denoted as $P \subset \mathbb{R}^2$. The interface $S$ is completely characterised by the planar curve $C_P=S\cap P$. Let the unit tangent to $C_P$ be $\boldsymbol{t}$. The unit normal to $C_p$ coincides with the normal $\boldsymbol{n}$ to $S$.
Under the above assumptions on $\boldsymbol{E}$ and $\boldsymbol{K}$, the distribution $\boldsymbol{A}$ corresponding to the dislocation density is necessarily of the form $\boldsymbol{A} = (\boldsymbol{A}^P \otimes \boldsymbol{e}_3)^T$, where $\boldsymbol{A}^P \in \mathcal{D}'(\Omega,\mathbb{R}^3)$ such that $\langle \boldsymbol{A}^P , \boldsymbol{e}_3 \rangle=0$ and ${\partial \boldsymbol{A}^P}/{\partial x_3}=\boldsymbol{0}$. The condition $\langle \boldsymbol{A}^P,\boldsymbol{e}_3 \rangle=0$ essentially means that only edge dislocations are admissible in the considered situation. Furthermore, the distribution $\boldsymbol{\Theta}$ corresponding to disclination density is necessarily of the form $\boldsymbol{\Theta} = {\Theta}^P \boldsymbol{e}_3 \otimes \boldsymbol{e}_3$, where ${\Theta}^P \in \mathcal{D}'(\Omega)$ and ${\partial \Theta^P}/{\partial x_3}=0$. Interestingly, for the above form of $\boldsymbol{A}$ and $\boldsymbol{\Theta}$, the conservation laws \eqref{ConservationLawDisclination} and \eqref{ConservationLawDislocation} are identically satisfied. Moreover, since $\tr \boldsymbol{A}=0$, the distribution corresponding to contortion field $\boldsymbol{\Gamma}=\boldsymbol{A}$. The incompatibility conditions, in terms of distributions, are therefore reduced to $\boldsymbol{N} = \Curl \boldsymbol{A} + \boldsymbol{\Theta}$, which for the assumed forms of  $\boldsymbol{A}$ and $\boldsymbol{\Theta}$ requires  $\boldsymbol{N}$ to be of the form $\boldsymbol{N}=N^P \boldsymbol{e}_3 \otimes \boldsymbol{e}_3$, where ${N}^P \in \mathcal{D}'(\Omega)$.
Considering dislocation and disclination densities with a bulk part and a concentration on the interface (no dipoles), the strain incompatibility relations can be written as (with obvious notation)
\begin{eqnarray}
\eta^P_B=\langle \curl \boldsymbol{\alpha}^P_B, \boldsymbol{e}_3 \rangle + \theta^P_B
~\text{in}~P-C_P, \label{planedef1}
\\
\eta^P_{S_1}=\langle \llbracket (\boldsymbol{\alpha}^P_B) \rrbracket, \boldsymbol{t}\rangle + \frac{\partial}{\partial t} \langle \boldsymbol{\alpha}^P_S,\boldsymbol{n} \rangle  +\theta^P_S ~\text{on}~ C_P,~\text{and} \label{planedef2}
\\
\eta^P_{S_2}=\langle \boldsymbol{\alpha}^P_S,\boldsymbol{t} \rangle ~\text{on}~C_P. \label{planedef3}
\end{eqnarray}
\label{planedefects}
\end{rem}

\begin{rem} (Plane strain incompatibility conditions with only interfacial metric anomalies)
We consider $\boldsymbol{E}^Q$ such that $\boldsymbol{E}^Q \boldsymbol{e}_3=\boldsymbol{0}$ and ${\partial \boldsymbol{E}^Q}/{\partial x_3}=\boldsymbol{0}$. We restrict ourselves to the case when metric anomalies are concentrated only on the surface $S$, i.e., for $\boldsymbol{\phi}\in \mathcal{D}(\Omega,\Lin)$, $\boldsymbol{E}^Q(\boldsymbol{\phi})=\int_S\langle \boldsymbol{e}^Q_S,\boldsymbol{\phi}\rangle da$. The assumed form of $\boldsymbol{E}^Q$ implies that we can express  $\boldsymbol{e}^Q_S$ as $\boldsymbol{e}^Q_S=a_1(\boldsymbol{t}\otimes \boldsymbol{t})+a_2(\boldsymbol{t}\otimes \boldsymbol{n}+\boldsymbol{n}\otimes \boldsymbol{t})+a_3(\boldsymbol{n}\otimes \boldsymbol{n})$, where $a_1$, $a_2$, and $a_3$ depend only the parameter $t$ on $C_P$. As in the preceding remark, $\boldsymbol{N}=N^P \boldsymbol{e}_3 \otimes \boldsymbol{e}_3$, where ${N}^P \in \mathcal{D}'(\Omega)$. The condition $( (\boldsymbol{e}^Q_S \times \boldsymbol{n})^T \times \boldsymbol{n})^T = \boldsymbol{0}$ implies that $a_1=0$. The nontrivial strain compatibility equations in the present case are 
\begin{eqnarray}
\eta^P_{S_1}=a_3''+2(ka_2)' ~\text{on} ~C_p~\text{and} \label{planemet1}
\\
\eta^P_{S_2}=2a_2'-ka_3 ~\text{on} ~C_p,  \label{planemet2}
\end{eqnarray}
where the superposed prime denotes the derivative with respect to $t$.
\label{planemetric}
\end{rem}

\subsection{Nilpotent Defect Densities}
It is clear from the strain incompatibility relations \eqref{etab}-\eqref{eta2} that it is possible to have non-trivial defect and metric anomaly densities such that they would not contribute to incompatibility, i.e., when the right hand sides of these relations are identically zero. Such defect densities, termed nilpotent, exist without acting as a source for internal stresses in the body. 
In the absence of metric anomalies, the distributions associated with nilpotent dislocations and disclinations will satisfy
\begin{equation}
\Curl \boldsymbol{\Gamma} + \boldsymbol{\Theta}=\boldsymbol{0}.
\end{equation} 
When dislocations are also absent then there can be no nontrivial nilpotent disclination density. On the other hand, when disclinations are absent then nilpotent dislocation densities satisfy $\Curl \boldsymbol{\Gamma} =\boldsymbol{0}$ which, by Theorem \ref{PoincareOneForm}, implies that $\boldsymbol{\Gamma}$ must be expressible as a gradient of a vector valued distribution. If we consider only a surface density of dislocations, i.e., $\boldsymbol{\alpha}_{S_1}$, and neglect others, then the  nilpotent dislocation density represents a grain boundary $S$ where $\curl_S \boldsymbol{\gamma}_{S_1}=\boldsymbol{0}$ and $\boldsymbol{\gamma}_{S_1} \times \boldsymbol{n}=\boldsymbol{0}$.

Nilpotent dislocations in the case of plane deformation, as discussed in Remark \ref{planedefects}, and without disclinations correspond to $\Curl \boldsymbol{A}^P = \boldsymbol{0}$. Theorem \ref{PoincareOneForm} then implies that there exists a scalar valued distribution $R \in \mathcal{D}'(\Omega)$ such that $\boldsymbol{A}^P = \nabla R$. If we consider only a bulk and a surface dislocation density (and ignore surface dipoles) then this form of $\boldsymbol{A}^P$ implies that $R$ is a piecewise smooth function discontinuous across the curve $C_P$; the field $R$ can be interpreted as the orientation of the lattice at each point. The condition \eqref{planedef3} with $\eta^P_{S_2}=0$ implies that $\boldsymbol{\alpha}^P_S$ at each point on the curve $C_P$ is along the normal to $C_P$, i.e., $\boldsymbol{\alpha}^P_S = |\boldsymbol{\alpha}^P_S|\boldsymbol{n}$. Here, $|\boldsymbol{\alpha}^P_S|$ is the jump in $R$ across $C_P$ or, in other words, the misorientation across the interface. On the other hand, the condition \eqref{planedef2}, with $\eta^P_{S_2}=0$ and no disclinations, reduces to
\begin{equation}
\langle \llbracket \boldsymbol{\alpha}^P_B \rrbracket, \boldsymbol{t} \rangle + \frac{\partial}{\partial t}\langle \boldsymbol{\alpha}^P_S,\boldsymbol{n} \rangle=\boldsymbol{0}.
\end{equation}
The above equation implies that, whenever the bulk dislocation density is continuous across $C_P$, $|\boldsymbol{\alpha}^P_S|$ is constant along $C_P$. We then have a grain boundary with constant misorientation at each point of the boundary. A grain boundary with variable misorientation along the boundary can exist only if we have a non-trivial jump in the bulk dislocation density across the boundary.

Finally, we assume all the defect densities to be absent and consider only a surface density of metric anomalies over $S$, i.e., we take only $\boldsymbol{e}^Q_S$ to be non-zero. We investigate the implications of requiring such a metric anomaly field to be nilpotent. The distribution $\boldsymbol{E}^Q_S$, defined in \eqref{distmetano}, with only $\boldsymbol{e}^Q_S$ present has to satisfy $\Curl \Curl \boldsymbol{E}^Q_S = \boldsymbol{0}$. One consequence of this relation is $((\boldsymbol{e}^Q_S \times \boldsymbol{n})^T \times \boldsymbol{n})^T=\boldsymbol{0}$ which implies that $\boldsymbol{e}^Q_S= ({1}/{2})(\boldsymbol{g}\otimes \boldsymbol{n}+\boldsymbol{n}\otimes \boldsymbol{g})$, where $\boldsymbol{g} \in C^{\infty}(S,\mathbb{R}^3)$. The nilpotence of $\boldsymbol{E}^Q$ is then equivalent to the existence of $\boldsymbol{U} \in \mathcal{B}(\Omega,\mathbb{R}^3)$ with a piecewise smooth bulk density $\boldsymbol{u}$ whose jump at $S$ is equal to $-\boldsymbol{g}$ and which satisfies $(1/2) (\nabla \boldsymbol{u} + (\nabla \boldsymbol{u})^T) = \boldsymbol{0}$ in $\Omega-S$. Alternatively, we can consider $\boldsymbol{u}$ to be non-trivial only in a domain $\Omega^+$, on one side of $S$, and zero in rest of the domain. On the boundary of $\Omega^+$ which coincides with $S$, $\boldsymbol{u} = \boldsymbol{g}$. Therefore if we consider a domain $\Omega^+$, with $S$ as the boundary where a displacement boundary condition is specified as $\boldsymbol{u}=\boldsymbol{g}$, the nilpotence of $\boldsymbol{E}^Q$ is equivalent to whether the displacement boundary condition in consistent with the rotation and translation of domain $\Omega^+$.

For the planar case, as discussed in Remark \ref{planemetric}, if we additionally assume that the quasi plastic strain is a result of only a slip across the boundary, i.e., $a_3 = \langle \boldsymbol{e}^Q_S, \boldsymbol{n}\otimes \boldsymbol{n}\rangle=0$. It then follows immediately from Equations  \eqref{planemet1} and \eqref{planemet2} that a non-trivial $\boldsymbol{E}^Q$, with only surface density, can be nilpotent only if $k'=0$, i.e., when the curve $C_P$ is linear or circular and if the slip is uniform, i.e.,  $a_2 = \langle \boldsymbol{e}^Q_S, \boldsymbol{t}\otimes \boldsymbol{n}\rangle$ is constant along $C_P$. For a linear interface this corresponds to translation of $\Omega^+$, with $\Omega^-$ fixed, and for a circular interface this corresponds to a rotation of $\Omega^+$, with $\Omega^-$ fixed. For an interface with non-uniform curvature, a quasi plastic strain with non-trivial slip can not be nilpotent; the non uniformity of curvature will always act as a source of strain incompatibility. 

\section{Conclusion}
\label{conc}
We have used the theory of distributions to discuss the problems of both strain compatibility and strain incompatibility, the latter arising as a result of inhomogeneities in the form of defects and metric anomalies. The main focus of our work has been to develop a framework which incorporates strain and inhomogeneity fields less regular than previously discussed in the literature. In particular, we have allowed the bulk fields to be piecewise smooth, possibly discontinuous over a singular interface, and also for smooth fields concentrated on the interface. Our work is amenable for also including concentrations over curves and points. The overall framework can be possibly extended to further relax the regularity of various fields. Our work, it seems, can be directly related to the theory of currents \cite{de2012differentiable}, which can provide a natural setting for problems in mechanics with less regularity. Some preliminary attempts in using theory of currents to model singular defects in solids can be found in the recent work of Epstein and Segev \cite{epstein2014geometric}. One lacuna that we find in our work is to provide physical interpretations to the distributions that we have constructed out of strains and inhomogeneity fields. Such interpretations would lead us to apply the framework to more sophisticated problems, for instance those afforded by nonlinear strain fields. One possible way towards this end would be to understand the distributions, in their own right, within an appropriate differential geometric setup.

\appendix
\appendixpage

\section{Proof of Identities in Section \ref{ui}}
\label{appid}

\subsection{Proof of Identities \ref{GradientLemma}}

(a) For $B \in \mathcal{B}(\Omega)$ and $\boldsymbol{\psi} \in \mathcal{D}(\Omega,\mathbb{R}^3)$, 
$\nabla B (\boldsymbol{\psi})= - B(\div \boldsymbol{\psi}) =\int_\Omega \langle \nabla b , \boldsymbol{\psi}\rangle dx - \int_S \langle \llbracket b \rrbracket \boldsymbol{n}, \boldsymbol{\psi} \rangle da$.

\noindent (b) For $C \in \mathcal{C}(\Omega)$ and $\boldsymbol{\psi} \in \mathcal{D}(\Omega,\mathbb{R}^3)$, let $\overline{c} \in C^\infty(\Omega)$ be a smooth extension of ${c} \in C^\infty(S)$
so as to write $\nabla C (\boldsymbol{\psi})= - C(\div \boldsymbol{\psi}) = -\int_S c (\div \boldsymbol{\psi}) da  = -\int_S (\div (\overline{c}\boldsymbol{\psi}) - \langle \nabla \overline{c}, \boldsymbol{\psi} \rangle) da$. Subsequently, use $\div (\overline{c}\boldsymbol{\psi}) = \div_S (c\boldsymbol{\psi}) + \langle \nabla(\overline{c}\boldsymbol{\psi})\boldsymbol{n},\boldsymbol{n} \rangle$, $\langle \nabla \overline{c}, \boldsymbol{\psi} \rangle = \langle \nabla_S c,\boldsymbol{\psi} \rangle + \langle \nabla \overline{c} , \boldsymbol{n} \rangle \langle \boldsymbol{\psi},\boldsymbol{n} \rangle$ on $S$, and the divergence theorem to get the desired result.

\noindent (c) For $F \in \mathcal{F}(\Omega)$ and $\boldsymbol{\psi} \in \mathcal{D}(\Omega,\mathbb{R}^3)$,
$\nabla F (\boldsymbol{\psi})= - F(\div \boldsymbol{\psi}) = -\int_S f {\partial (\div \boldsymbol{\psi})}/{\partial n} da$. But  $ {\partial (\div \boldsymbol{\psi})}/{\partial n} = \langle \nabla(\div \boldsymbol{\psi}) , \boldsymbol{n}\rangle =\langle\div_S (\nabla \boldsymbol{\psi})^T,\boldsymbol{n}\rangle + \langle (\nabla(\nabla \boldsymbol{\psi}) ) \boldsymbol{n}\otimes \boldsymbol{n}, \boldsymbol{n}\rangle$, on one hand, and 
$\langle\div_S \left( (\nabla \boldsymbol{\psi})^T \right),\boldsymbol{n}\rangle =\div_S ({\partial \boldsymbol{\psi}}/{\partial n}) -  \langle \nabla_S \boldsymbol{n}, \nabla \boldsymbol{\psi} \rangle$, on the other. Upon substitution, and using the chain rule for derivatives, we can obtain $\nabla F (\boldsymbol{\psi}) =$
\begin{equation}
 -\int_S  \left(\div_S \left(f \frac{\partial \boldsymbol{\psi}}{\partial n} \right)-\left\langle \nabla_S f, \frac{\partial \boldsymbol{\psi}}{\partial n} \right\rangle- \div_S(f(\nabla_S \boldsymbol{n}) \boldsymbol{\psi})+ \langle \div_S (f\nabla_S \boldsymbol{n}),  \boldsymbol{\psi} \rangle + \langle(\nabla(\nabla \boldsymbol{\psi})) \boldsymbol{n}\otimes \boldsymbol{n}, \boldsymbol{n}\rangle \right) da, \nonumber
\end{equation}
which immediately yields the result.

\noindent (d) For $H \in \mathcal{H}(\Omega)$ and $\boldsymbol{\psi} \in \mathcal{D}(\Omega,\mathbb{R}^3)$, we have $\nabla H (\boldsymbol{\psi})= - H(\div \boldsymbol{\psi}) = -\int_L h (\div \boldsymbol{\psi}) dl = -\int_L (h \langle \nabla \boldsymbol{\psi},(\boldsymbol{I}-\boldsymbol{t}\otimes \boldsymbol{t})\rangle + \langle h \boldsymbol{t}, {\partial \boldsymbol{\psi}}/{\partial t}\rangle) dl$, leading to the desired identity.

\subsection{Proof of Identities \ref{DivergenceLemma}}

(a) For $\boldsymbol{B} \in \mathcal{B}(\Omega,\mathbb{R}^3)$ and ${\psi} \in \mathcal{D}(\Omega)$, $\Div \boldsymbol{B} ({\psi})=-\boldsymbol{B}(\nabla {\psi}) =  -\int_\Omega \langle \boldsymbol{b},\nabla {\psi} \rangle dv$, which on using the divergence theorem yields the result.

\noindent (b) For $\boldsymbol{C} \in \mathcal{C}(\Omega,\mathbb{R}^3)$ and ${\psi} \in \mathcal{D}(\Omega)$,
$\Div \boldsymbol{C} ({\psi})=-\boldsymbol{C}(\nabla {\psi})=- \int_S \langle \boldsymbol{c}, \nabla {\psi} \rangle da =-\int_S \div_S (\boldsymbol{c}  {\psi}) da + \int_S (\div_S \boldsymbol{c} ){\psi}  da - \int_S \left\langle \boldsymbol{c}, \boldsymbol{n} \right\rangle ({\partial {\psi}}/{\partial {n}}) da$. The desired identity follows upon using the divergence theorem.

\noindent (c) For $\boldsymbol{F} \in \mathcal{F}(\Omega,\mathbb{R}^3)$ and ${\psi} \in \mathcal{D}(\Omega)$,
$\Div \boldsymbol{F}({\psi}) = -\boldsymbol{F}(\nabla {\psi}) = -\int_S  \langle \boldsymbol{f},\nabla(\nabla {\psi}) \boldsymbol{n}  \rangle da$.
Using $\nabla(\nabla {\psi}) \boldsymbol{n} = (\boldsymbol{I}-\boldsymbol{n}\otimes \boldsymbol{n})(\nabla(\nabla {\psi}) \boldsymbol{n}) +(\boldsymbol{n}\otimes \boldsymbol{n})(\nabla(\nabla {\psi}) \boldsymbol{n})$
and $(\boldsymbol{I}-\boldsymbol{n}\otimes \boldsymbol{n})(\nabla(\nabla {\psi}) \boldsymbol{n})  = \nabla_S ({\partial \Psi}/{\partial n}) -   \nabla_S \boldsymbol{n} \nabla {\psi}$ we get
\begin{equation}
\Div \boldsymbol{F}({\psi}) = -\int_S  \left\langle \boldsymbol{f},\left(\nabla_S \left(\frac{\partial \psi}{\partial n} \right) -   \nabla_S \boldsymbol{n}\nabla {\psi}\right)  \right\rangle da - \int_S \langle \boldsymbol{f},\boldsymbol{n} \rangle \langle \nabla(\nabla {\psi}), \boldsymbol{n} \otimes \boldsymbol{n} \rangle da, \nonumber
\end{equation}
which after some manipulation produces the required identity. 

\noindent (d) For $\boldsymbol{H} \in \mathcal{H}(\Omega,\mathbb{R}^3)$ and ${\psi} \in \mathcal{D}(\Omega)$, we have
$\Div \boldsymbol{H} ({\psi})= -\boldsymbol{H}(\nabla {\psi})=-\int_L \langle \boldsymbol{h},\nabla {\psi} \rangle dl  = - \int_L \langle \boldsymbol{h},(\boldsymbol{I}-\boldsymbol{t} \otimes \boldsymbol{t}) \nabla{\psi} \rangle dl -\int_L \langle \boldsymbol{h},({\partial {\psi}}/{\partial t})\boldsymbol{t} \rangle dl$. The final identity is immediate.

\subsection{Proof of Identities \ref{CurlLemma}}
(a) For $\boldsymbol{B} \in \mathcal{B}(\Omega,\mathbb{R}^3)$ and $\boldsymbol{\phi} \in \mathcal{D}(\Omega,\mathbb{R}^3)$,
$\Curl \boldsymbol{B} (\boldsymbol{\phi})= \boldsymbol{B} (\curl \boldsymbol{\phi})= \int_{\Omega} \langle \boldsymbol{b},\curl \boldsymbol{\phi}  \rangle dv = \int_{\Omega} (\div(\boldsymbol{\phi}\times \boldsymbol{b})+\langle \curl \boldsymbol{b},\boldsymbol{\phi} \rangle) dv$. The result follows after using the divergence theorem.

\noindent (b) For $\boldsymbol{C} \in \mathcal{C}(\Omega,\mathbb{R}^3)$ and $\boldsymbol{\phi} \in \mathcal{D}(\Omega,\mathbb{R}^3)$, we have $\Curl \boldsymbol{C} (\boldsymbol{\phi})= \boldsymbol{C} (\curl \boldsymbol{\phi})= \int_{S} \langle \boldsymbol{c},\curl \boldsymbol{\phi}  \rangle da = \int_{S} \langle \boldsymbol{c},\curl_S \boldsymbol{\phi}-({\partial \boldsymbol{\phi}}/{\partial n}) \times \boldsymbol{n}  \rangle da$.
Recall the identity $\div_S (\boldsymbol{u}\times \boldsymbol{v})=\langle \curl_S \boldsymbol{u},\boldsymbol{v}\rangle-\langle \boldsymbol{u},\curl_S \boldsymbol{v} \rangle$, for $\boldsymbol{u},\boldsymbol{v} \in C^{\infty}(S,\mathbb{R}^3)$, to get 
\begin{equation}
\Curl \boldsymbol{C} (\boldsymbol{\phi})=  \int_{S} \div_S (\boldsymbol{\phi}\times \boldsymbol{c}) da +\int_S \langle \boldsymbol{\phi},\curl_S \boldsymbol{c}\rangle da-\int_S \left\langle \boldsymbol{c} ,\frac{\partial \boldsymbol{\phi}}{\partial n} \times \boldsymbol{n}  \right\rangle da, \nonumber
\end{equation}
which immediately lead to the pertinent identity.

\noindent (c) For $\boldsymbol{F} \in \mathcal{F}(\Omega,\mathbb{R}^3)$ and $\boldsymbol{\phi} \in \mathcal{D}(\Omega,\mathbb{R}^3)$, $\Curl \boldsymbol{F} (\boldsymbol{\phi}) = \boldsymbol{F}(\curl \boldsymbol{\phi} ) = \int_S  \langle \boldsymbol{f},{\partial (\curl \boldsymbol{\phi})}/{\partial n}  \rangle da$. Use the skew part of the identity 
$\nabla_S ({\partial \boldsymbol{\phi}}/{\partial n})=\nabla(\nabla \boldsymbol{\phi})\boldsymbol{n}-(\nabla(\nabla \boldsymbol{\phi})\boldsymbol{n}\otimes\boldsymbol{n})\otimes\boldsymbol{n} +\nabla \boldsymbol{\phi}\nabla_S \boldsymbol{n}$ to obtain 
$\curl_S({\partial \boldsymbol{\phi}}/{\partial n})={\partial (\curl \boldsymbol{\phi})}/{\partial n}+(\nabla(\nabla \boldsymbol{\phi})\boldsymbol{n}\otimes\boldsymbol{n})\times\boldsymbol{n} + ax(\nabla \boldsymbol{\phi}\nabla_S \boldsymbol{n}-(\nabla \boldsymbol{\phi}\nabla_S \boldsymbol{n})^T)$.
Furthermore, we note that
\begin{equation}
\int_S \left\langle \boldsymbol{f}, \curl_S \left(\frac{\partial \boldsymbol{\phi}}{\partial n}\right) \right\rangle da= \int_S   \left\langle -\kappa \left(\boldsymbol{f}\times \boldsymbol{n}\right) +\curl_S\left(\boldsymbol{f}\right),\frac{\partial \boldsymbol{\phi}}{\partial n} \right\rangle  da\\+\int_{\partial S - \partial \Omega} \left\langle\left( \boldsymbol{f}\times \boldsymbol{\nu}\right), \frac{\partial \boldsymbol{\phi}}{\partial n} \right\rangle dl, \nonumber
\end{equation}
\begin{equation}
\int_S \langle \boldsymbol{f}, \left(\nabla(\nabla \boldsymbol{\phi})\boldsymbol{n}\otimes\boldsymbol{n}\right)\times\boldsymbol{n} \rangle da=-\int_S \langle \boldsymbol{f}\times\boldsymbol{n}, \left(\nabla(\nabla \boldsymbol{\phi})\boldsymbol{n}\otimes\boldsymbol{n}\right) \rangle da, \nonumber
\end{equation}
and $\langle \boldsymbol{f} ,ax(\nabla \boldsymbol{\phi}\nabla_S \boldsymbol{n}-(\nabla \boldsymbol{\phi}\nabla_S \boldsymbol{n})^T)  \rangle = \langle \boldsymbol{\tilde{f}},\nabla \boldsymbol{\phi}\nabla_S \boldsymbol{n} \rangle=  -\langle (\nabla_S \boldsymbol{n} \times \boldsymbol{f})^T,\nabla_S   \boldsymbol{\phi} \rangle = \langle \div_S (\nabla_S \boldsymbol{n} \times \boldsymbol{f})^T,\boldsymbol{\phi} \rangle - \div_S ((\nabla_S \boldsymbol{n} \times \boldsymbol{f})   \boldsymbol{\phi})$, where $\boldsymbol{\tilde{f}}$ is the skew symmetric tensor whose axial vector is $\boldsymbol{f}$. Consequently, $\int_S \langle \boldsymbol{f} ,ax(\nabla \boldsymbol{\phi}\nabla_S \boldsymbol{n}-(\nabla \boldsymbol{\phi}\nabla_S \boldsymbol{n})^T)  \rangle da =$
\begin{equation}
 \int_S \left\langle \div_S \left(\nabla_S \boldsymbol{n} \times \boldsymbol{f}\right)^T,\boldsymbol{\phi} \right\rangle da -\int_{\partial S - \partial \Omega} \left\langle (\nabla_S \boldsymbol{n} \times \boldsymbol{f})\boldsymbol{\phi},\boldsymbol{\nu} \right\rangle dl +\int_{S} \kappa \left\langle (\nabla_S \boldsymbol{n}\times \boldsymbol{f})\boldsymbol{\phi},\boldsymbol{n} \right\rangle da. \nonumber
\end{equation}
The desired identity follows after combining the above results. 

\noindent(d) For $\boldsymbol{H} \in \mathcal{H}(\Omega,\mathbb{R}^3)$ and $\boldsymbol{\phi} \in \mathcal{D}(\Omega,\mathbb{R}^3)$,
$\Curl \boldsymbol{H} (\boldsymbol{\phi})=\boldsymbol{H}(\curl \boldsymbol{\phi})= \int_L \langle \boldsymbol{h} , \curl \boldsymbol{\phi})  \rangle dl= \int_L \langle \boldsymbol{h} , \curl_{t} \boldsymbol{\phi}  \rangle dl -\int_L \langle \boldsymbol{h} , ({\partial \boldsymbol{\phi}}/{\partial t} \times \boldsymbol{t})  \rangle dl$. The required result is imminent.

\section{A Lemma for Theorem \ref{Poincare}}
\label{poinapp}

A distribution $T\in \mathcal{D}'(\Omega)$ is said to be of order $m$ if, for any compact set $K \subset \Omega$, there exists a finite $M \in \mathbb{R}$ such that, for any smooth function $\phi$ supported in $K$, 
$|T(\phi)| \leq M \Sigma_{|\alpha| \leq m} |\text{sup} (\partial^\alpha \phi)|$, where $\partial^\alpha$ denotes the $\alpha$ order derivative of $\phi$. In particular,
$T$ is of order $0$ if 
$|T(\phi)| \leq M  |\text{sup} (\phi)|$. 
\begin{lemma}
\label{CohomologoustoSmoothForm}
For a $\boldsymbol{T} \in \mathcal{D}'(\Omega,\mathbb{R}^3)$, which satisfies $\Div \boldsymbol{T}=0$, there exists $\boldsymbol{u} \in C^{\infty} (\Omega, \mathbb{R}^3)$ and  $\boldsymbol{S} \in \mathcal{D}'(\Omega,\mathbb{R}^3)$ such that 
\begin{equation}
 \boldsymbol{T_u} - \boldsymbol{T} = \Curl \boldsymbol{S},
\end{equation}
where $\boldsymbol{T_u} \in \mathcal{D}'(\Omega,\mathbb{R}^3)$ is given by $\boldsymbol{T_u} (\boldsymbol{\phi})=\int_\Omega \langle \boldsymbol{u}, \boldsymbol{\phi} \rangle dv$ for all $\boldsymbol{\phi} \in \mathcal{D}(\Omega,\mathbb{R}^3)$.
\begin{proof}
Consider a map $H^y : [0,1]\times \mathbb{R}^3 \to \mathbb{R}^3$ given by
$H^y (t,x)=x + t \psi (x) y$,
where $\psi$ is a smooth scalar field over $\mathbb{R}^3$ such that $\psi(x) = 0$ for ${x} \notin \Omega$ but $0<\psi \leq 1$, $|\nabla \psi | \leq 1$ whenever $x \in \Omega$, and ${y} \in \mathbb{R}^3$ is such that $|{y}| <1$.
It can be shown that, for any $t \in [0,1]$, $H^y : [0,1]\times \Omega \to \Omega$. For $\boldsymbol{\phi} \in \mathcal{D}(\Omega,\mathbb{R}^3)$, we introduce
\begin{equation}
\boldsymbol{S}^y (\boldsymbol{\phi}) = \int_{0}^{1} \langle \boldsymbol{T} , (\boldsymbol{\phi} (H^y(t,\boldsymbol{x}) \times y)\psi (x)\rangle dt.
\end{equation}
To check that $\boldsymbol{S}^y \in \mathcal{D}'(\Omega,\mathbb{R}^3)$ it is sufficient to note that ${S}^y_i$ defines a linear functional on $\mathcal{D}(\Omega)$ and that a sequence of smooth functions $\phi_m$ converging to 0 implies the convergence of $(\boldsymbol{\phi} (H^y(t,x)) \times y)_i\psi (x)$, and consequently of ${S}^y_i(\phi_m)$, to $0$. Moreover, for $\boldsymbol{\phi} \in \mathcal{D}(\Omega,\mathbb{R}^3)$, $\Curl \boldsymbol{S}^y (\boldsymbol{\phi}) = \boldsymbol{S}^y (\curl \boldsymbol{\phi})=$
\begin{equation}
 \int_{0}^{1} \left\langle {T}_i, \left(\frac{\partial}{\partial t} \left( \phi_i(H^y (t,x))+ \phi_j (H^y (t,x)) y_j t \frac{\partial \psi}{\partial x_i} \right) - \frac{\partial}{\partial x_i} (\phi_j (H^y (t,x) y_j \psi) \right)  \right\rangle dt, \nonumber
\end{equation}
which, on using $\Div \boldsymbol{T}=0$ and $H^y (0,x)=x$, yields
\begin{equation}
\Curl \boldsymbol{S}^y (\boldsymbol{\phi}) = \left\langle {T}_i,  \left(\phi_i({x}+\psi({x}) y)+ \phi_j({x}+\psi({x}) y) y_j  \frac{\partial \psi}{\partial x_i}\right)\right\rangle - \boldsymbol{T}(\boldsymbol{\phi}).
\end{equation}
Let $\rho \in C^{\infty}( \mathbb{R}^3 )$ be a smooth function supported over a ball of unit radius, centred at the origin, such that it depends only on $|\boldsymbol{x}|$ and satisfies $\int_{\mathbb{R}^3} \rho({x}) dv =1$. Given $\epsilon >0$, the function $\rho_\epsilon = \epsilon^{-3} \rho(x/{\epsilon})$ is supported in a ball of radius $\epsilon$ such that $\int_{\mathbb{R}^3} \rho_\epsilon({x}) dv =1$. For $\boldsymbol{S} \in \mathcal{D}'(\Omega,\mathbb{R}^3)$, defined as $\boldsymbol{S}=\int_{B(0,\epsilon)} \boldsymbol{S}^y \rho_\epsilon(y) dv_y$, where $B(0,\epsilon)$ is a ball of radius $\epsilon$ centred at the origin,  $\Curl \boldsymbol{S}(\boldsymbol{\phi})=\int_{B(0,\epsilon)} \Curl \boldsymbol{S}^y (\boldsymbol{\phi})\rho_\epsilon(y) dv_y  =$
 \begin{equation}
\int_{B(0,\epsilon)} \left( \langle {T}_i,  (\phi_i({x}+\psi({x}) y)+ \phi_j({x}+\psi({x}) y) y_j  \frac{\partial \psi}{\partial x_i})\rangle \rho_\epsilon (y) \right) dv_y - \boldsymbol{T}(\boldsymbol{\phi}). \nonumber
 \end{equation}
 We can henceforth write $\Curl \boldsymbol{S}=\boldsymbol{T}_1-\boldsymbol{T}$, where $ \boldsymbol{T}_1 (\boldsymbol{\phi})= \boldsymbol{T} (\boldsymbol{\phi}^\epsilon)$, 
 \begin{equation}
 \phi^\epsilon_i(x) =  \int  \rho_\epsilon \left(\frac{z-x}{\psi(x)}\right) \frac{\phi_i(z)}{\psi(x)} dv_z + \int  \left(\rho_\epsilon \left(\frac{z-x}{\psi(x)}\right) \frac{z_j - x_j}{\psi (x)}\frac{\partial \psi}{\partial x_i}\right) \frac{\phi_j(z)}{\psi(x)} dv_z, \nonumber
 \end{equation}
 and $z = x+\psi(x) y$.
 Since $\rho_\epsilon $ is smooth, its derivatives remain bounded and the supremum norm of $\boldsymbol{\phi}^\epsilon$ and all the partial derivatives of $\boldsymbol{\phi}^\epsilon$ are controlled by the supremum norm of $|\boldsymbol{\phi}|$. Therefore, there exist a $\boldsymbol{u} \in C^{\infty} (\Omega, \mathbb{R}^3)$ such that $\boldsymbol{T}_1=\boldsymbol{T_u}$ leading us to our assertion.
\end{proof}
\end{lemma}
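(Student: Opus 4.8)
The plan is to realize $\boldsymbol{T}$ as cohomologous to a smooth field by combining a chain-homotopy operator with mollification, following the currents-theoretic argument of Demailly referenced in the text. First I would fix a smooth cutoff $\psi$ on $\mathbb{R}^3$ with $\psi = 0$ outside $\Omega$, $0 < \psi \le 1$ and $|\nabla\psi| \le 1$ on $\Omega$, and for each $y$ with $|y| < 1$ introduce the translation homotopy $H^y(t,x) = x + t\psi(x)y$, which one verifies maps $[0,1]\times\Omega$ into $\Omega$. The homotopy operator is then the vector-valued functional $\boldsymbol{S}^y$ defined on test fields by
\begin{equation}
\boldsymbol{S}^y(\boldsymbol{\phi}) = \int_0^1 \langle \boldsymbol{T}, (\boldsymbol{\phi}(H^y(t,\cdot))\times y)\,\psi\rangle\,dt,
\end{equation}
the pairing being taken in the $x$-variable. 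Its continuity as an element of $\mathcal{D}'(\Omega,\mathbb{R}^3)$ follows because $\boldsymbol{\phi}_m \to 0$ in $\mathcal{D}$ forces the integrand, and all of its derivatives, to converge uniformly to zero.

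The heart of the argument is a homotopy identity for $\Curl\boldsymbol{S}^y$. Using $\Curl\boldsymbol{S}^y(\boldsymbol{\phi}) = \boldsymbol{S}^y(\curl\boldsymbol{\phi})$ together with the chain rule for $H^y$, I would rewrite the pulled-back curl so that the cross product with $y$ is reorganized into a total $t$-derivative plus a term of the form $\partial_i(\,\cdot\,)$. The decisive point is that this second term is annihilated by $\boldsymbol{T}$ precisely because $\Div\boldsymbol{T} = 0$, i.e.\ $T_i(\partial_i g) = -(\Div\boldsymbol{T})(g) = 0$. Applying the fundamental theorem of calculus to the remaining $t$-derivative term and evaluating at $t = 0$ (where $H^y(0,\cdot)$ is the identity) and $t = 1$ yields
\begin{equation}
\Curl\boldsymbol{S}^y(\boldsymbol{\phi}) = \langle T_i,\, \phi_i(\cdot + \psi y) + \phi_j(\cdot + \psi y)\,y_j\,\partial_i\psi\rangle - \boldsymbol{T}(\boldsymbol{\phi}).
\end{equation}

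To eliminate the residual translation by $y$ and gain smoothness, I would average against a mollifier. Taking $\rho_\epsilon(y) = \epsilon^{-3}\rho(y/\epsilon)$ supported in $B(0,\epsilon)$ with unit mass, I set $\boldsymbol{S} = \int_{B(0,\epsilon)} \boldsymbol{S}^y\,\rho_\epsilon(y)\,dv_y$, so that $\Curl\boldsymbol{S} = \boldsymbol{T}_1 - \boldsymbol{T}$, where $\boldsymbol{T}_1(\boldsymbol{\phi}) = \boldsymbol{T}(\boldsymbol{\phi}^\epsilon)$. The change of variables $z = x + \psi(x)y$ converts the $y$-average into an integral of $\boldsymbol{\phi}(z)$ against the kernel $\rho_\epsilon((z-x)/\psi(x))\,\psi(x)^{-3}$ (plus the analogous contribution carrying $\partial_i\psi$), which is smooth in $x$ since $\psi > 0$ on $\Omega$ and $\rho$ is smooth; this defines $\boldsymbol{\phi}^\epsilon$.

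I expect the main obstacle to be the final regularity claim, namely that $\boldsymbol{T}_1$ is represented by a genuinely smooth field $\boldsymbol{u}$, so that $\boldsymbol{T}_1 = \boldsymbol{T_u}$. The key is that the smoothing map $\boldsymbol{\phi}\mapsto\boldsymbol{\phi}^\epsilon$ has an integral kernel $K(x,z)$ that is smooth in both of its arguments, so that every derivative in the output variable may be passed through the pairing with $\boldsymbol{T}$; concretely, the supremum norms of $\boldsymbol{\phi}^\epsilon$ and of all its derivatives are dominated by $\sup|\boldsymbol{\phi}|$, which forces $\boldsymbol{T}_1$ to be a distribution of order zero and exhibits its density $u_j(z) = \langle \boldsymbol{T}, K_j(\cdot,z)\rangle$ as a smooth function of $z$. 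Assembling these steps gives $\boldsymbol{T_u} - \boldsymbol{T} = \Curl\boldsymbol{S}$, as asserted.
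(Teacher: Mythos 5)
Your proposal follows essentially the same route as the paper's own proof: the same cutoff $\psi$ and homotopy $H^y(t,x)=x+t\psi(x)y$, the same homotopy operator $\boldsymbol{S}^y$ whose curl is computed via the divergence-free condition, and the same mollification over $y$ to produce the smooth representative $\boldsymbol{u}$. The argument is correct, and your added remarks on the smoothness of the kernel $K(x,z)$ in the final regularity step only make explicit what the paper leaves implicit.
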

 
\bibliography{incomp}
\bibliographystyle{plain}

\end{document}